\documentclass[11pt,a4paper]{article}
\usepackage[margin=1in]{geometry}

\usepackage{amsmath}
\usepackage{autobreak}
\allowdisplaybreaks[4]
\usepackage{amsfonts}
\usepackage{amsthm}
\usepackage{booktabs}
\usepackage{xcolor}
\usepackage{url}
\usepackage{appendix}
\usepackage{overpic}
\usepackage{float}
\usepackage[colorlinks=true, allcolors=blue]{hyperref}
\usepackage{listings}
\usepackage{indentfirst}
\usepackage{diagbox}
\usepackage[ruled, linesnumbered]{algorithm2e}
\usepackage[numbers,sort]{natbib}
\usepackage{url}
\usepackage{hvlogos}

\usepackage{graphicx}
\usepackage{caption}
\usepackage{subcaption}
\usepackage{tikz}
\tikzset{
    node/.style={circle,draw=black},
    edge/.style={draw=black}
}

\DeclareMathOperator*{\argmax}{arg\,max}

\newcommand{\opt}{\textsf{OPT}}
\newcommand{\alg}{\textsf{ALG}}

\renewcommand{\Pr}{\operatorname*{\mathbf{Pr}}}

\usepackage[capitalise]{cleveref}
\newtheorem{mytheorems}{Theorem}
\newtheorem{thm}[mytheorems]{Theorem}
\newtheorem{lem}[mytheorems]{Lemma}
\newtheorem{prop}[mytheorems]{Proposition}

\newtheorem{inv}{Invariant}

\geometry{a4paper}
\geometry{bottom=3.2cm}

\title{Edge Arrival Online Matching: The Power of Free Disposal on Acyclic Graphs}

\author{Tianle Jiang \thanks{Shanghai Jiao Tong University. {\texttt{\{jiangtianle, zhang\_yuhao\}@sjtu.edu.cn}}} \and
Yuhao Zhang\footnotemark[1]}

\date{}

\begin{document}

\maketitle

\begin{abstract}
Online matching is a fundamental problem in the study of online algorithms. We study the problem under a very general arrival model: the edge arrival model. Free disposal is an important notion in the online matching literature, which allows the algorithm to dispose of the previously matched edges. Without free disposal, we cannot achieve any bounded ratio, even with randomized algorithms, when edges are weighted. 

Our paper focuses on clarifying the power of free disposal in both the unweighted and the weighted setting. As far as we know, it's still uncertain if free disposal can give us extra leverage to enhance the competitive ratio in the unweighted scenario, even in specific instances such as Growing Trees, where every new edge adds a new leaf to the graph. Our study serves as a valuable initial exploration of this open question. The results are listed as follows:

1. With free disposal, we improve the competitive ratio for unweighted online matching on Growing Trees from $5/9$ to $2/3 \approx 0.66$, and show that the ratio is tight. For Forests, a more general setting where the underlying graph is a forest and edges may arrive in arbitrary order, we improve the competitive ratio from $5/9$ to $5/8 = 0.625$. 

2. Both the ratios of $2/3$ and $0.625$ show a separation to the upper bound of the competitive ratio without free disposal on Growing Trees ($0.5914$). Therefore, we demonstrate the additional power of free disposal for the unweighted setting for the first time, at least in the special setting of Growing Trees and Forests. 

3. We improve the competitive ratio for weighted online matching on Growing Trees from $1/3$ to $1/2$ using a very simple ordinal algorithm, and show that it is optimal among ordinal algorithms.

\end{abstract}

\section{Introduction}

\noindent Online matching stands as a central problem in the field of online algorithms because of its clean definition and broad applications. These problems come in various forms with respect to the arrival models, edge weights, and objectives. In the seminal work of Karp, Vazirani, and Vazirani \cite{KVV}, they propose the Online Bipartite Matching Problem, where one side of the graph is known upfront, and on the other side, vertices arrive one by one with the information of their adjacent edges. The designed online algorithm needs to decide the matching of the arriving vertex immediately and irrevocably, with the objective of maximizing the total cardinality of the matching. We term this arrival model "one-sided vertex arrival" and the objective "unweighted". Regarding the arrival mode, we can further study it under a more generalized setting, such as the "general vertex arrival" \cite{Gamlath19, WangWong}, where all vertices in the graph become online, and the more general "edge arrival", where we receive edges one by one. On the other hand, considering the objective and the edge weight, we will have the special vertex-weighted variant \cite{vertexweight, vertexweight2}, the edge-weighted variant, and the AdWords variant \cite{Adwords2, Adwords3, Adwords}. 


This paper focuses on the online matching problem under the edge arrival model and discusses both the unweighted and the weighted settings. For simplicity, we call the unweighted problem MCM (Max Cardinality Matching) and the weighted version MWM (Max Weight Matching). An optimal deterministic online algorithm for edge arrival MCM is Greedy, which is $0.5$-competitive. Moreover, \citet{Gamlath19} showed that $0.5$ is the best competitive ratio we can achieve, even for randomized and fractional algorithms. 

If we associate edges with weight, the problem becomes harder. Without making further assumptions, we cannot achieve any bounded competitive ratio in edge arrival MWM, even when the graph is a star (it means that the hardness holds for almost all arrival models). To this end, the notion of "free disposal" (a.k.a., preemption) is introduced. In this setting, the algorithm is allowed to discard any previously matched edge, while a discarded edge cannot be re-matched later. \citet{Feigenbaum05} first studied the edge arrival MWM problem with free disposal and gave a $1/6$-competitive deterministic algorithm. Later, the ratio was improved to $3-2\sqrt 2 \approx 0.1716$ by \cite{McGregor05}, and was proved to be optimal among all deterministic algorithms by \citet{Varadaraja11}.  \citet{Epstein5.356} presented a $0.1867$-competitive randomized algorithm for edge arrival MWM. 

Move back to the unweighted setting, can "free disposal" also be useful for MCM? The hard instance for the edge arrival setting by \citet{Gamlath19} does not hold when the algorithm can use the power of free disposal. The current best upper bound with disposal is $0.585$ by \citet{fully19}, and the current best competitive ratio is only $0.5$ by Greedy. The power of free disposal is still unknown, and whether there is an algorithm with a competitive ratio better than $0.5$ in the edge arrival setting with free disposal remains open. 

In conclusion, there are two main open questions in this line to clarify the power of free disposal in edge arrival online matching. 
\begin{enumerate}
    \item Can we beat $0.5$ with free disposal in unweighted edge arrival online matching?
    \item What is the best competitive ratio for weighted edge arrival online matching?
\end{enumerate}

As a crucial starting point for answering the two open questions above, we focus on a special arrival model called Growing Trees introduced by \citet{Chiplunkar15/28}. In this setting, the input graph is a tree at every time point. In other words, each arriving edge "grows" a leaf to the tree. The algorithmic challenge is quite clear: "Should we benefit the new vertex by hurting an old one?" Although more challenges will be introduced when the arrival model becomes more general, it is worth clarifying the best competitive ratio we can achieve under this special setting, because it has been used to obtain some hardness results. \citet{Buchbinder5/9} constructed a growing tree as hard instance and showed a $\frac{2}{3+1/\phi^2}\approx 0.5914$ upper bound for edge arrival MCM, which also applies to vertex arrival MCM. Moreover, in edge arrival MWM, the $3 - 2\sqrt{2}$ tight hard instance for deterministic algorithms \cite{Varadaraja11} is also a growing tree. Therefore, we believe that algorithmic improvements in this specialized model may offer valuable insights for the more general models.

Unfortunately, even for Growing Trees, the power of free disposal remains unclear in the literature. In the unweighted setting, \citet{Chiplunkar15/28} presented a $15/28\approx 0.535$-competitive randomized algorithm for Growing Trees. \citet{Tirodkar1/3} generalized the Growing Trees model to Forests (allowing an arriving edge to merge two trees), analyzed the same algorithm as \cite{Chiplunkar15/28} and obtained a competitive ratio of $33/64 \approx 0.515$. Subsequently, \citet{Buchbinder5/9} presented a $5/9\approx 0.555$-competitive algorithm for both settings that do not require the ability of free disposal, and no known algorithm with free disposal beats the $0.555$ competitive ratio without free disposal. Therefore, it remains open whether there is a separation between the competitive ratio in the setting of "with free disposal" and the upper bound in the setting of "without free disposal". 

In the weighted setting, the best randomized algorithm for Growing Trees MWM is $1/3$-competitive \cite{Tirodkar1/3}, and there is still a huge gap between the lower and upper bound. In particular, we do not have a stronger upper bound than the unweighted setting.

\subsection{Our Results}

\noindent In this paper, we significantly improve the competitive ratio of MCM with free disposal on Growing Trees to $2/3 \approx 0.66$ and show that it is optimal among all randomized algorithms (even fractional).
Remark that it beats the upper bound of $0.5914$ in the setting without free disposal. Therefore, it shows a separation between the setting "with free disposal" and "without disposal" on Growing Trees. We also generalize our algorithm to the Forests setting and achieve a competitive ratio of $5/8 = 0.625 > 0.5914$, which also shows the strict power of "free disposal" on Forests. 

\begin{thm}
\label{thm:grwoingtreealg}
    There is a $2/3$-competitive randomized algorithm for MCM on Growing Trees.
\end{thm}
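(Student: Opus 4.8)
The plan is to design a randomized algorithm that maintains, at every time step, a fractional matching supported on the current tree, and to argue competitiveness against the (integral) offline optimum by a charging/potential argument on each vertex. The key structural feature of Growing Trees is that a new edge $e_t = (u_t, v_t)$ attaches a fresh leaf $v_t$ to an existing vertex $u_t$; thus the only decision is how much ``matching mass'' to pull away from the edges currently incident to $u_t$ in order to fund the new edge. I would have the algorithm reserve for each new edge a carefully chosen fraction $x_t$ of mass, taken proportionally from the mass $u_t$ currently spends on its other incident edges (free disposal is exactly what lets us decrease those values). The randomization comes in when we round: independently for each vertex, or via a suitable correlated rounding along root-to-leaf paths, we realize the fractional matching as a distribution over integral matchings so that $\Pr[e \text{ matched}]$ equals its fractional value, while the matching stays integral.

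Concretely, the steps I would carry out are: (i) State the fractional algorithm precisely — on arrival of $e_t$, set $x_{e_t}$ to some fixed constant (I expect something like $1/3$ or a value solving a small LP/recurrence), and uniformly scale down the $x$-values on the other edges at $u_t$ so that the fractional degree constraint $\sum_{e \ni u_t} x_e \le 1$ is preserved; note that $v_t$ is a leaf so its constraint is automatically fine. (ii) Prove the invariant that the vector $(x_e)$ is always a feasible fractional matching on the current tree, and moreover satisfies a stronger ``stability'' bound — e.g. every vertex that has ever had an incident edge carries total mass at least some constant $c$. (iii) Show the fractional objective is at least $\tfrac{2}{3}\,|\opt|$ by charging: for each edge $e^* = (a,b)$ in a fixed offline optimum, argue that $x_{e^*}$ plus the mass that ``leaked'' from $e^*$ to later edges at $a$ or $b$ accounts for $\ge 2/3$ of one unit; a leaf-to-root induction on the tree handles the bookkeeping since mass only ever flows from older edges to newer ones. (iv) Exhibit a rounding scheme (dependent rounding on the tree, processing vertices so that each edge is matched with probability exactly $x_e$ and no vertex is matched twice) to convert the fractional guarantee into a genuine randomized algorithm with the same competitive ratio.

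The main obstacle I anticipate is step (iii): getting the constant to come out to exactly $2/3$. The difficulty is that an offline-matched edge $e^*$ can lose mass repeatedly as many new leaves sprout at its endpoints, and one must show these losses are geometrically controlled and that the ``lost'' mass is not itself wasted but re-materializes on edges that the charging argument can also use. I expect this forces the reserve fraction $x_{e_t}$ to be chosen as the fixed point of a recursion balancing ``how much the new edge keeps'' against ``how much it steals from its parent,'' and verifying that this fixed point yields the $2/3$ bound — rather than something weaker — is where the real work lies. A secondary, more technical obstacle is ensuring the rounding in step (iv) is genuinely implementable online: since the tree only grows, I would round lazily, committing a vertex's match only when forced, or equivalently maintain the distribution over matchings incrementally; care is needed so that disposing of an edge (decreasing its $x_e$) corresponds to a valid update of the distribution.
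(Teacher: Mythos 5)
Your high-level plan --- design a deterministic fractional algorithm and then round it online losslessly --- coincides with the paper's framework, and your intuition in step (iv) (for each newly attached leaf edge, whether to match it can be decided by conditioning on whether the parent is currently matched, so an online rounding preserving $\Pr[e \text{ matched}] = x_e$ is possible on Growing Trees) is essentially the content of the paper's rounding lemma. However, the fractional algorithm you sketch in step (i) is genuinely different from the paper's, and it cannot reach $2/3$ for \emph{any} choice of the constant $\gamma$.

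Here is the obstruction. On the caterpillar with arrival order $(u_1,v_1),(u_1,u_2),(u_2,v_2),(u_2,u_3),\dots$, your rule ``give $\gamma$ to the new edge, then uniformly rescale the parent's other incident edges'' erodes every spine edge \emph{twice}: once when the next leaf arrives at $u_{i+1}$, and once when the next spine edge arrives. After both rescalings, $(u_i,u_{i+1})$ sits at $(1-\gamma)^2$ and $(u_{i+1},v_{i+1})$ sits at $\gamma(1-\gamma)$, so each round of two edges contributes $(1-\gamma)^2 + \gamma(1-\gamma) = 1-\gamma$ to the fractional objective while $\opt$ gains $1$, giving asymptotic ratio $1-\gamma$. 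Meanwhile a single isolated edge forces ratio at most $\gamma$. The best this family achieves is $\min(\gamma,\,1-\gamma) \le 1/2$, strictly below $2/3$. So your step (iii) is not merely hard bookkeeping: the algorithm you propose is bounded away from the target ratio.

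The missing idea is \emph{targeted, capped disposal}. The paper declares an edge stable once its fraction drops to $1/3$, and a stable edge is never decreased again. Concretely, when $(u,v)$ arrives the algorithm checks whether $u$ has an incident edge $e'$ with $x(e')\ge 2/3$ (there is at most one such edge); if so it resets that single edge to exactly $1/3$; then it matches $1-x(u)$ to the new edge (so the new edge can receive anything from $0$ to $1$, not a fixed constant). Because stable edges never drop below $1/3$, the cascading erosion in your scheme is impossible. The competitive analysis then becomes a short primal--dual argument with a ``parent-only'' dual assignment, $\alpha(u)=x(u)-x(p_u,u)$ (the total mass on edges from $u$ to its children), which gives $\alpha(u)\ge 1/3$ for every non-leaf vertex and $\alpha(u)\ge 2/3$ whenever $u$ is adjacent to a leaf; dual feasibility for every edge then follows by a one-line case split. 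This is close in spirit to your ``charge leaked mass to later edges'' idea, but a fixed dual assignment avoids the chain-of-charges accounting you were worried about. Once you swap in that fractional algorithm, your rounding plan carries through and the theorem follows.
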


\begin{thm}
\label{thm:growingtreehard}
    No randomized (or even fractional) algorithm can achieve a competitive ratio better than $2/3$ for MCM on Growing Trees.
\end{thm}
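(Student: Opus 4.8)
\emph{Proof plan.} The plan is to prove this upper bound by exhibiting an adaptive adversary, which covers randomized and fractional algorithms simultaneously. It suffices to defeat an arbitrary \emph{deterministic fractional} algorithm $A$: a randomized fractional algorithm is a distribution over deterministic ones, so a per-$A$ guarantee that $A$ outputs at most $\tfrac23\cdot\mathrm{OPT}$ on an $A$-dependent instance yields, by averaging (Yao's principle), the same bound in expectation against any randomized algorithm. Two model facts are worth keeping in mind throughout: (i) every tree is bipartite, so its fractional matching polytope is integral and the fractional optimum equals $\mathrm{OPT}$ --- hence the ``even fractional'' claim concerns the right benchmark; and (ii) under free disposal the fractional weight on any fixed edge is monotone non-increasing over time --- disposal can lower a weight but never re-raise it. The adversary is thus a decision tree that, after each edge, reads off the current weights maintained by $A$ and then either reveals the next edge (always attaching a new leaf, so as to remain within Growing Trees) or halts.

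The instance is built recursively around the threshold $\tfrac23$; note that since $\tfrac23 > 0.5914$, the $0.5914$ no-disposal lower bound of \cite{Buchbinder5/9} cannot simply be reused, and the construction must exploit an obstruction that survives disposal. We maintain a distinguished \emph{frontier edge} $e^\star=(p,q)$, with $q$ the latest leaf, and track how much weight $A$ keeps on $e^\star$. If at some point $A$ retains weight at most $\tfrac23$ on $e^\star$, we halt after a short closing gadget of pendant leaves chosen so that the resulting finite tree has $\mathrm{OPT}$ at least $\tfrac32$ times $A$'s total weight on it, giving competitive ratio at most $\tfrac23$ on that instance. Otherwise $A$ keeps more than $\tfrac23$ on $e^\star$, and we grow one or more new leaves at an endpoint of $e^\star$, turning $e^\star$ into an internal edge that $\mathrm{OPT}$ can avoid, and recurse with a new leaf as the frontier; the inductive invariant is that after every round $A$'s accumulated weight is at most $\tfrac23$ of the accumulated $\mathrm{OPT}$. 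The key insight is why free disposal cannot break this invariant: the weight $A$ is \emph{forced} to place on a freshly arrived leaf edge is at most one minus the load its parent already carries from the soon-to-be-internal heavy frontier edge, and by monotonicity that leaf-edge weight can never be raised afterwards. So $A$ is ratcheted --- either it retains the heavy weight on $e^\star$, which is wasted because $\mathrm{OPT}$ routes around it, or it disposes of that weight, in which case the round added nothing toward $\mathrm{OPT}$ while $\mathrm{OPT}$ itself grew.

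The analysis then runs by induction on the recursion depth, equivalently by solving the factor-revealing linear program that records, level by level, the total weight $A$ has accumulated against the corresponding increase in $\mathrm{OPT}$, and showing that the worst-case ratio equals $\tfrac23$, attained in the limit. The main obstacle --- where essentially all of the technical work sits --- is the design of the leaf-attachment pattern in each round: how many new leaves to attach, at which endpoints, and in what order relative to revealing the next frontier edge, so that (a) whatever capacity $A$ frees by disposal always becomes available ``too late'' to be converted into matching value, and (b) the closing gadget exactly realizes the ratio $\tfrac23$ on the finite tree constructed so far. These two requirements are what pin the threshold at exactly $\tfrac23$ rather than something larger, and the construction should be the precise mirror of the $\tfrac23$-competitive algorithm of \cref{thm:grwoingtreealg}, so that the lower and upper bounds coincide.
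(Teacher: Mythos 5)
Your plan points in the right general direction---an adaptive adversary against a deterministic fractional algorithm, a caterpillar-shaped tree where the adversary keeps extending past the ``heavier'' edge so that \textsf{OPT} can route around whatever the algorithm invests internally---but it leaves the decisive step undone and introduces a branching mechanism the paper neither uses nor needs.

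Concretely, the paper's construction is a fixed caterpillar $T_n$ on vertices $\{u_i\},\{v_i\}$ with edges $(u_i,v_i)$ and $(u_i,u_{i+1})$; the only adaptivity is that after both children of $u_i$ arrive, the adversary relabels them so that the child with the larger matched fraction is the one the path continues through (so $x_i\ge y_i$). There is no ``if the algorithm keeps $\le 2/3$ on the frontier edge, halt with a closing gadget'' branch. Your threshold-based stopping rule is conjectural: you never specify the closing gadget, never verify that it pins the ratio at $2/3$, and never show the recursion preserves an invariant; you explicitly defer ``essentially all of the technical work'' to later. The paper replaces all of that with a two-line LP-style calculation: the matching constraint $x_{i-1}+x_i+y_i\le 1$ at each internal $u_i$, summed over $i$, combined with $x_i\ge y_i$, gives $\tfrac32\,\textsf{ALG}\le\sum_i(2x_i+y_i)\le n+1=\textsf{OPT}+1$, hence ratio $\le\tfrac23+\tfrac{2}{3n}$. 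That calculation is the proof; without it your argument has no quantitative content.

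Two smaller points. First, your invocation of Yao's principle is not quite the right framing here: the reduction from randomized to fractional is direct (set $x(e)=\Pr[e\text{ matched}]$; the adversary's decisions depend only on these deterministic marginals), so no minimax averaging is needed, and this matters because the adversary is adaptive. Second, your ``monotonicity'' observation---that fractions on an edge can only decrease after arrival---is correct and is implicitly used in the paper's assumption that a reasonable algorithm never disposes from $(u_{i-1},u_i)$ once $u_{i-1}$ is buried, but you state it as if it alone carries the argument, whereas it only justifies treating the $x_i$'s as final. The heart of the lower bound is the summed-constraint inequality, and that is the piece missing from your proposal.
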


\begin{thm}
\label{thm:forestealg}
    There is a $5/8$-competitive randomized algorithm for MCM on Forests.
\end{thm}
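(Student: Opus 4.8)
The plan is to extend the fractional-matching framework that underlies \Cref{thm:grwoingtreealg} to the merging dynamics of a forest. Since a forest is bipartite, its fractional matching polytope is integral, so it suffices to maintain \emph{online} a fractional matching $x = (x_e)$ obeying the preemption rule (each $x_e$ is fixed when $e$ arrives and afterwards only decreases) together with $\sum_{e \ni v} x_e \le 1$ for every $v$, and to guarantee that the final matching has total value $\sum_e x_e \ge \tfrac58 \cdot \mathrm{OPT}$; a standard online rounding, as in the Growing Trees case, then converts this into a $\tfrac58$-competitive randomized integral algorithm. Throughout, write $y_v := \sum_{e \ni v} x_e$ for the current load of $v$; note that only the algorithmic direction is needed here, as no matching upper bound for Forests is claimed.

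The algorithm processes an arriving edge $e = uv$ by a balancing rule driven by the pair of loads $(y_u, y_v)$. In Growing Trees one of $y_u, y_v$ is always $0$, which is the special regime behind the $2/3$ bound; the genuinely new case is a \emph{merge}, where $uv$ joins two already-grown trees and both loads may be close to $1$. Concretely I would fix a coordinatewise non-increasing function $g \colon [0,1]^2 \to [0,1]$, set $x_{uv} \leftarrow g(y_u, y_v)$, and then restore feasibility at $u$ and at $v$ by scaling down their other incident edges — this rescaling is exactly where free disposal is used. The function $g$ must be chosen so that (i) $y_u, y_v \le 1$ is preserved, and (ii) the amount of previously committed matching value destroyed by the rescaling is tightly controlled; a natural first guess that agrees with the Growing Trees rule on the boundary is $x_{uv} = \min\{\,1 - y_u,\ 1 - y_v,\ c\,\}$ for a suitable constant $c$, with the analysis dictating the correct (possibly piecewise-linear) refinement.

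For the analysis I would charge against a fixed offline optimum $M^*$. When an edge $e = uv \in M^*$ arrives, the goal is to show that the \emph{net} change in $\sum_f x_f$ caused by processing $e$, plus a correction term accounting for the capacity of $u$ and $v$ already consumed by earlier algorithm edges, is at least $\tfrac58$; summing over $e \in M^*$ and telescoping the correction terms gives $\sum_f x_f \ge \tfrac58 |M^*|$. Equivalently, one maintains dual prices $\alpha_v$ with $\alpha_u + \alpha_v \ge 1$ on every arrived edge and the invariant $\sum_v \alpha_v \le \tfrac85 \sum_f x_f$, checking that each update preserves it. Either way, the delicate step is the merge: there the algorithm cannot extract a full unit from $uv$ (both endpoints are partly committed), and, worse, lowering an old matched edge at $u$ or $v$ to make room retroactively erodes a charge (or a dual price) that was set when that edge arrived.

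Hence the main obstacle is designing the rule $g$ and, crucially, an accounting quantity that is \emph{robust to every future demotion of every edge} — I expect to phrase it in terms of the residuals $1 - y_u$, $1 - y_v$ and the sum $y_u + y_v$ rather than in terms of individual edge values, and to run a case analysis on the relative sizes of $y_u$ and $y_v$ at merge time. Closing the bookkeeping at the constant $\tfrac58$ rather than $2/3$ reflects the extra adversarial power of arbitrary-order arrivals on forests, which the merge case makes concrete; pinning down $c$ (and the refinement of $g$) so that the worst merge configuration still yields $\tfrac58$ is the heart of the proof.
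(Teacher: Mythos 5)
Your write-up is a research plan rather than a proof, and it contains one substantive misconception plus two missing ideas that the paper's argument hinges on.

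\textbf{The rounding is not ``standard.''} You assert that once a $5/8$-competitive fractional algorithm is in hand, ``a standard online rounding, as in the Growing Trees case'' converts it to a randomized integral one. That step fails verbatim for Forests. The paper's lossless rounding maintains $\Pr[M(e,t)] = x_t(e)$ by induction. For a growing edge $(u,v)$ with new leaf $v$, one only needs $\Pr[\overline{M(u,t-1)}] = 1 - x_{t-1}(u) \geq x_t(u,v)$, which is just LP feasibility at $u$. But for a \emph{merging} edge $(u,v)$, the two endpoints live in independent trees, so the probability that \emph{both} are currently unmatched in the realized random matching is $(1-x_{t-1}(u))(1-x_{t-1}(v))$, not $1 - x_{t-1}(u) - x_{t-1}(v)$. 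Lossless rounding therefore forces the extra constraint
\[
x_t(u,v) \;\leq\; (1 - x_{t-1}(u))\,(1 - x_{t-1}(v)),
\]
which is strictly stronger than $x_t(u,v) \leq \min\{1-x_{t-1}(u),\,1-x_{t-1}(v)\}$. Your candidate rule $x_{uv} = \min\{1-y_u,\,1-y_v,\,c\}$ violates this product constraint (e.g.\ two $3/8$-loaded leaves allow $5/8$ under your rule but only $25/64$ under the product constraint). The bipartiteness/integrality of the forest matching polytope that you invoke is not the relevant fact here; what matters is the probabilistic coupling with the maintained fractional solution.

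\textbf{The accounting needs a per-tree surplus and a place to park undecided dual.} You correctly identify the danger — a merge cannot extract a full unit from $uv$, and retroactive disposal erodes charges set earlier — but you leave the mechanism that overcomes it entirely open. The paper's resolution has two specific components that your sketch does not anticipate. First, it strengthens reverse weak duality to a \emph{surplus invariant}: $P \geq D + 2/8$ holds on every current tree. Because the two trees being merged each carry $2/8$ of slack while the merged tree needs only one copy, the merge frees $2/8$ of dual budget that can be spent to (approximately) cover the new edge without increasing $P$. This is what replaces the ``root'' trick from Growing Trees, and there is no natural analogue of it in a charge-against-OPT argument framed purely in terms of the residuals $1-y_u,\,1-y_v$. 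Second, when two leaves merge, neither endpoint can yet absorb all the dual gain from the $3/8$ matched on the new edge; the paper stores an \emph{uncertain dual assignment} of $1/8$ on the edge itself and resolves it later (either crediting an endpoint or disposing it together with a decrement of the primal). Without some such deferred bookkeeping, your invariant $\sum_v \alpha_v \leq \tfrac{8}{5}\sum_f x_f$ (or equivalently $P \geq \tfrac58 D$ with $\alpha_u + \alpha_v \geq \tfrac58$ per edge) cannot be maintained through the merge cases. So the proposal's ``heart of the proof'' — choosing $g$ and a demotion-robust accounting — is exactly what is missing, and the missing piece is not a refinement of $\min\{1-y_u,1-y_v,c\}$ but a qualitatively different dual-side construction.
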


As an extension of our techniques, we design a very simple $1/2$-competitive algorithm for the edge arrival MWM problem on Growing Trees. Interestingly, our algorithm is "ordinal": it only makes comparisons between edge weights and does not need to known the exact values. Previously, research on online ordinal algorithms focused on the random order arrival model, such as various secretary problems \cite{matroidsecretary,matroidsecretary2,JKsecretary} and the two-sided game of Googol \cite{googol,secretarysample}. We refer readers to \cite{ordinal} for more comprehensive results on this model. Therefore, it is rather surprising that it also applies to our setting with adversarial arrival order. We also prove that $1/2$ is the optimal competitive ratio with respect to ordinal algorithms. 

\begin{thm}
\label{thm:weightedalg}
    There is a $1/2$-competitive randomized ordinal algorithm for MWM on Growing Trees.
\end{thm}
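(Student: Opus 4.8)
The plan is to design an ordinal algorithm that maintains a matching on the growing tree and, upon the arrival of a new edge $e=(u,v)$ where $v$ is the freshly added leaf, decides—using only comparisons of edge weights—whether to take $e$ into the matching (possibly disposing of the edge currently covering $u$). Since $v$ is brand new, at most one matched edge is ever in conflict with $e$, namely the edge $e'$ currently matched at $u$; this is the structural simplification that Growing Trees buys us. The natural ordinal rule is: if $u$ is unmatched, take $e$; if $u$ is matched by $e'$, then flip a coin and, with some fixed probability $p$ that may depend only on the outcome of comparing $w(e)$ with $w(e')$ (e.g. always replace when $w(e)>w(e')$, and replace with probability $q$ when $w(e)\le w(e')$), replace $e'$ by $e$. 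I would first pin down the exact randomized rule and the constants so that the expected weight retained is controllable edge-by-edge.

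The core of the analysis is a charging/potential argument. For each edge $e$ in the offline optimum $\opt$, I want to show that the algorithm's expected matched weight "covering'' $e$'s two endpoints at the end is at least $w(e)/2$, or more precisely set up a fractional assignment of the algorithm's expected gain to edges of $\opt$ that pays each $e\in\opt$ at least $\tfrac12 w(e)$. Because the graph is a tree, $\opt$ is a matching with a simple structure, and each vertex is incident to at most one $\opt$-edge; I would track, for each vertex $x$, the random variable giving the weight of the algorithm's edge at $x$ over time, and use the fact that this quantity is monotone in a suitable stochastic sense under the replacement rule (free disposal only ever swaps in a "comparably good'' edge). The key lemma will be a local inequality at the moment an $\opt$-edge $e=(u,v)$'s later endpoint arrives (or at the moment the relevant conflicting algorithm edges are decided), stating that $\Ex[\text{alg weight at }u] + \Ex[\text{alg weight at }v] \ge w(e)$ after appropriate normalization, which then sums to give $\alg \ge \tfrac12\opt$ since each algorithm edge is counted for at most two $\opt$-edges.

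I expect the main obstacle to be the correlation between the two endpoints of an $\opt$-edge: the weight the algorithm ends up holding at $u$ and at $v$ are not independent (a single later arrival could displace the algorithm's edge at one of them), so a naive "expectation at each endpoint'' bound may double-count or under-count. The fix I would pursue is to orient the charging along the tree—charge each $\opt$-edge to the algorithm edge at its \emph{child} endpoint (the endpoint that arrived later as a leaf), so that every algorithm edge is charged by at most one $\opt$-edge from below plus possibly one from above, and then argue the probability-$q$ replacement constant is exactly tuned so that "survived'' plus "got upgraded'' contributions add to $w(e)/2$. A secondary subtlety is that the algorithm is ordinal, so the constant $q$ cannot depend on weight ratios, only on the sign of $w(e)-w(e')$; I would verify that $q=\tfrac12$ (say) still suffices, possibly at the cost of a short case analysis on whether the $\opt$-edge is the "heavy'' or "light'' side of its local comparison. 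Finally, tightness of $1/2$ among ordinal algorithms is stated separately and I would defer it, but the lower-bound instance that forces it should also serve as a sanity check that the constants above cannot be improved.
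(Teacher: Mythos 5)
There is a genuine gap, and it starts with the algorithm itself. Your proposed rule maintains a single matching per vertex and flips a coin upon each collision, but this is not the structure the paper uses and it is not clear it can be tuned to give $1/2$. The paper's algorithm (\cref{alg:FracMWM}) is a \emph{deterministic fractional} algorithm in which every matched edge carries value exactly $1/2$, so each vertex can simultaneously support two half-matched incident edges; when a third edge arrives at a vertex, the algorithm compares it to the \emph{lighter} of the two resident edges and evicts that one if the newcomer is strictly heavier. This two-edges-per-vertex structure is essential: it is what lets the dual value $\alpha(u)=\tfrac12 w(u,h_u)$ (half the weight of the heaviest child edge) always be covered by surviving fractional mass, and it is precisely what your ``one matched edge per vertex'' formulation cannot express. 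Randomness enters only afterwards, via the lossless rounding of \cref{lem:rounding1} — effectively one global coin choosing between two correlated integral matchings — not via a fresh coin flip per collision. Per-step coins would introduce exactly the endpoint correlations you flag as a worry, and the paper sidesteps them entirely by doing all the work at the fractional level.

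The second gap is in the analysis. You sketch a charging of each $\opt$-edge to the algorithm edge at its child endpoint and hope a local inequality sums up, but the argument that actually closes the proof is different and nontrivial: one must show $\sum_{e\in M_F} w(e) \ge \sum_{u\in U} w(u,h_u)$ (\cref{mwm_frac_lem1}), where $M_F$ is the set of currently half-matched edges and $h_u$ is the heaviest child of $u$. This is proved by following a chain of ``blame'' from each disposed heavy edge to a surviving edge, and the crux is showing these chains are edge-disjoint (via a minimality/depth argument, and ruling out the case where the parent edge is still present when a child edge is evicted). Your sketch does not identify this invariant, and charging $\opt$ directly to endpoint expectations does not obviously yield an injective assignment — the paper's dual feasibility check in fact uses only $\alpha(u)\ge \tfrac12 w(u,v)$ for the parent $u$, making $\opt$ comparison automatic once the heavy-edge lemma is in place. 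To repair your proposal you would need to (i) switch to the deterministic $1/2$-fractional algorithm with two resident edges per vertex, (ii) invoke the lossless rounding framework of \cref{sec:framework} rather than per-step randomization, and (iii) replace the tentative endpoint-charging with the heavy-edge charging chain and its disjointness proof.
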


\begin{thm}
\label{thm:weightedhard}
    No randomized (or even fractional) ordinal algorithm can achieve a competitive ratio better than $1/2$ for MWM on Growing Trees. 
\end{thm}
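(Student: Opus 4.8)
The plan is to show that every fractional (hence every randomized) ordinal algorithm for MWM on growing trees has competitive ratio at most $1/2$; I would do this by fixing such an algorithm with competitive ratio $\rho$ and constructing a single family of instances forcing $\rho\le 1/2$. First I would pass to the fractional picture: for a randomized ordinal algorithm set $\beta_e^{(t)}=\Pr[e\text{ matched after step }t]$; then $\{\beta_e^{(t)}\}$ is a fractional matching at every step (for a vertex $v$, $\sum_{e\ni v}\beta_e^{(t)}=\Pr[v\text{ matched}]\le 1$), the map $t\mapsto\beta_e^{(t)}$ is non-increasing since disposed edges never return, the expected value after step $t$ equals $\sum_e\beta_e^{(t)}w(e)$, and — this is the whole point of ``ordinal'' — all these numbers depend only on the combinatorial history and the \emph{order} of the weights seen so far, not on their magnitudes. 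The hard instance is a comb $T_k$: a spine $u_0u_1\cdots u_{k+1}$ with one pendant leaf $\ell_i$ at each $u_i$ for $1\le i\le k$. Edges arrive as $e_1=u_0u_1$, then for $i=1,\dots,k$ the pair $f_i=u_i\ell_i$ followed by $e_{i+1}=u_iu_{i+1}$, and each arriving edge is revealed to be the current maximum, so the revealed order is $w(e_1)<w(f_1)<w(e_2)<\cdots<w(f_k)<w(e_{k+1})$. Every edge introduces exactly one new leaf, so this is a valid growing-tree input.

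Fix a $\rho$-competitive fractional ordinal algorithm and write $\beta_e^{(t)}$ for its state. Consider the realizations in which the input is truncated immediately after $e_{i+1}$ arrives (step $2i+1$), with $w(e_{i+1})=M$ huge and the $2i$ earlier weights tiny distinct values compatible with the revealed order; by ordinality the algorithm's state at step $2i+1$ is the \emph{same} $\beta^{(2i+1)}$ as in the full run. On this truncated graph $\opt\ge M$ (take $e_{i+1}$ alone), while $\alg\le \beta_{e_{i+1}}^{(2i+1)}M+\mathrm{poly}(i)$, so $\rho\le \alg/\opt\le \beta_{e_{i+1}}^{(2i+1)}+\mathrm{poly}(i)/M$; letting $M\to\infty$ gives $\beta_{e_{i+1}}^{(2i+1)}\ge \rho$ for every $i=1,\dots,k$. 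Now at vertex $u_i$ the incident edges after step $2i+1$ are exactly $e_i,f_i,e_{i+1}$, so the fractional matching constraint plus the previous bound give $\beta_{e_i}^{(2i+1)}+\beta_{f_i}^{(2i+1)}\le 1-\rho$, and monotonicity propagates this to the end: $\beta_{e_i}^{(2k+1)}+\beta_{f_i}^{(2k+1)}\le 1-\rho$ for all $i\le k$.

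Finally, consider the realization that builds all of $T_k$ with every weight in $[1,1+\mu]$ for a tiny $\mu>0$ (again order-consistent). Since $\{f_1,\dots,f_k\}$ is a matching, $\opt\ge\sum_{i=1}^k w(f_i)\ge k$. On the other hand $\alg=\sum_e\beta_e^{(2k+1)}w(e)\le (1+\mu)\Big(\sum_{i=1}^k\big(\beta_{e_i}^{(2k+1)}+\beta_{f_i}^{(2k+1)}\big)+\beta_{e_{k+1}}^{(2k+1)}\Big)\le (1+\mu)\big(k(1-\rho)+1\big)$, using $\beta_{e_{k+1}}^{(2k+1)}\le 1$. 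Hence $\rho\le \alg/\opt\le (1+\mu)\big(k(1-\rho)+1\big)/k$; sending $\mu\to 0$ and then $k\to\infty$ yields $\rho\le 1-\rho$, i.e.\ $\rho\le 1/2$.

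The part that needs genuine care is the two limiting arguments: I must check that in each truncated realization the non-spike contributions to both $\opt$ and $\alg$ are bounded by a quantity depending only on $i$ (so the ratio really tends to $\beta_{e_{i+1}}^{(2i+1)}$), and that every weight assignment used is order-consistent with the single revealed chain $w(e_1)<w(f_1)<\cdots$, since this consistency is exactly what licenses using the same numbers $\beta_e^{(t)}$ across all instances. Everything else — that $T_k$ is a growing tree, that $\{f_i\}$ is a matching, and the bookkeeping with monotonicity and the vertex constraints — is routine. It is worth noting why the pendant leaves are essential: on a bare growing path the same reasoning only bounds $\beta_{e_i}^{(2i+1)}\le 1-\rho$ against $\opt\approx k/2$, giving the weaker $\rho\le 2/3$; attaching $f_i$ at $u_i$ simultaneously tightens the constraint at $u_i$ and adds a unit to $\opt$, and that is precisely what drives the bound down to $1/2$.
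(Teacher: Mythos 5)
Your proof is correct and follows essentially the same route as the paper: both use the same comb growing tree with an increasing revealed weight order, force each spine edge to be $\ge\rho$-matched at its arrival time (the paper via a single geometric-weight instance $\mathcal{I}_2$, you via a family of truncated single-spike realizations, which is just a cleaner formalization of the same prefix argument), transfer those fractions by ordinality to a near-uniform-weight instance, and then use the vertex constraint at each internal spine vertex to cap $\alg$ by roughly $(1-\rho)\opt$, yielding $\rho\le 1/2$. The only cosmetic differences are that you use a strict revealed order (avoiding the ties the paper's $\mathcal{I}_1,\mathcal{I}_2$ have at each $(u_i,u_{i+1})$ and $(u_{i+1},v_{i+1})$) and a direct bound $\rho\le 1-\rho$ rather than the paper's contradiction phrasing.
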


Our results and comparison with previous works are listed in \Cref{tab:results}. Note that all the competitive ratios here hold for randomized algorithms, and all hardness results hold for fractional algorithms and thus for randomized algorithms.

\begin{table}[h]
\caption{Our results (marked in red).}
\label{tab:results}
\begin{tabular}{|l|l|l|l|l|l|}
\hline
               &                       & unweighted            &          & weighted              &           \\ \hline
Settings               & Free Disposal                      & algorithm     & hardness & algorithm     & hardness  \\ \hline
General Graph & without FD & $0.5$                 & $0.5$ \cite{Gamlath19}   & unbounded             & unbounded \\ \hline
               & with FD    & $0.5$                 & $0.585$ \cite{fully19} & $0.1867$ \cite{Epstein5.356}                & $0.585$ \cite{fully19} \\ \hline
Forests         & without FD & $5/9$ \cite{Buchbinder5/9}                & $0.5914$ \cite{Buchbinder5/9}  & unbounded             & unbounded   \\ \hline
               & with FD    & {\color{red} $5/9 \rightarrow 5/8$} & {\color{red} $2/3$}    &  0.1867 \cite{Epstein5.356}                     & {\color{red} $2/3$}     \\ \hline
Growing Trees  & without FD & $5/9$ \cite{Buchbinder5/9}                & $0.5914$ \cite{Buchbinder5/9} & unbounded             & unbounded  \\ \hline
               & with FD    & {\color{red}$5/9 \rightarrow 2/3$} & {\color{red} $2/3$}    & {\color{red}$1/3$ \cite{Tirodkar1/3} $\rightarrow 1/2$} & {\color{red} $2/3$}     \\ \hline
\end{tabular}
\end{table}

\subsection{Overview of Techniques}

\noindent The main technical highlights in our paper are a novel lossless rounding framework for solving this problem and an uncertain dual assignment technique for online primal-dual analysis. 

All previous works in the Growing Trees and the Forests model (\cite{Buchbinder5/9, Chiplunkar15/28, Tirodkar1/3}) maintain a constant number of different matchings deterministically and consistently output each matching with a certain probability. However, this method has a drawback: since the multiple matchings are strongly correlated with each other, the complexity of the analysis grows as the number of matchings increases. For example, the algorithm for MCM on Growing Trees in \cite{Chiplunkar15/28} maintains $4$ matchings, and although it is almost certain that maintaining more than $4$ matchings can achieve better performance, it would be very hard to analyze.

\paragraph{Lossless Rounding Framework.} We take a different approach to obtain improved results. We propose a novel framework for solving this problem. We first design a lossless rounding scheme, which shows that any fractional algorithm for Growing Trees or Forests that satisfies certain additional constraints can be rounded to a randomized integral algorithm with the same competitive ratio. A similar idea has been proposed by \citet{lossless_rounding} for online bipartite matching. Due to the simplicity of the tree structure, there is no correlation between vertices, so we only need a mild condition to promise a lossless rounding. For growing trees, no additional constraint is needed. The algorithm for Forests only needs to satisfy one additional constraint: if an arriving edge $(u,v)$ merges two trees, let $x(u,v)$ denote the fraction we match to $(u,v)$ and $x(u),x(v)$ denote the already matched fraction of $u$ and $v$ respectively, then the algorithm should satisfy $x(u,v) \leq {(1-x(u))(1-x(v))}$. This constraint allows us to successfully match $(u,v)$ with probability $x(u,v)$, since if $u$ and $v$ belong to different trees when $(u,v)$ arrives, the event that $u$ is already matched and $v$ is already matched are independent. 

With the help of lossless rounding, we can focus on designing fractional algorithms, which makes the algorithmic approach conceptually cleaner. Especially for MCM on Growing Trees, we only use a 1-page proof to achieve the optimal competitive ratio of $2/3$. 

\paragraph{Uncertain Dual Assignment.} Our design and analysis of the fractional algorithms mainly build on the online primal dual framework (refer to \cite{pdframework} for details). The main technical task of this framework is a carefully designed dual assignment scheme that maintains a set of dual variables online so that each edge is approximately "covered" (to keep the approximate dual feasibility). 

In many previous works, the dual assignments are usually fixed after we make a matching decision \cite{Devanur13, fully19, fully20, Adwords2}. For example, in the randomized primal dual analysis of the Ranking algorithm \cite{Devanur13}, when we match an offline vertex, the dual assignment is fixed immediately w.r.t. the rank of the offline vertex. However, in our work, we introduce a technique called uncertain dual assignment (for the Forests setting). When we match an edge, we can store some uncertain dual value on the edge, instead of assigning it to a fixed endpoint. This guarantees approximate dual feasibility at the current time. After some future event happens, we gain more knowledge of the graph structure and will then either fix the uncertain dual assignment to one endpoint or dispose of it to comply with the decrement in the primal solution. We note that this technique is not restricted to the tree structure, and may also be applied to other online problems.

\subsection{Related Works}

\noindent The MCM problem has been well studied under different arrival models besides edge arrival. As an extension of the one-sided online bipartite matching, \citet{WangWong} introduced the general vertex arrival model, where vertices on both sides are revealed one at a time, together with edges to previously revealed neighbors. They presented a $0.526$-competitive fractional algorithm. \citet{Gamlath19} presented a $1/2 + \Omega(1)$-competitive randomized algorithm, which essentially rounds online the fractional algorithm by \cite{WangWong}. The current best upper bound for this problem is $0.584$ \cite{fully24}.
Other arrival models in literature include the Fully Online Matching \cite{fully18, fully19, fully20, fully24} and the Online Windowed Matching \cite{windowed}, where each vertex has not only an arrival time but also a departure time.

\citet{Bullinger1/6} studied the online coalition formation problem in the general vertex arrival model, where matching is a special case (each coalition has size $2$). They gave a $1/6$-competitive algorithm, which can actually be improved to $3-2\sqrt 2$ in the same way as \cite{McGregor05}. 
There is a substantial gap between general vertex arrival and one-sided arrival. For MWM with one sided arrival, greedy is $0.5$-competitive, and \citet{weighted0.5086} designed a $0.5086$-competitive randomized algorithm, beating the $0.5$-barrier for the first time. Their algorithm was later improved to $0.519$-competitive by \citet{weighted0.519}. Besides free disposal, there are various other attempts to achieve a constant competitive ratio for online edge-weighted matching. For example, \citet{stochastic0.645} presented a $0.645$-competitive algorithm for edge-weighted online stochastic matching, which does not need free disposal. This result is further improved to $0.650$ by \citet{stochastic0.650}. 

There is relatively little research on lossless rounding for online matching problems. \citet{Devanur13} showed that lossless online rounding for arbitrary fractional bipartite matching is impossible. However, motivated by the fact that each randomized algorithm induces a fractional algorithm with the same competitive ratio, \citet{lossless_rounding} studies the necessary conditions for lossless rounding. They proposed a set of constraints such that any fractional algorithm that satisfies these constraints admits a lossless rounding scheme. Our rounding scheme differs from theirs in that we have to deal with free disposal.

\section{Preliminaries}
\subsection{Problem Definitions}

\paragraph{Edge Arrival Online Matching with Free Disposal.}

Consider an underlying graph $G = (V,E)$. Edges arrive one at a time, and once an edge arrives, the algorithm must immediately decide whether to match or discard that edge. We allow the algorithm to perform "free disposal": after an edge is matched, the algorithm can discard it at any time, possibly to make room for a newly revealed edge. However, if an edge is discarded in this way, it cannot be matched again afterward. In the MCM problem, the goal is to maximize the cardinality of the matching. In the MWM problem, each edge $e=(u,v)$ is associated with a weight that is also revealed upon its arrival, and is denoted by $w(e)$ or $w(u,v)$. The goal is to maximize the total weight of matched edges.

\paragraph{Growing Trees.}

Growing Trees is a special case of the edge arrival model. The underlying graph is a tree, and at any time, the edges that have arrived form a tree. In other words, each arriving edge "grows" a leaf to the tree. For convenience of analysis, we will assume that the tree is rooted at one of the endpoints of the first edge. Then, when a new edge $(u,v)$ arrives, where only $u$ is incident to some previous edge, we call $u$ the parent of $v$, denoted as $p_v$. To be clear, in the context of Growing Trees, when we mention an edge $(u,v)$, we always assume that $u$ (the left vertex) is the parent vertex of $v$ (the right vertex). 

\paragraph{Forests.}

Forests is a slightly more general arrival model than Growing Trees. Now the underlying graph is a forest, and the edges may arrive in any order. In particular, an arriving edge can connect two trees and merge them into one larger tree.

\subsection{Online Primal Dual Framework}

\noindent The Online Primal Dual Framework has been widely used to analyze the competitive ratio of online matching algorithms (\cite{Devanur13, fully19, weighted0.5086, Adwords3}). Consider the LP of (edge-weighted) matching and its dual:
\begin{align*}
\max: \quad & \textstyle \sum_{(u,v)\in E} x(u,v)\cdot w(u,v)&& \qquad\qquad & \min: \quad & \textstyle\sum_{u \in V} \alpha(u)\\
\text{s.t.} \quad & \textstyle \sum_{v:(u,v)\in E} x(u,v) \leq 1 && \forall u\in V & \text{s.t.} \quad & \alpha(u) + \alpha(v) \geq w(u,v) && \forall (u,v)\in E \\
& x(u,v) \geq 0 && \forall (u,v)\in E & & \alpha(u) \geq 0 && \forall u \in V
\end{align*}

Assume we have an algorithm for online matching that always maintains a feasible primal solution $x$. 
\begin{lem}
    The algorithm is $c$-competitive if there is a set of non-negative dual variables $\alpha$ that satisfy the following two conditions at all times:
\begin{itemize}
    \item Reverse weak duality: $P=\sum_{(u,v)\in E}x(u,v)\cdot w(u,v) \geq \sum_{u\in V} \alpha(u)=D$.
    \item Approximate dual feasibility: $\alpha(u)+\alpha(v) \geq c \cdot w(u,v), \forall (u,v)\in E$.
\end{itemize}
\end{lem}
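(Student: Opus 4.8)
The plan is to reduce the statement to ordinary LP weak duality. Write $\alg$ for the objective value the algorithm attains on the full input (an expectation, if the algorithm is randomized); since the algorithm always maintains a feasible primal solution $x$, at the final time we have $\alg = P = \sum_{(u,v)\in E} x(u,v)\,w(u,v)$. Write $\opt$ for the value of an optimal offline matching $M^*$; because the indicator vector of $M^*$ is a feasible point of the primal LP, $\opt$ is at most the optimal value of that LP.

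First I would rescale the duals, setting $\beta(u) = \alpha(u)/c$ for every vertex $u$. Approximate dual feasibility, $\alpha(u)+\alpha(v)\ge c\,w(u,v)$, becomes exactly $\beta(u)+\beta(v)\ge w(u,v)$ for all $(u,v)\in E$, and non-negativity is preserved, so $\beta$ is a genuine feasible solution of the dual LP. Hence, by LP weak duality, the dual objective $\sum_{u\in V}\beta(u) = D/c$ upper-bounds the optimal primal LP value, and therefore $\opt \le D/c$. (Alternatively, and without invoking LP duality as a black box: $\opt = \sum_{(u,v)\in M^*} w(u,v) \le \tfrac1c\sum_{(u,v)\in M^*}\bigl(\alpha(u)+\alpha(v)\bigr) \le \tfrac1c \sum_{u\in V}\alpha(u) = D/c$, where the middle step is approximate dual feasibility and the last step uses that $M^*$ is a matching, so each vertex is covered at most once, together with $\alpha\ge 0$.)

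Finally I would invoke reverse weak duality, which at the final time gives $P \ge D$, to obtain $\alg = P \ge D \ge c\cdot\opt$, i.e. $\alg \ge c\cdot\opt$, which is precisely $c$-competitiveness. For a randomized algorithm the same chain holds verbatim with expectations throughout, using linearity of expectation and the fact that the two hypotheses are required to hold pointwise over the randomness.

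I do not expect a genuine obstacle here: the claim is the standard online primal–dual meta-lemma, and the argument is a two-line application of weak duality. The only points needing care are bookkeeping: applying both hypotheses at the terminal configuration rather than an intermediate one; recalling that $\opt$ denotes the offline (integral) optimum, which is bounded by the LP optimum; and, in the randomized setting, reading $P$, $D$, and all inequalities as expectations.
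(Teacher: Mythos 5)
Your proof is correct, and it is exactly the standard argument that the paper (which states this lemma without proof, as background on the online primal--dual framework) implicitly relies on: rescale $\alpha$ by $1/c$ to get a feasible dual, invoke LP weak duality (or your equivalent direct bound over the edges of $M^*$) to get $D \ge c\cdot\opt$, then chain with reverse weak duality at the terminal configuration to conclude $\alg = P \ge D \ge c\cdot\opt$. The remarks about taking expectations and applying the hypotheses at the final time are the right bookkeeping; nothing is missing.
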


Following the framework, we can transform the task of proving the competitive ratio into constructing an approximate feasible dual solution. 
\section{Fractional Algorithms and A Lossless Rounding}
\label{sec:framework}

\subsection{Fractional Algorithm for Matching with Disposal}

\noindent We use deterministic fractional algorithms as an intermediate step for designing randomized integral algorithms. For each edge $e = (u,v)$, let $x(e)$ or $x(u,v)$ denote the matched fraction of the edge. For each vertex $u$, let $x(u) = \sum_{v: (u,v)\in E}x(u,v)$ denote the total matched fraction of its incident edges. 

For each arriving edge $e = (u,v)$, the fractional algorithm should match a fraction to $e$ under the matching constraint. That is, the algorithm should keep $x(u) \leq 1$ and $x(v) \leq 1$. With the ability to have free disposal, the algorithm can decrease the matched fraction of edges at any time. However, the matched fraction of an edge can only increase at its arrival time. 

\subsection{Lossless Randomized Rounding}

\noindent Next, we present a rounding framework for Growing Trees and Forests. Under some mild conditions, we prove that the rounding is lossless: given a $c$-competitive fractional algorithm $\textsf{A}_F$, we can obtain a $c$-competitive randomized integral algorithm $\textsf{A}_R$. 

When each edge arrives, $\textsf{A}_F$ essentially performs the following two steps sequentially:
\begin{enumerate}
    \item Dispose Step: lower the matched fraction of some previously arrived edges.
    \item Matching Step: match some fraction to the arriving edge.
\end{enumerate}
We treat the two steps separately. For a clear representation, we assume that the $t$-th edge arrives at time $2t-1$. The disposal step is performed at a time of $2t-1$, and the matching step is performed at time $2t$. We introduce the following notations for convenience: 
\begin{itemize}
    \item $M(e,t)$ denotes the event that edge $e$ is matched by $\textsf{A}_R$ right after time $t$.
    \item $M(u,t)$ denotes the event that vertex $u$ is matched by $\textsf{A}_R$ right after time $t$.
    \item $x_t(e)$ denotes the matched fraction of $e$ in $\textsf{A}_F$ right after time $t$.
    \item $x_t(u)$ denotes the total matched fraction of edges incident to $u$ in $\textsf{A}_F$ right after time $t$.
\end{itemize}

To ensure that the rounding is lossless, we maintain the following invariant in our rounding scheme: at any time $t$, for each edge $e$, $\Pr[M(e,t)] = x_t(e)$.
If the invariant holds, the expected size of the matching maintained by $\textsf{A}_R$ always equals the size of fractional matching maintained by $\textsf{A}_F$. Thus the competitive ratio is the same. 

We will prove that this property is always held by induction. If the matched fraction of an edge does not change at time $t$, then the matching state of that edge does not change, and the property still holds trivially. Next, we consider three cases where we change the matched fraction of an edge.

\paragraph{Disposal Step.} For an edge $e$, if $x_{t-1}(e) = \lambda$ and $\textsf{A}_F$ disposes a $\delta$-fraction from $e$ when an edge arrives at time $t$, then $\textsf{A}_R$ performs the following step:
\begin{itemize}
    \item If $e$ is matched, dispose it with probability $p=\frac{\delta}{\lambda}$. Otherwise, do nothing.
\end{itemize}
Therefore, if $\Pr[M(e,t-1)] = x_{t-1}(e) = \lambda$ by induction hypothesis, then
$$
\begin{aligned}
\Pr[M(e,t)] = &~\Pr[M(e,t-1)] \cdot \Pr[\text{$e$ is not disposed}] \\
= &~ \lambda \cdot \left(1 - \frac{\delta}{\lambda}\right)\\
= &~ \lambda - \delta = x_{t}(e) ~.
\end{aligned}
$$

\paragraph{Match an edge that extends a leaf.} In this case, an edge $e = (u,v)$ extends a leaf $v$ to a tree, and $\textsf{A}_F$ matches a $\gamma$-fraction to it at time $t$. Whether to match $e$ depends on whether $u$ is matched. Suppose $x_{t-1}(u) = \mu$, then $\textsf{A}_R$ performs the following step: 
\begin{itemize}
    \item If $u$ is not matched, match $e$ with probability $\frac{\gamma}{1-\mu}$.
\end{itemize}
Any valid fractional algorithm guarantees $\mu + \gamma \leq 1$, so this probability does not exceed $1$. Since at most one edge incident to $u$ can be matched, the probability that $u$ is matched is
$$
\Pr[M(u,t-1)] = \sum_{e'\ni u}\Pr[M(e',t-1)] = \sum_{e'\ni u}x_{t-1}(e') = x_{t-1}(u) ~.
$$
Therefore the probability that $e$ becomes matched at time $t$ is
$$
\begin{aligned}
\Pr[M(e,t)] =&~ \Pr[\overline{M(u,t-1)}] \cdot \Pr[\text{match $e$ at time $t$} \mid \overline{M(u,t-1)}]\\
=&~ (1 - \mu) \cdot \frac{\gamma}{1-\mu}\\
=&~ \gamma = x_t(e) ~.
\end{aligned}
$$

\paragraph{Merge a tree with only one edge with another tree.} Suppose there is a tree with only one edge $e'$, and the arriving edge $e$ merges $e'$ with another tree. In this case, we can assume that $e$ arrives first and then $e'$ arrives, both growing a leaf to the other tree, then calculate the matched fractions. To ensure that the desired fraction can always be achieved, we impose a trivial condition on the fractional algorithm: if an arriving edge is not incident to any previous edge (i.e. a new tree with one edge), the algorithm must match fraction $1$ to the edge. The corresponding step in $\textsf{A}_R$ goes as follows: first decide whether $e$ and $e'$ are matched (assuming $e'$ arrives last), then dispose $e'$ and match $e$ if necessary. 

\paragraph{Match an edge that merges two trees.} In this case, the arriving edge $e = (u,v)$ merges two "non-trivial" trees (with at least $2$ edges), and $\textsf{A}_F$ matches a $\gamma$-fraction to it at time $t$. It only happens in the forest setting. Whether to match $e$ depends on whether $u$ and $v$ are matched.
Following our rounding scheme, these two events are independent. Suppose $x_{t-1}(u) = \mu_1$ and $x_{t-1}(v)= \mu_2$, then $\textsf{A}_R$ performs the following step: 
\begin{itemize}
    \item If $u$ and $v$ are both not matched, match $e$ with probability $\frac{\gamma}{(1-\mu_1)(1-\mu_2)}$.
\end{itemize}
However, a valid fractional algorithm does not necessarily guarantee that $\frac{\gamma}{(1-\mu_1)(1-\mu_2)} \leq 1$. Therefore, it should be an additional constraint in $\textsf{A}_F$. Combining with the independent condition, the probability that $e$ becomes matched at time $t$ is
$$
\begin{aligned}
    \Pr[M(e,t)] = &~ \Pr[\overline{M(u,t-1)} \land \overline{M(u,t-1)}] \cdot \Pr[\text{match $e$ at time $t$} \mid \overline{M(u,t-1)} \land \overline{M(u,t-1)}]\\
    = &~ \Pr[\overline{M(u,t-1)} ] \cdot \Pr[\overline{M(u,t-1)}] \cdot \frac{\gamma}{(1-\mu_1)(1-\mu_2)} \\
    = &~ (1-\mu_1)\cdot (1-\mu_2)\cdot \frac{\gamma}{(1-\mu_1)(1-\mu_2)} \\
    = &~ \gamma = x_t(e) ~.
\end{aligned}
$$

Finally, we formalize our rounding scheme as the following two lemmas:

\begin{lem}
\label{lem:rounding1}
    For online matching with disposal on Growing Trees, if there is a $c$-competitive fractional algorithm, then there is a corresponding $c$-competitive randomized algorithm.
\end{lem}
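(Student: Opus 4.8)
The plan is to package the case analysis that precedes the lemma statement into a clean inductive argument. The key observation is that \Cref{lem:rounding1} is essentially already proved by the discussion above; the proof should just assemble the pieces. First I would state the induction hypothesis precisely: after every time step $t$ (recall the convention that the disposal step of the $t$-th edge happens at time $2t-1$ and the matching step at time $2t$), for every edge $e$ we have $\Pr[M(e,t)] = x_t(e)$, and moreover the matched edges always form a valid matching (no vertex is covered twice). The base case is trivial: before any edge arrives both sides are empty, and the first edge — which must receive fraction $1$ by the trivial condition on $\textsf{A}_F$ — is matched with probability $1$.

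For the inductive step I would distinguish exactly the cases laid out in the text. If the matched fraction of $e$ is unchanged at time $t$, there is nothing to do. Otherwise the fraction changes because of (i) a disposal step, (ii) matching an edge that extends a leaf, or (iii) merging a tree consisting of a single edge into another tree (which I reduce, as in the text, to a disposal followed by a leaf-extension, using the trivial fraction-$1$ condition). In the Growing Trees setting the fourth case — merging two non-trivial trees — never occurs, so these three exhaust all possibilities. In each case I would reproduce the short conditional-probability computation already displayed: for disposal, $\Pr[M(e,t)] = \lambda(1-\delta/\lambda) = \lambda-\delta$; for a leaf extension, the crucial point is that $\Pr[M(u,t-1)] = \sum_{e'\ni u}\Pr[M(e',t-1)] = \sum_{e'\ni u} x_{t-1}(e') = x_{t-1}(u)$, where the first equality uses that at most one edge at $u$ is matched (part of the induction hypothesis), so $\Pr[M(e,t)] = (1-\mu)\cdot\gamma/(1-\mu) = \gamma$. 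I would also note that validity of the probabilities follows from validity of $\textsf{A}_F$ ($\mu+\gamma\le 1$), and that matching $e$ only when $u$ is currently unmatched preserves the matching property.

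Finally I would close the loop: since the invariant $\Pr[M(e,t)] = x_t(e)$ holds at all times, linearity of expectation gives $\Ex[\,|\textsf{A}_R\text{'s matching}|\,] = \sum_e \Pr[M(e,t)] = \sum_e x_t(e)$, which is exactly the value of $\textsf{A}_F$'s fractional matching at every time, and in particular at the end. Hence $\textsf{A}_R$ and $\textsf{A}_F$ have the same objective value on every instance, and since $\opt$ is the same offline quantity, $\textsf{A}_R$ is $c$-competitive whenever $\textsf{A}_F$ is.

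I do not expect a genuine obstacle here — the lemma is a bookkeeping consolidation of the preceding discussion. The one point that needs a little care is the reduction in case (iii): one must check that treating the single-edge tree's edge $e'$ as ``arriving last'' and then possibly disposing it is consistent with the rounding invariant, i.e. that the joint distribution produced by ``decide $e,e'$ as two leaf extensions, then dispose $e'$ and match $e$ if necessary'' indeed gives $\Pr[M(e,\cdot)] = x(e)$ while keeping everything a valid matching; this is where I would spend an extra sentence rather than waving it through. Everything else is a direct quotation of the computations already in the text, reorganized under a single induction.
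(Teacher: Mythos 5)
Your proof is correct and follows exactly the paper's approach: the paper's own proof is a one-line remark that in Growing Trees only the disposal step and the leaf-extension step occur, so the invariant $\Pr[M(e,t)]=x_t(e)$ is preserved without any extra conditions, and you have simply filled in the induction scaffolding around the computations already displayed. One small inaccuracy worth flagging: in Growing Trees the graph is a \emph{single} tree at all times, so case (iii) (merging a one-edge tree into another) never arises either, and consequently neither does the ``fraction $1$ on isolated edges'' condition you invoke for the base case — that condition is only listed as a hypothesis of \Cref{lem:rounding2} (Forests), not \Cref{lem:rounding1}. For the first edge you should instead observe that if $\textsf{A}_F$ matches fraction $\gamma$ to it, $\textsf{A}_R$ matches it with probability $\gamma$ (a degenerate leaf-extension with $\mu=0$), which keeps the invariant without assuming $\gamma=1$. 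This doesn't break anything since the extraneous case never occurs, but it slightly overstates what the lemma requires of the fractional algorithm.
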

\begin{proof}
    When we consider growing trees, there is only a disposal step and a matching leaf step. We can keep the lossless invariant without any conditions. 
\end{proof}

\begin{lem}
\label{lem:rounding2}
    For online matching with free disposal on Forests, if there is a $c$-competitive fractional algorithm with the following two conditions:
    \begin{enumerate}
        \item for any isolated edge $e$, $x(e)=1$,
        \item at the matching step $t$ of a new edge $(u,v)$ that merges two trees, $\frac{x_t(u,v)}{(1-x_{t-1}(u))(1-x_{t-1}(v))} \leq 1$,
    \end{enumerate}
    then there is a corresponding $c$-competitive randomized algorithm.
\end{lem}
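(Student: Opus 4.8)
The plan is to establish, by induction on the time step $t$, the lossless invariant $\Pr[M(e,t)] = x_t(e)$ for every edge $e$, exactly as in the preceding discussion, but strengthened so that it also carries the structural information needed for the merge case. Concretely, I would maintain three things simultaneously at every time $t$: (i) the marginal identity $\Pr[M(e,t)]=x_t(e)$ for all $e$; (ii) $\textsf{A}_R$'s matched edges form a valid integral matching (at most one matched edge per vertex); and (iii) for distinct trees $T, T'$ of the current forest, the families of indicator variables $\{\mathbf{1}[M(e,t)] : e\in T\}$ and $\{\mathbf{1}[M(e,t)]:e\in T'\}$ are independent. The base case, before any edge has arrived, is vacuous.

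For the inductive step I would split into the four cases already enumerated: the disposal step at time $2t-1$; matching a new edge that extends a leaf; matching a new edge that merges a single-edge tree into another tree; and matching a new edge that merges two non-trivial trees. For the first two cases the marginal computations are precisely those displayed in the text, and I would only add the routine observations that a disposal step can only \emph{remove} matched edges (so validity is preserved), that a leaf-extension matches $e=(u,v)$ only when the parent $u$ is unmatched and the new leaf $v$ is incident to no prior edge (so validity is again preserved), and that in both cases the only randomness consumed is a fresh coin whose bias is read off the state of a single current tree — hence independence across the \emph{other} components is untouched, and the component affected (old tree, possibly with one pendant edge attached) still has its matching determined by its own randomness alone.

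The merge cases are where the two hypotheses of the lemma enter and where the real content lies. Condition~(1) ($x(e)=1$ for isolated edges) lets me reduce the single-edge-merge case to two successive leaf-extensions (pretend $e$ arrives first, then $e'$, then undo $e'$ and fix $e$), so no new argument is needed there. Condition~(2) is exactly what makes $\gamma/\bigl((1-\mu_1)(1-\mu_2)\bigr)\le 1$, so the matching rule in the two-non-trivial-trees case is a well-defined probability. The crucial point is that the decision to match $e=(u,v)$ is a function only of a fresh coin and of the events $M(u,t-1)$ and $M(v,t-1)$; since $u$ and $v$ lie in \emph{different} trees immediately before $e$ arrives, hypothesis~(iii) of the induction gives $\Pr[\overline{M(u,t-1)}\wedge\overline{M(v,t-1)}]=(1-\mu_1)(1-\mu_2)$, and the displayed computation yields $\Pr[M(e,t)]=\gamma=x_t(e)$. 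Validity is preserved because $e$ is matched only when both endpoints were unmatched; and the merged component's matching is a deterministic function of the union of the two former trees' randomness together with the new coin, so it remains independent of every other component, re-establishing~(iii).

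Finally, once the invariant holds at all times, linearity of expectation gives $\Ex\bigl[|\textsf{A}_R\text{'s matching}|\bigr]=\sum_e \Pr[M(e,t)]=\sum_e x_t(e)$ at every time $t$, in particular at termination, where this equals the size (or, with weights, the value) of $\textsf{A}_F$'s fractional solution; since the offline optimum is a fixed quantity, $\textsf{A}_R$ inherits the competitive ratio $c$ of $\textsf{A}_F$. I expect the main obstacle to be phrasing the cross-tree independence invariant cleanly: one must note that $\textsf{A}_F$'s fractional choices are deterministic functions of the input only (so they never couple the trees) and that each coin flipped by $\textsf{A}_R$ reads the current matching state of exactly one tree, then track how the generating $\sigma$-algebras merge when two trees are joined. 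Everything else is the bookkeeping already carried out in the case analysis above.
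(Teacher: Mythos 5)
Your proof is correct and follows the same case analysis that the paper develops in the section immediately before the lemma; the paper's own proof of \cref{lem:rounding2} is essentially a two-sentence pointer to that discussion. You have usefully made explicit the cross-tree independence invariant that the paper asserts only with the one-line remark ``Following our rounding scheme, these two events are independent.'' One small caution: invariant~(iii) should be stated as \emph{mutual} independence of the matching states across all current trees, not pairwise independence of each pair. Pairwise independence of $T_1$ with $T'$ and of $T_2$ with $T'$ does not by itself imply that $(T_1,T_2)$ is jointly independent of $T'$, which is what you need to carry~(iii) through a merge of $T_1$ and $T_2$. Your closing paragraph's $\sigma$-algebra framing --- each tree's matching state is measurable with respect to the fresh coins flipped for edges in that tree, and those coin families are mutually independent by construction --- is exactly the clean way to state and preserve the invariant; phrased that way, the whole inductive argument closes without issue.
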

\begin{proof}
    Compared to growing trees, there may exist edges that connect two different trees. However, when the two conditions hold, we have also shown that $\Pr[M(e,t)] = x_t(e)$ can be satisfied. 
\end{proof}

\section{Online MCM on Growing Trees}
\label{sec:unweighted}

\noindent In this section, we first present a $2/3$-competitive algorithm for MCM on growing trees, which proves \Cref{thm:grwoingtreealg}. Then we construct a hard instance to show an upper bound of $2/3$, which proves \Cref{thm:growingtreehard}, and thus, the algorithm we propose is optimal. 

\subsection{A \boldmath{\texorpdfstring{$2/3$}{a}}-Competitive Algorithm}

\noindent We only present the fractional algorithm in this section, and by \Cref{lem:rounding1} we can obtain an equivalent randomized integral algorithm. 

In our algorithm, edges can be divided into three types: unmatched, stable ($1/3$-matched), and unstable ($2/3$-matched or fully matched). The matched fraction of a stable edge will never change, while the matched fraction of an unstable edge may be decreased to $1/3$ in the future. When an edge $e$ arrives, if it is incident to another unstable edge $e'$ then we make $e'$ stable. After that we greedily match the maximum possible fraction to $e$. See \cref{alg:FracMCM} for a formal description.

\vspace{5pt}
\begin{algorithm}[ht]
    \caption{Fractional Algorithm for MCM on Growing Trees}
    \label{alg:FracMCM}
    When an edge $e = (u,v)$ arrives:\\
    \If {$u$ has an incident edge $e' = (u,z)$ such that $x(e') \geq 2/3$} {
        Make $u$ stable: $x(e') \leftarrow 1/3$\;
    }
    $x(e) \leftarrow 1-x(u)$\;
\end{algorithm}

We use the Online Primal Dual Framework to analyze the algorithm's competitive ratio. The following is a dual assignment rule that naturally satisfies reverse weak duality, and we will prove that approximate dual feasibility is satisfied for each edge under this rule.

\begin{prop}[Parent-only dual assignment rule]
\label{prop:parentonly}
    If an edge $(u,v)$ is matched for some fraction $x(u,v)$, where $u$ is $v$'s parent, then we assign $x(u,v)$ to $\alpha(u)$, while $\alpha(v)$ gets $0$. In other words, for each non-leaf vertex $u$, we have $\alpha(u) = x(u) - x(p_u, u)$. (if $u$ is the root, assume $x(p_u,u) = 0$)
\end{prop}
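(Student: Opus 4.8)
My plan is to dispatch the literal content of the proposition --- a short bookkeeping identity, plus the checks that make the rule admissible for the online primal--dual framework --- and then to outline the feasibility argument that this proposition is setting up, since that is where any real difficulty lies.

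For the proposition itself: in the Growing Trees model the revealed graph is always a tree and a root has been fixed, so every non-root vertex $v$ has a unique parent $p_v$; hence the rule charges each matched edge $(u,v)$ with $u=p_v$ to exactly one endpoint, namely $u$, so that $\alpha(u)=\sum_{v:\,p_v=u}x(u,v)$ is the total matched fraction on the edges from $u$ to its children. The ``in other words'' reformulation is then immediate: the edges incident to $u$ are exactly the parent edge $(p_u,u)$ (absent, and read as $0$, when $u$ is the root) together with the child edges $(u,v)$, so $x(u)=x(p_u,u)+\sum_{v:\,p_v=u}x(u,v)$, which rearranges to $\alpha(u)=x(u)-x(p_u,u)$ (for a leaf $u$ both sides are $0$, which is why the clause restricts to non-leaves). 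Non-negativity is clear since $\alpha(u)$ is a sum of non-negative fractions, and reverse weak duality in fact holds with equality, since each matched edge contributes its fraction to exactly one dual variable: $D=\sum_u\alpha(u)=\sum_{(u,v)\in E}x(u,v)=P$.

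The real content that this proposition prepares is approximate dual feasibility, $\alpha(u)+\alpha(v)\ge 2/3$ for every tree edge at every time, which I would prove by induction on the arrival sequence, using the structural facts that a stable ($\tfrac13$-)edge is frozen forever, that a vertex carries at most one unstable ($\ge\tfrac23$-)incident edge, and that every edge fraction lies in $\{0,\tfrac13,\tfrac23,1\}$. When $(u,v)$ is created, after the (possible) stabilization in \cref{alg:FracMCM} the vertex $u$ has no unstable incident edge, so $x(p_u,u)\le\tfrac13$; since the algorithm then sets $x(u,v)=1-x(u)$, afterwards $x(u)=1$, hence $\alpha(u)=x(u)-x(p_u,u)\ge\tfrac23$ while $\alpha(v)=0$, so the new edge is fine. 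The step I expect to be the main obstacle is the maintenance of the inequality for the edges already present. A new arrival at $u$ may stabilize $u$'s unstable incident edge; if that edge is a \emph{child} of $u$ one checks that the fraction $x(u,v)$ placed at $u$ in the same step is at least the loss $\lambda-\tfrac13$, so no $\alpha$-value drops and nothing needs rechecking. The genuinely delicate case is when the stabilized edge is $u$'s \emph{parent} edge $(p_u,u)$, with previous fraction $\lambda\in\{\tfrac23,1\}$: then $\alpha(p_u)$ really does drop by $\lambda-\tfrac13$, and I must show that the two edges touching $p_u$ whose bound could be hurt still survive --- edge $(p_u,u)$ because $\alpha(u)$ simultaneously increases by $x(u,v)\ge\lambda-\tfrac13$, and edge $(p_{p_u},p_u)$ because afterwards $\alpha(p_u)\ge\tfrac13$ and also $\alpha(p_{p_u})\ge\tfrac13$ (the latter holds since $(p_{p_u},p_u)$ is either a $\tfrac13$-edge, or a $0$-edge, in which case $p_{p_u}$ was ``stably saturated'' when $p_u$ arrived and therefore has frozen stable children worth at least $\tfrac23$). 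Distilling these observations into a single invariant on $\alpha$ that is preserved by every arrival is the heart of the argument.
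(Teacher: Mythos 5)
The proposition itself is a definition of a dual-assignment rule together with a one-line bookkeeping reformulation, and the paper states it without proof. Your verification of the identity $\alpha(u)=x(u)-x(p_u,u)$ (splitting $x(u)$ into the parent edge plus the child edges and rearranging), together with non-negativity and the observation that reverse weak duality holds with equality $P=D$ (which the paper invokes inside the proof of \cref{mcm_frac_2/3}), is correct and is all that the statement itself requires.

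The bulk of your write-up then sketches approximate dual feasibility, which is really the content of \cref{lem:nonleaf_1/3}, \cref{lem:nonleaf_2/3} and \cref{mcm_frac_2/3} rather than of this proposition, and there your route differs from the paper's. You propose a per-arrival induction that tracks how each $\alpha$-value changes, and you single out the stabilization of a parent edge $(p_u,u)$ as the delicate case, which then forces you to re-certify feasibility on both $(p_u,u)$ and $(p_{p_u},p_u)$, including a case split on whether $x(p_{p_u},p_u)$ is $0$ or $1/3$. The paper sidesteps all of this by proving two static, time-invariant lower bounds: $\alpha(u)\ge 1/3$ for every non-leaf $u$ (\cref{lem:nonleaf_1/3}), because $u$'s first child edge is matched $\ge 2/3$ on arrival and never drops below $1/3$ once stabilized; and $\alpha(u)\ge 2/3$ while $u$ is adjacent to a leaf (\cref{lem:nonleaf_2/3}), by the analogous bound on the first two child edges. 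Feasibility on any edge $(u,v)$ is then a two-case split on whether $v$ is a leaf, with no need to reason about how $\alpha$-values shuffle around at each arrival. Both arguments hinge on the same structural fact --- a stable child-edge fraction is frozen at $1/3$ forever --- but the paper converts it into monotone pointwise bounds rather than a dynamic bookkeeping invariant, which makes your ``genuinely delicate case'' evaporate: once \cref{lem:nonleaf_1/3} is in hand, $\alpha(p_u)\ge 1/3$ and $\alpha(p_{p_u})\ge 1/3$ hold unconditionally at all times, and the maintenance step you flagged is immediate.
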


\begin{lem}
\label{lem:nonleaf_1/3}
    For each non-leaf vertex $u$, $\alpha(u) \geq 1/3$.
\end{lem}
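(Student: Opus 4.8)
The plan is to track what $\alpha(u) = x(u) - x(p_u, u)$ can be at the moment vertex $u$ first becomes a non-leaf (i.e.\ when its first child edge arrives), and then argue the quantity never drops below $1/3$ afterwards. First I would observe that just before $u$'s first child edge $e = (u,v)$ arrives, $u$ may be matched only via the edge to its parent, so $x(u) = x(p_u,u)$ and hence $\alpha(u)$ would be $0$ at that instant --- but the algorithm then acts. Following \cref{alg:FracMCM}: if the parent edge $(p_u,u)$ currently has fraction $\geq 2/3$ it is an unstable edge incident to $u$, so the algorithm makes $u$ stable by setting $x(p_u,u) \leftarrow 1/3$; otherwise $x(p_u,u)$ is either $0$ (if $u$ is the root, or unmatched) or exactly $1/3$ (a stable parent edge), because the only fractions an edge ever takes are $0, 1/3, 2/3, 1$. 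Then the algorithm sets $x(u,v) \leftarrow 1 - x(u)$ with the updated $x(u)$, so immediately after, $x(u) = 1$ and $x(p_u,u) \in \{0, 1/3\}$, giving $\alpha(u) = 1 - x(p_u,u) \geq 2/3$.

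Next I would handle the evolution of $\alpha(u)$ over time. After $u$ becomes a non-leaf, two kinds of events can change $x(u)$ or $x(p_u,u)$. (a) A new child edge $(u,w)$ arrives: the algorithm may first make some incident unstable edge $e'$ stable, dropping its fraction from $2/3$ (or $1$) to $1/3$; if $e' = (p_u,u)$ that actually \emph{increases} $\alpha(u)$, and if $e'$ is a child edge it decreases $x(u)$ but then the algorithm immediately re-matches $x(u,w) = 1 - x(u)$, restoring $x(u) = 1$; in all cases $\alpha(u)$ ends at $1 - x(p_u,u) \geq 2/3$ again. (b) The parent edge $(p_u,u)$ gets disposed down to $1/3$ because $p_u$'s other subtree needed room --- this only \emph{decreases} $x(p_u,u)$, hence \emph{increases} $\alpha(u)$. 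The one genuinely delicate point, which I expect to be the main obstacle, is bounding $\alpha(u)$ at intermediate times: right after a child edge of $u$ is made stable (its fraction dropping to $1/3$) but before the new arriving edge is matched, $x(u)$ has temporarily dropped; I need to check that even this transient value of $x(u) - x(p_u,u)$ stays $\geq 1/3$. This reduces to a short case analysis of how many stable ($1/3$) edges $u$ can simultaneously carry and what $x(p_u,u)$ is, using that at most one incident edge of $u$ is ever unstable at a time.

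Concretely, I would prove the sharper invariant that at all times $x(u) \geq 1/3 + x(p_u,u)$ for every non-leaf $u$ (equivalently $\alpha(u) \geq 1/3$), by induction on events. The base case is the computation above: right after $u$'s first child edge arrives, $x(u) = 1 \geq 1/3 + x(p_u,u)$ since $x(p_u,u) \leq 1/3$ then (and stays $\leq 1/3$ forever after, as $u$ never gains fraction on the parent edge once it has a child). For the inductive step I would check each event type listed above preserves the invariant, the only nontrivial verification being the "make-stable-then-rematch" transition, where I use that making an edge stable changes its fraction to exactly $1/3$ and that the subsequent assignment $x(u,w) = 1 - x(u)$ brings $x(u)$ back to $1$. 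Since $1 - x(p_u,u) \geq 1 - 1/3 = 2/3 \geq 1/3$, the invariant --- and hence the lemma --- follows. I would close by noting that the leaf vertices are handled separately (their duals are $0$) and play no role here.
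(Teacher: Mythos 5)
Your event-based induction misses a case, and the case it misses is exactly where $\alpha(u)$ actually decreases. You enumerate ``two kinds of events'' that can change $x(u)$ or $x(p_u,u)$: (a) a new child edge of $u$ arrives, and (b) the parent edge $(p_u,u)$ gets disposed. But there is a third event: when a \emph{grandchild} edge $(c,c')$ arrives at a child $c$ of $u$, \cref{alg:FracMCM} may find $(u,c)$ unstable (i.e.\ $x(u,c)\geq 2/3$) and dispose it down to $1/3$. This decreases $x(u)$ (hence $\alpha(u)$) without any new edge touching $u$ and without $(p_u,u)$ being affected. For instance, if $u$ has a single child $c$ with $x(u,c)=2/3$ and a stable parent edge $x(p_u,u)=1/3$, then $\alpha(u)=2/3$; once $c$ gets its first child, $(u,c)$ is disposed to $1/3$ and $\alpha(u)$ drops to $1/3$. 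This also shows that your intermediate claim ``$\alpha(u)$ ends at $1-x(p_u,u)\geq 2/3$ again'' after event (a) does not persist; $\alpha(u)\geq 2/3$ is only guaranteed while $u$ is adjacent to a leaf (that is \cref{lem:nonleaf_2/3}, a separate, stronger statement). As it stands your induction never verifies that $\alpha(u)\geq 1/3$ survives the grandchild event.

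The gap is easy to close --- a child edge, once at $\geq 2/3$, is only ever disposed to exactly $1/3$, never lower --- but the paper's argument sidesteps the whole bookkeeping. It simply fixes the first child edge $(u,v)$ of $u$, notes that $x(u,v)\geq 2/3$ at its arrival, that $x(u,v)$ can only be lowered to $1/3$ thereafter, and concludes $\alpha(u)=\sum_{v'}x(u,v')\geq x(u,v)\geq 1/3$. That one-edge argument is immune to the case you missed. Separately, your worry about the transient state between the disposal step and the rematching step is a non-issue: dual feasibility is only required after each arrival is fully processed, not mid-update.
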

\begin{proof}
    Suppose the first child of $u$ is $v$, then we match at least $2/3$ to $(u,v)$ when it arrives. Then no matter what happens in the future, $x(u,v) \geq 1/3$ always holds. Therefore by \cref{prop:parentonly} we have $\alpha(u) \geq x(u,v) \geq 1/3$.
\end{proof}

\begin{lem}
\label{lem:nonleaf_2/3}
    For each non-leaf vertex $u$, if it is adjacent to at least one leaf, then $\alpha(u) \geq 2/3$.
\end{lem}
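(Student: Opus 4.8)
The plan is to prove the slightly stronger statement that $\alpha(u)\ge 2/3$ for every non-leaf vertex $u$ \emph{except} those whose single child is itself a non-leaf, and then observe that being adjacent to a leaf rules out that exceptional case. Everything rests on two invariants about \cref{alg:FracMCM}, which I would establish first by a single induction over the arrival events. The \emph{discreteness} invariant says that at every moment the matched fraction of every edge lies in $\{0,1/3,2/3,1\}$: the first edge gets fraction $1$; a disposal step only replaces a $\ge 2/3$ fraction by $1/3$; and a newly matched edge $e=(u,v)$ receives $x(e)=1-x(u)$, where $x(u)$ after the disposal step is a sum of values in $\{0,1/3,2/3,1\}$ that is at most $1$, hence a multiple of $1/3$ in $[0,1]$, so $x(e)\in\{0,1/3,2/3,1\}$. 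The second invariant says that once a vertex $w$ has received its first child, $x(p_w,w)\le 1/3$ from then on: the disposal step triggered by $w$'s first child stabilizes $(p_w,w)$ to $1/3$ if it was $\ge 2/3$, and otherwise discreteness already gives $x(p_w,w)\le 1/3$; afterwards $(p_w,w)$ never increases (edges only grow at their own arrival) and is never stabilized again (that needs value $\ge 2/3$).

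From these I would extract two facts about the first two children $v_1,v_2$ of $u$ in order of arrival. First, when $v_1$ arrives $u$'s only incident edge is $(p_u,u)$ (none if $u$ is the root), which after the disposal step has fraction $\le 1/3$, so $v_1$ is matched with $x(u,v_1)=1-x(u)\ge 2/3$; since an edge fraction only ever drops, and only to $1/3$ via a stabilization, $x(u,v_1)\ge 1/3$ forever. Second, when $v_2$ arrives $u$'s only incident edges are $(p_u,u)$ and $(u,v_1)$; after the disposal step the former is $\le 1/3$ by the second invariant and the latter is $\le 1/3$ (it is stabilized there precisely if it was $\ge 2/3$, else discreteness gives $\le 1/3$), so $x(u)\le 2/3$ and $v_2$ is matched with $x(u,v_2)\ge 1/3$, which again persists. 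I would also use \cref{prop:parentonly} to write $\alpha(u)=x(u)-x(p_u,u)=\sum_{c\text{ a child of }u}x(u,c)$.

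Then I would finish by cases. Since $\ell$ is a leaf it has no children, so its unique neighbor is its parent; hence $\ell$ is a child of $u$. If $u$ has at least two children, the two facts give $\alpha(u)\ge x(u,v_1)+x(u,v_2)\ge 1/3+1/3=2/3$. If $u$ has exactly one child, that child is $\ell$, so $\ell=v_1$; the edge $(u,\ell)$ can never have been stabilized, because that requires either a child of $\ell$ to arrive (impossible, $\ell$ is a leaf) or a second child of $u$ (impossible), so $x(u,\ell)$ still equals its matching value, which is $\ge 2/3$ by the first fact, whence $\alpha(u)=x(u,\ell)\ge 2/3$. The step I expect to need the most care is the second fact: the point is that discreteness is exactly what turns ``$<2/3$'' into ``$\le 1/3$'' and thereby forces $x(u)\le 2/3$ right after the disposal step when $v_2$ is processed; once that invariant is in hand the rest is bookkeeping.
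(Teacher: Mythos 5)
Your proof is correct and follows essentially the same route as the paper's: split on whether $u$ has one child or at least two, observe that the first child $(u,v_1)$ is matched $\ge 2/3$ on arrival and can only ever drop to $1/3$, that the second child $(u,v_2)$ is matched $\ge 1/3$ on arrival and stays $\ge 1/3$, and in the single-child case note that $(u,\ell)$ is never stabilized so its initial $\ge 2/3$ fraction persists. You spell out the supporting invariants (discreteness of fractions in $\{0,1/3,2/3,1\}$, and $x(p_w,w)\le 1/3$ from the first child onward) that the paper leaves implicit, and you make explicit the point the paper compresses into a single ``so'' --- namely that a sole leaf-child's edge cannot be stabilized because neither endpoint ever gains another relevant child --- but the underlying argument and the case decomposition are the paper's own.
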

\begin{proof}
    If $u$ has only one child $v$, then before we greedily match $(u,v)$, there is at most one stable edge incident to $u$. Therefore we match at least $2/3$ to $(u,v)$, so $\alpha(u) = x(u,v) \geq 2/3$.

    If $u$ has at least two children, suppose the first two arriving children are $v_1,v_2$. Similar to the previous case, we greedily match at least $2/3$ to $(u,v_1)$, and $x(u,v_1) \geq 1/3$ always holds afterward. Before we greedily match $v_2$, $u$ has at most $2$ stable incident edges, so we match at least $1/3$ to $(u,v_2)$, and $x(u,v_2) \geq 1/3$ always holds afterward. Therefore we have $\alpha(u) \geq x(u,v_1) + x(u,v_2) \geq 2/3$.
\end{proof}

\begin{lem}
\label{mcm_frac_2/3}
    \cref{alg:FracMCM} is $2/3$-competitive.
\end{lem}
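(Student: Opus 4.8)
The plan is to apply the Online Primal Dual Framework via the lemma stated at the end of the Preliminaries: it suffices to exhibit a dual assignment that satisfies reverse weak duality and approximate dual feasibility with ratio $c = 2/3$. Reverse weak duality is immediate from the parent-only rule of \cref{prop:parentonly}: since every matched fraction $x(u,v)$ is charged entirely to $\alpha(u)$ and nothing to $\alpha(v)$, we get $\sum_u \alpha(u) = \sum_{(u,v)\in E} x(u,v) = P$, so in fact $D = P$ with equality. All the work therefore goes into approximate dual feasibility: for every edge $(u,v)$ with $u$ the parent of $v$, we must show $\alpha(u) + \alpha(v) \geq 2/3$.

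The key step is a case analysis on the edge $(u,v)$, leveraging \cref{lem:nonleaf_1/3} and \cref{lem:nonleaf_2/3}. First, if $v$ is a leaf (has no children), then $u$ is a non-leaf vertex adjacent to a leaf, so \cref{lem:nonleaf_2/3} gives $\alpha(u) \geq 2/3$, and since $\alpha(v) \geq 0$ we are done. Second, if $v$ is not a leaf, then both $u$ and $v$ are non-leaf vertices. I would further split on whether $v$ has a leaf child: if $v$ has a leaf child, \cref{lem:nonleaf_2/3} applied to $v$ gives $\alpha(v) \geq 2/3$; symmetrically if $u$ has a leaf child we use \cref{lem:nonleaf_2/3} on $u$. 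The remaining subcase is when neither $u$ nor $v$ has any leaf child — but then I would apply \cref{lem:nonleaf_1/3} to both: $\alpha(u) \geq 1/3$ and $\alpha(v) \geq 1/3$, so $\alpha(u) + \alpha(v) \geq 2/3$. Actually this last bound $\alpha(u) \geq 1/3$, $\alpha(v) \geq 1/3$ holds for any two non-leaf vertices, so the cleanest organization is: if either endpoint of $(u,v)$ is adjacent to a leaf (which includes the case where the other endpoint \emph{is} a leaf), invoke \cref{lem:nonleaf_2/3} on that endpoint to get $\geq 2/3$ from one side alone; otherwise both endpoints are non-leaves not adjacent to leaves, and \cref{lem:nonleaf_1/3} gives $1/3 + 1/3$. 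Either way $\alpha(u) + \alpha(v) \geq 2/3$, establishing approximate dual feasibility.

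Since the primal solution $x$ maintained by \cref{alg:FracMCM} is feasible (the greedy assignment $x(e) \leftarrow 1 - x(u)$ together with the stabilization step keeps $x(u) \leq 1$ and $x(v) \leq 1$ at all times, and the matched fraction of an edge only increases at its arrival), the lemma from the Preliminaries applies and yields that \cref{alg:FracMCM} is $2/3$-competitive. The main (minor) obstacle is just making sure the case split is exhaustive and that the "adjacent to a leaf" hypothesis of \cref{lem:nonleaf_2/3} is correctly triggered — in particular noticing that a leaf child $v$ of $u$ makes $u$ adjacent to a leaf, which is exactly the hypothesis needed. There is no heavy computation here; the substantive content was already absorbed into Lemmas \ref{lem:nonleaf_1/3} and \ref{lem:nonleaf_2/3}.
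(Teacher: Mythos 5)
Your proposal is correct and follows essentially the same route as the paper's proof: reverse weak duality holds with equality by the parent-only rule, and approximate dual feasibility follows from the two-case split (leaf child $v$ invokes \cref{lem:nonleaf_2/3} on $u$; non-leaf $v$ invokes \cref{lem:nonleaf_1/3} on both endpoints). The extra sub-casing you briefly explore on whether $v$ has a leaf child is superfluous, as you yourself note, and the final organization you settle on matches the paper's exactly.
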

\begin{proof}
    Consider the two constraints of the Online Primal Dual Framework:
    \begin{itemize}
        \item Reverse Weak Duality: $P = D$ is satisfied by \cref{prop:parentonly}.
        
        \item Approximate dual feasibility: for each edge $(u,v)$, if $v$ is not a leaf, then by \cref{lem:nonleaf_1/3}, $\alpha(u) + \alpha(v) \geq 1/3 + 1/3 = 2/3$; otherwise if $v$ is a leaf, then by \cref{lem:nonleaf_2/3}, $\alpha(u) + \alpha(v) \geq 2/3 + 0 = 2/3$.
    \end{itemize}
    Therefore \cref{alg:FracMCM} is $2/3$-competitive.
\end{proof}

Finally, we can conclude the proof of \Cref{thm:grwoingtreealg}.
\begin{proof}[Proof of \Cref{thm:grwoingtreealg}]
    We apply the rounding from \cref{sec:framework} to \cref{alg:FracMCM}, then by \cref{mcm_frac_2/3} and \cref{lem:rounding1} the resulting randomized algorithm is $2/3$-competitive.
\end{proof}

\subsection{Hard Instance for MCM on Growing Trees}
\label{sec:hard}

\begin{proof} [Proof of \Cref{thm:growingtreehard}]

We construct a family of input $\{T_n\mid  n\in \mathbb{Z}^+\}$. The structure and arrival order of $T_n$ is as follows:

\begin{itemize}
    \item Vertex set: $\{u_i\}_{i=1}^{n+1} \cup \{v_i\}_{i=1}^{n}$.
    \item Edge set: $\{(u_i,v_i)\}_{i=1}^{n}\cup \{(u_i,u_{i+1}\}_{i=1}^{n}$.
    \item Arrival order: $(u_1,v_1), (u_1,u_2), (u_2,v_2), (u_2,v_3)\cdots (u_n,v_n), (u_n, u_{n+1})$.
\end{itemize}

Suppose after $(u_i,v_i)$ and $(u_i,u_{i+1})$ have both arrived, the fractional algorithm matches $x_i$ and $y_i$ to $(u_i,u_{i+1})$ and $(u_i,v_i)$ respectively. Here we can assume $x_i \geq y_i$, in other words, the adversary chooses to grow the tree on the child of $u_i$ that is matched a larger fraction. An example with $n = 3$ is illustrated in \cref{mcmhard}.

Besides, when the two children of $u_{i+1}$ arrive, it is not beneficial to dispose any matched fraction from $(u_{i},u_{i+1})$, because no subsequent edge will arrive at $u_{i}$. We assume that no \emph{reasonable} algorithm will do this, so in the end $(u_{i}, u_{i+1})$ still has matched fraction $x_{i}$. Only when the second child of $u_i$ arrives, a \emph{reasonable} algorithm will dispose some fraction matched to the first child. Next we prove that the competitive ratio of any reasonable fractional algorithm on $T_n$ is at most $\frac{2}{3} + \frac{2}{3n}$.

\begin{figure}[ht]
    \centering{
        \begin{tikzpicture}[sibling distance = 3cm]
            \node[node] {$u_1$}
                child {node[node] {$v_1$}
                    edge from parent node[above left] {$y_1$}}
                child {node[node] {$u_2$}
                  child {node[node] {$v_2$}
                    edge from parent node[above left] {$y_2$}}
                  child {node[node] {$u_3$}
                    child {node[node] {$v_3$}
                        edge from parent node[above left] {$y_3$}}
                    child {node[node] {$u_4$}
                        edge from parent node[above right] {$x_3$}}
                    edge from parent node[above right] {$x_2$}
                  }
                  edge from parent node[above right] {$x_1$}
                };
        \end{tikzpicture}
    } 
    \caption{Hard Instance for MCM}
    \label{mcmhard}
\end{figure}
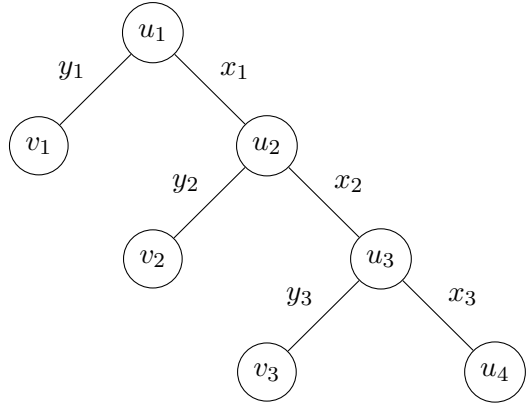

Any fractional algorithm must satisfy the following constraints: for each $u_i, i =2, 3, \cdots n$, the sum of matched fractions of the three incident edges does not exceed $1$, therefore
$$
x_{i-1}+x_i+y_i \leq 1 ~.
$$

Summing both sides of all constraints over $i=2,3,\cdots n$, we have
$$
x_1+ 2\sum_{i=2}^{n-1}x_i + x_n + \sum_{i=2}^{n}y_i \leq n-1 ~.
$$

The optimal solution to this instance is to match all edge $(u_1,v_1)\cdots (u_n, v_n)$, so $\opt = n$. The total matched fraction of the algorithm is $\alg = \sum_{i=1}^{n}(x_i+y_i)$. Since we assume $x_i\geq y_i$ for all $i\in[n]$, $\sum_{i=1}^{n}x_i \geq \frac{1}{2}\alg$, then
$$
\begin{aligned}
\frac{3}{2}\alg \leq &~\sum_{i=1}^{n}(2x_i+y_i)\\
\leq &~ n-1 + (x_1+y_1)+x_n\\
\leq &~ n-1 + 1 + 1\\
=&~ \opt + 1 ~.
\end{aligned}
$$
Therefore, we can upper bound the competitive ratio by $\frac{\alg}{\opt} \leq \frac{2}{3}+\frac{2}{3n}$. Thus when $n \to \infty$, we can conclude the proof of \Cref{thm:growingtreehard}.

\end{proof}

\section{Online MCM on Forests}
\label{sec:forest}

\noindent In this section, we present a $5/8$-competitive fractional algorithm for MCM with free disposal on Forests, which satisfies the conditions in \Cref{lem:rounding2}.

This algorithm is based on the $2/3$-competitive algorithm for Growing Trees in the previous section, which works fine if no edges that merge two trees arrive. But problems start to occur when an arriving edge merges two trees. For example, consider two leaves $u$ and $v$, following \Cref{alg:FracMCM} and the parent-only dual assignment rule, they can be both $2/3$-matched with $\alpha(u)=\alpha(v)=0$. However, if an edge $e = (u,v)$ arrives, from the dual perspective, we cannot afford to cover $e$ because the primal solution can increase by at most $1/3$. To overcome this challenge, we introduce a surplus dual assignment rule on each tree, which means we mainatin 
$$
P = \sum_{(u,v)\in E} x(u,v) \geq \sum_{u\in V} \alpha(u) + C = D + C ~. 
$$
As a result, we target to prove $\alg \geq c \cdot \opt + C$. It can be done by carefully handling the root vertex because of the special property that the root does not have a parent. In our $5/8$-competitive algorithm, we always maintain a $5/8$-approximate dual solution on each tree, with a $C=2/8$ amount of surplus as follows. 

\begin{inv}[Surplus Invariant]
\label{inv:2/8surplus}
    For each independent tree, we have a condition stronger than Reverse Weak Duality: $P \geq D+2/8$. 
\end{inv}

Note that the surplus does not affect the competitive ratio on each tree because it is only a constant. However, when we merge trees, we have two copies of surplus on each tree but only need to maintain one copy after merging. Therefore, we can use one copy of the surplus to ensure the approximate dual feasibility of the arriving edge.

\paragraph{Types of Edges.} Before introducing our algorithm, we classify edges into three types.
    A \emph{Growing Edge} is an edge that "grows" a leaf to a tree. For a growing edge $(u,v)$, unless otherwise specified, we assume that $u$ is already in the graph and $v$ is a new vertex, and call $u$ the parent of $v$. 
    A \emph{Non-Trivial Merging Edge} is an edge that merges two trees (with at least two edges). Conversely, if  one of the trees only consists of one edge, the merging edge is called a
    \emph{Trivial Merging Edge}. Note that we will view the trivial merging edges as growing edges but with a specific rule. 
    
\paragraph{Types of Non-leaf Vertices.} Corresponding to the edge types, we classify non-leaf vertices into two types. A non-leaf vertex is \emph{type A} if all its incident edges with non-zero matched fractions are growing edges. Otherwise, a non-leaf vertex is \emph{type B} if at least one of its incident (non-trivial) merging edges has a non-zero matched fraction. Our matching strategy and dual assignment rule always keep the type of a non-leaf vertex unchanged. 

\begin{inv}
\label{inv:fixtype}
    A non-leaf vertex is fixed to be type A or type B the first time it becomes a non-leaf vertex. 
\end{inv}

\subsection{Matching Rule for a Growing Edge to a Type A Vertex or a Leaf}
\label{sec:typeA}

\noindent In this section, we introduce how we match a growing edge to a type A vertex or a Leaf, which is almost aligned with \Cref{alg:FracMCM}. To recall, the difference is that we need to carefully match edges incident to the root to gain a $2/8$-surplus for dual assignments. In particular, here, the notion of "root" is slightly different from the Growing Trees setting. 

For convenience, the matched fractions of each edge is always a multiple of $1/8$, and we will always use $8$ as the denominator when describing matched fractions. We call a growing edge $e$ stable if $x(e) \leq 3/8$ and unstable if $x(e) \geq 4/8$. When a growing edge $(u,v)$ arrives, we make each unstable edge adjacent to $u$ (the parent) stable by lowering its matched fraction to $3/8$. Notice that when $u$ is the root, it may have two incident $4/8$-matched edges. We present details in \Cref{alg:FracMCM2}.

\vspace{5pt}
\begin{algorithm}[ht]
    \caption{Matching Rule for A Growing Edge to A Type A Vertex or A Leaf}
    \label{alg:FracMCM2}
    When an growing edge $e = (u,v)$ arrives, and the parent $u$ is currently type A or a Leaf\;
    \uIf {$e$ is not incident to any previous edge} {
        $x(e) \leftarrow 1$\;
    } \uElseIf {$e$ is incident to $e'$, and $e'$ is not incident to any previous edge} {
        $x(e') \leftarrow 4/8$\;
        $x(e) \leftarrow 4/8$\;
        regard the common vertex of $e$ and $e'$ as a "root".
    } \Else {
        \For {each growing edge $e' = (u,z)$ or $e' = (z,u)$ where $z$ is also type A, such that $x(e') \geq 4/8$} {
            $x(e') \leftarrow 3/8$\;
        }
        $x(e) \leftarrow 1-x(u)$\;
    }
    Update dual assignments using \Cref{prop:parent2}
\end{algorithm}

Similar to the analysis of \Cref{alg:FracMCM}, we use a parent-only dual assignment rule (\Cref{prop:parent2}) for each growing edge (also applies if $u$ is type B). To precisely formalize the surplus, we need a specific rule for a root. 

\begin{prop}[Parent-only Dual Assignment Rule]
\label{prop:parent2}
The dual assignment rule for growing edges are:
\begin{itemize}
    \item For each growing edge $(u,v)$, where $u$ is the parent, $\alpha(u)$ gets $x(u,v)$ from $(u,v)$ and $\alpha(v)$ gets nothing. 
    \item For the root vertex $r$, which must be type A, we remove $2/8$ from its dual as the surplus. That is, $\alpha(r) = \sum_{u:(r,u)\in E} x(r,u) - 2/8$.
\end{itemize}    
\end{prop}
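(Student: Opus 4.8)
The statement of Proposition~\ref{prop:parent2} is really a \emph{definition} of the growing-edge part of the dual, so what I would actually verify is that this rule is self-consistent with the primal--dual bookkeeping: (i) every dual value it produces is non-negative, and (ii) on a tree that has only been modified by \Cref{alg:FracMCM2} (so that, after folding trivial merges into growing edges, every edge of the tree is a growing edge), the rule realizes exactly $P = D + 2/8$, which is the growing-edge contribution to \Cref{inv:2/8surplus}. The plan is to prove both by a one-line counting argument: each growing edge $(u,v)$ deposits its \emph{entire current} matched fraction $x(u,v)$ into a single dual variable, the parent's $\alpha(u)$, and no edge feeds two duals; summing over all edges of such a tree gives $\sum_u \alpha(u) = \big(\sum_e x(e)\big) - 2/8 = P - 2/8$, which is (ii). The same identity shows that disposals are absorbed for free: when \Cref{alg:FracMCM2} lowers an unstable incident edge of $u$ from $4/8$ to $3/8$ inside its ``make stable'' loop, $\alpha(u)$ decreases by exactly the disposed amount, so reverse weak duality (with surplus) holds after every elementary step, not merely at the end of the round.

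For (i), non-negativity is immediate at every non-root vertex, where $\alpha(\cdot)$ is a sum of non-negative fractions. The only place to check is the root $r$, from which we subtract $2/8$. Here I would record the side invariant that the two edges that arrived first at $r$ always carry matched fraction at least $3/8$: this holds the moment $r$ is created, since the middle branch of \Cref{alg:FracMCM2} sets both edges of the new two-edge tree to $4/8$; and it is preserved afterwards, because the only way \Cref{alg:FracMCM2} ever changes an incident edge of $r$ is to pull an \emph{unstable} edge ($\ge 4/8$) down to $3/8$, and never lower (edges to type~B children are not touched at all). Hence $\alpha(r) \ge 3/8 + 3/8 - 2/8 = 4/8 > 0$, and I also expect to reuse this ``$4/8$ of slack at the root'' later as exactly the budget that pays for a merging edge incident to $r$.

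Two caveats I would make explicit. First, this proposition only accounts for growing edges; a type~B vertex additionally collects dual from its incident merging edges through the uncertain dual assignment rule introduced later, so the full proof of \Cref{inv:2/8surplus} and of approximate feasibility for every edge lives in the subsequent lemmas of this section and is out of scope here. Second, the one genuinely delicate point is precisely that \Cref{alg:FracMCM2} performs disposal while building the tree; the reason it is harmless is structural --- binding $\alpha(u)$ to the \emph{current} values $x(u,v)$ rather than to match-time values makes the dual track every later decrease automatically. What remains is routine: checking against the three branches of \Cref{alg:FracMCM2} that $x(u) \le 1$ always holds, so that the assignment $x(e) \leftarrow 1 - x(u)$ in the last branch is legal; this is a short induction in which the ``first two edges stay $\ge 3/8$'' invariant is again the key ingredient.
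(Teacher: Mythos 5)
The statement labeled a ``proposition'' in the paper is in fact a \emph{definition} of the dual assignment rule for growing edges, and the paper offers no proof of it; you are right to read it that way, and the two sanity checks you run---that the rule realizes $\sum_u \alpha(u) = P - 2/8$ on a tree built only of growing edges, and that all duals are non-negative, including the root's after subtracting $2/8$---are exactly the right content. Your one-line counting identity is correct (each growing edge feeds a single dual), and your scoping remark, that the full verification of \Cref{inv:2/8surplus} and of approximate feasibility is deferred to the lemmas that follow (the paper does this in the proof of \Cref{lem:safedisposal} and the subsequent induction lemma), matches how the paper actually discharges the obligations.

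Two small corrections. First, the speculation that the root's $4/8$ of slack will ``pay for a merging edge incident to $r$'' does not materialize: by \Cref{inv:fixtype} a type~A vertex (hence the root) never acquires a nonzero-matched merging edge, so feasibility of a merging edge incident to $r$ is paid for by the two endpoints' own duals and the $2/8$ surplus, never by the root's extra slack. Second, your side invariant that the first two root edges stay $\ge 3/8$ must also survive \Cref{alg:FracMCMmerge}, which can dispose from a leaf child of the root down to $3/8$; you explicitly scope this out by restricting to trees modified only by \Cref{alg:FracMCM2}, which is fine for the growing-edge part of the definition, but the same check for the merging algorithm is where the paper's later lemmas take over. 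Also note the loop guard ``$z$ is also type A'' in \Cref{alg:FracMCM2} is best read as ``$z$ is not type~B'' (leaves have no type), which is how you read it; otherwise the loop would never fire.
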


The motivation of the matching strategy and the dual assignment rule is to maintain the following invariant for Type A vertices.

\begin{inv} [Type A Invariant]
\label{inv:typeA}
For each type A vertex $u$, $\alpha(u) \geq 3/8$, and if $u$ is adjacent to at least one leaf, then $\alpha(u) \geq 5/8$.
\end{inv}

Note that \Cref{inv:typeA} is already sufficient to prove the $5/8$-competitive ratio if assuming the Growing Trees setting.

\begin{prop}
\label{inv:stable}
     A growing edge $e$ (both to type A and to type B) is stable if $x(e) \leq 3/8$. The following invariants corresponding to the stable property should be kept.
     \begin{enumerate}
         \item If a growing edge $e$ is stable, we will never decrease $x(e)$ in the future. 
         \item If a growing edge $e$ is unstable, we may decrease $x(e)$ to make it stable. But we always have $x(e)\geq 3/8$. 
     \end{enumerate}
\end{prop}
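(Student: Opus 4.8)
The plan is to establish \Cref{inv:stable} by induction over the algorithm's operations, checking every event that can change the matched fraction of a growing edge. First I would list these events for a growing edge $e=(u,v)$ with parent $u$: (i) the arrival of $e$ itself, where $x(e)$ is set according to \Cref{alg:FracMCM2} (to $1$ if $e$ is isolated, to $4/8$ if $e$ is a trivial merging edge, and to $1-x(u)$ otherwise) or according to the analogous rule for a growing edge to a type B vertex; (ii) the arrival of a later growing edge at $u$, which triggers the for-loop of \Cref{alg:FracMCM2} (or its type B analogue) and lowers every \emph{unstable} growing edge incident to $u$ down to $3/8$; and (iii) the absorption of a one-edge tree by a trivial merging edge, which lowers that tree's unique (unstable) edge from $1$ to $4/8$. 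The structural fact driving everything is that a growing edge is only ever \emph{decreased} when it is currently unstable, i.e.\ $x(e)\ge 4/8$, and the new value is always at least $3/8$ (in fact exactly $3/8$, except in case (iii)).

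Given this, item~1 is immediate: a stable growing edge has $x(e)\le 3/8<4/8$, so it is never selected by any make-stable step, and the algorithm invokes free disposal on a growing edge only through such steps; hence $x(e)$ can never decrease once $e$ is stable. Item~2 is equally direct: whenever a step decreases a growing edge it leaves the fraction at $3/8$ or $4/8$, both of which are $\ge 3/8$, and immediately before such a step the edge is unstable with $x(e)\ge 4/8\ge 3/8$; so $x(e)\ge 3/8$ persists. The base case is the instant $e$ is created, when the claim is vacuous because $e$ has just been assigned its value and nothing has been decreased yet.

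The only points that need real care are those where several $4/8$-matched edges can coexist. By \Cref{prop:parent2} the designated root $r$ may carry two incident $4/8$-matched growing edges at once; I would check that when a further growing edge arrives at $r$ the for-loop lowers both of them to $3/8$ (so $x(r)$ goes from $1$ down to $6/8$ and the new edge receives $2/8$), which is fully consistent with both items. More generally, completing the proof requires verifying that none of the matching rules introduced in the subsequent sections --- for growing edges to type B vertices and for non-trivial merging edges --- ever pushes a growing edge strictly below $3/8$; this is exactly the discipline those rules are built around, and the main reason \Cref{inv:stable} is stated up front as an invariant to be maintained. So the anticipated obstacle is not any single hard computation but rather keeping this bookkeeping globally consistent across all of the algorithm's matching rules.
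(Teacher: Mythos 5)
The paper never writes out a proof of this proposition; it is framed as a design invariant that the subsequent algorithms (\Cref{alg:FracMCM2}, \Cref{alg:FracMCMgotoB}, \Cref{alg:FracMCMmerge}) are built to preserve, and its verification is spread implicitly across the lemmas that follow. Your approach --- enumerate every event that can change $x(e)$ for a growing edge and check that decreases only hit unstable edges and never go below $3/8$ --- is the right way to make that implicit argument explicit, and the conclusion you draw from the enumeration is correct.

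That said, your explicit list (i)--(iii) is incomplete in a way you only partially patch at the end. You name the for-loop of \Cref{alg:FracMCM2} and the absorption of a one-edge tree, but you omit the disposal inside \Cref{alg:FracMCMmerge}: when a merging edge arrives at a leaf $u$ with $x(u)>3/8$, the algorithm sets $x(u)\gets 3/8$, which is a free-disposal decrease of the growing edge $(p_u,u)$. Your line ``the algorithm invokes free disposal on a growing edge only through such [make-stable] steps'' is therefore too narrow as written; the merging rule is a separate site of disposal and should sit in the enumeration alongside (ii) and (iii), not be deferred to a closing remark. The verification is easy once it is included --- the edge is unstable ($x(u)>3/8$) and is lowered to exactly $3/8$, same pattern as the other cases --- but an invariant proof is only as strong as the completeness of its case analysis. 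One smaller inaccuracy: there is no ``type B analogue'' of the for-loop; \Cref{alg:FracMCMgotoB} never disposes from a growing edge (by \Cref{prop:typeBstable} and \Cref{inv:stableB} all growing edges at a type B vertex are already stable, so there is nothing to lower), and the only disposal it performs is the $1/8$ removed from a merging edge carrying an uncertain assignment, which is outside the scope of this proposition.
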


\begin{lem}
\label{lem:safedisposal}
    A growing edge $(u,v)$ to a type A vertex $u$ can only be unstable when $v$ is a leaf. When $v$ becomes a non-leaf, decreasing $(u,v)$ to $3/8$ (make it stable) can inductively mainatin \Cref{inv:typeA} for $u$. 
\end{lem}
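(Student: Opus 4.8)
The plan is to verify both parts of \Cref{inv:typeA} for the vertex $u$ after the event where its child $v$ becomes a non-leaf (i.e.\ when $v$'s first child arrives), under the operation of lowering $x(u,v)$ from a possibly unstable value down to $3/8$. Because this is an inductive argument, I would assume \Cref{inv:typeA} held for $u$ immediately before this event, and also rely on \Cref{inv:stable}: every other growing edge incident to $u$ stays at its current value (stable ones never decrease; unstable ones, if later touched, stay $\geq 3/8$), so the only coordinate of $\alpha(u)=\sum_{w:(u,w)\in E}x(u,w)$ (minus the $2/8$ surplus if $u$ is the root) that changes in this step is $x(u,v)$.

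First I would establish the claim's first sentence: a growing edge $(u,v)$ to a type A vertex $u$ can only be unstable while $v$ is a leaf. The point is that $x(u,v)$ is set to an unstable value ($\geq 4/8$) only at the moment $(u,v)$ itself arrives (by \Cref{alg:FracMCM2}, line $x(e)\leftarrow 1-x(u)$, or the root case $4/8$), and from that moment on the only way $x(u,v)$ changes is a decrease triggered by a \emph{later} growing edge arriving at $u$ — but such an arrival runs the for-loop that makes every unstable growing edge at $u$ stable, in particular it would make $(u,v)$ stable. So as long as $(u,v)$ remains unstable, no further edge has arrived at $u$; in particular no child of $v$ has arrived, so $v$ is still a leaf. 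This also means that when $v$'s first child finally arrives, the fraction we are lowering, call it $x_{t-1}(u,v)=\lambda$, satisfies $\lambda\ge 4/8$ only if nothing at $u$ happened in between, and in any case $\lambda\ge 3/8$ by \Cref{inv:stable}; lowering it to $3/8$ decreases $\alpha(u)$ by $\lambda-3/8$.

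Next I would handle the two bounds. For $\alpha(u)\geq 3/8$ after the decrease: by \Cref{lem:nonleaf_1/3}-style reasoning adapted here, $u$'s first child $w$ was matched with $x(u,w)\ge 4/8$ at its arrival, hence $x(u,w)\ge 3/8$ permanently by \Cref{inv:stable}; if $v$ itself was that first child, then after lowering $x(u,v)=3/8$ we still get $\alpha(u)\ge 3/8$ (and if $u$ is the root I'd use the surplus bookkeeping: the root's first child gets $4/8$, and as the root may carry two $4/8$ edges the raw sum is $\ge 5/8$ before subtracting $2/8$, giving $\ge 3/8$). If $v$ is not $u$'s first child, then the first child contributes $\ge 3/8$ unchanged, so again $\alpha(u)\ge 3/8$. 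For the second bound, $\alpha(u)\ge 5/8$ when $u$ is adjacent to at least one leaf: after $v$ becomes a non-leaf, if $u$ still has some leaf neighbor then that leaf edge, at its arrival, was matched with the greedily-maximum fraction $1-x(u)$; I would argue as in \Cref{lem:nonleaf_2/3} that $u$ then has at least two stable-or-better incident growing edges each contributing $\ge 3/8$ except possibly one contributing even more, so the sum is $\ge 5/8$ — and crucially this count is not disturbed by lowering $x(u,v)$ to $3/8$, since $3/8$ is exactly the stable floor, so the worst case (two edges at $3/8$, plus the leaf edge's surplus) is preserved. The main obstacle I anticipate is the root case: one must carefully track that the $2/8$ surplus subtracted in \Cref{prop:parent2} is always "paid for" by the root's extra ability to hold two $4/8$-matched growing edges, and that lowering $x(u,v)$ to $3/8$ never eats into this surplus below the required thresholds — this requires a small case analysis on whether $v$ was one of the root's first two children and whether the other is a leaf.
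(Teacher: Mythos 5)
Your sketch identifies the right landmarks---the root/non-root split, the ``two incident child edges at $\geq 3/8$'' counting argument, the role of the $2/8$ surplus---but it leaves the decisive root subcase unresolved, and this is precisely where the paper's proof does real work. When $u$ is the root and, after $v$ becomes a non-leaf, $u$ is still adjacent to a leaf $v'$, \Cref{inv:typeA} demands $\alpha(u)\geq 5/8$. Your generic counting (``at least two growing edges each $\geq 3/8$'') yields a raw sum of $\geq 6/8$, and after subtracting the $2/8$ surplus from \Cref{prop:parent2} you are left with only $\geq 4/8$, which is short of the required $5/8$. The paper closes this by observing that a root has exactly two children when it holds an unstable edge, both initially $4/8$-matched, so if the other child $v'$ is still a leaf then $x(u,v')=4/8$ (not merely $\geq 3/8$), giving $\alpha(u)=3/8+4/8-2/8=5/8$; and if instead $v'$ is already a non-leaf (so $x(u,v')=3/8$ by an earlier invocation of the same lemma), then $u$ has no leaf neighbor after the disposal, so the weaker bound $\alpha(u)=4/8\geq 3/8$ suffices. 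You explicitly flag that ``a small case analysis'' on exactly this dichotomy is needed and then stop, so this is a gap rather than a proven step.

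Two smaller issues. First, your argument for the lemma's first sentence says an unstable $(u,v)$ would be made stable by the for-loop when ``a later growing edge arriv[es] at $u$,'' but the event the lemma actually hinges on is an edge arriving \emph{at $v$}: when $v$'s first child arrives, \Cref{alg:FracMCM2} runs its for-loop over growing edges incident to $v$, and $(u,v)$ (of the form $(z,v)$ with $z=u$ type A) is the edge that gets stabilized. The argument should be phrased so that it covers both triggering events. Second, the parenthetical ``the raw sum is $\geq 5/8$ before subtracting $2/8$'' is arithmetically off (the raw sum is $6/8$ or $7/8$ in the relevant configurations); the conclusion $\alpha(u)\geq 3/8$ still holds, but the intermediate numbers should be fixed. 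The paper's four-way case analysis (root with other child a non-leaf; root with other child a $4/8$-leaf; non-root with $v$ the only child; non-root with another child $v'$) is both complete and tight, and I would recommend adopting it rather than the informal counting argument.
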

\begin{proof}
  If $u$ is a root, then $u$ has an unstable incident edge if $u$ has only two children. Recall that $u$ has two $4/8$-matched incident leaves at the beginning. There are two cases:
  \begin{itemize}
      \item If $v$ is the only $4/8$-matched leaf incident to $u$, then the other child $v'$ is already a non-leaf and $x(u,v') = 3/8$. Now $v$ also becomes a non-leaf, and the disposal step updates $x(u,v) \leftarrow 3/8$, therefore $\alpha(u) = x(u,v) + x(u,v') - 2/8 = 4/8 > 3/8$, which satisfies \Cref{inv:typeA} since $u$ is not incident to any leaf after $v$ becomes a non-leaf.
      \item If $u$ has two $4/8$-matched incident leaves $v,v'$, then after updating $x(u,v) \leftarrow 3/8$, we have $\alpha(u) = x(u,v) + x(u,v') - 2/8 = 5/8$, which satisfies \Cref{inv:typeA}.
  \end{itemize}

  If $u$ is not a root, there are also two cases, depending on whether $v$ is the only child of $u$:
  \begin{itemize}
      \item If $v$ is the only child of $u$, then after $v$ becomes non-leaf, we dispose $x(u,v)$ to $3/8$. By the parent-only dual assignment rule, we have $\alpha(u) \geq 3/8$, so \Cref{inv:typeA} is satisfied because $u$ is no longer adjacent to any leaf.
      \item If $u$ has another child $v'$, then $v'$ must arrive before $v$ (otherwise $(u,v)$ cannot be unstable). Further, $(u,v')$ must be unstable after $v'$ has just arrived. Otherwise, we couldn't have matched any fraction to $(u,v)$. Therefore after $v$ arrives, $x(u,v') = 3/8$. Thus, after $v$ becomes a non-leaf and $x(u,v) \leftarrow 3/8$, by the parent-only dual assignment rule, we have $\alpha(u) = 6/8$, which satisfies \Cref{inv:typeA}.  \qedhere
  \end{itemize}
\end{proof}

\begin{lem}
    After a growing edge to a type A vertex arrives, by \Cref{alg:FracMCM2} and the dual assignment rule, \Cref{inv:typeA} and \Cref{inv:2/8surplus} hold inductively. 
\end{lem}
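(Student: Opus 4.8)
The statement to prove is that after a growing edge to a type A vertex arrives, Algorithm~\ref{alg:FracMCM2} together with the parent-only dual assignment rule inductively maintains both the Type A Invariant (\Cref{inv:typeA}) and the Surplus Invariant (\Cref{inv:2/8surplus}). My plan is to do a case analysis on the three branches of \Cref{alg:FracMCM2}, tracking how $\alpha$ changes for the parent $u$, for any edge made stable, and for the surplus $C$ on the affected tree.

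First I would dispose of the two easy branches. If $e$ is an isolated edge (first \texttt{if}), we set $x(e)=1$; the edge is its own tree, we designate one endpoint $r$ as root with $\alpha(r)=1-2/8=6/8$, the other endpoint a leaf with $\alpha=0$, so $P=1=D+2/8$ and \Cref{inv:typeA} holds for $r$ (it is adjacent to a leaf and $6/8\ge 5/8$). If $e$ is incident only to an isolated edge $e'$ (second \texttt{elseif}), we split $x(e')=x(e)=4/8$, make their common vertex the root $r$, so $\alpha(r) = 4/8+4/8-2/8 = 6/8 \ge 5/8$ (and $r$ is adjacent to two leaves), and the two leaf endpoints get $\alpha=0$; here $P = 8/8$, $D = 6/8$, so $P = D+2/8$ — the surplus is exactly established. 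In both these branches, since the relevant vertex is a brand-new non-leaf, \Cref{inv:fixtype} is respected (it becomes type A), and no previously satisfied invariant on any other vertex is touched.

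The substantive case is the \texttt{else} branch, where $u$ already has incident edges. Here I would argue in two sub-steps. \emph{Step 1 (stability / surplus preservation):} the disposal loop only lowers unstable growing edges $(u,z)$ with $z$ also type A down to $3/8$; I must check this keeps the surplus and does not violate \Cref{inv:typeA} at the other endpoint $z$. The key observation — which is exactly \Cref{lem:safedisposal} — is that such an edge $(u,z)$ can be unstable only while $z$ is a leaf, and lowering it to $3/8$ preserves $z$'s invariant precisely because $z$ was a leaf (so it had no constraint), while on $u$'s side we are appealing to the fact that $u$ keeps $\alpha(u)\ge 3/8$ afterward via its guaranteed stable fraction (\Cref{lem:nonleaf_1/3}'s analogue, now with denominator $8$: the first child of $u$ is matched $\ge 4/8$ on arrival, leaving $x\ge 3/8$ forever). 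Wait — I should be careful: the disposal here is driven by $u$'s own new edge, and $z$ is a \emph{child} leaf of $u$, so it is $(u,z)$ being lowered, and \Cref{lem:safedisposal} covers this. \emph{Step 2 ($\alpha(u)$ after greedily matching $e$):} after disposal, $x(e) \leftarrow 1-x(u)$, so $x(u)$ is restored to $1$ and $\alpha(u) = x(u) - x(p_u,u) = 1 - x(p_u,u)$ (or $1-2/8$ if $u$ is the root). If $u$ is the root, $\alpha(u) = 6/8 \ge 5/8$. If $u$ is a non-root non-leaf: since $u$ already had an incident growing edge before $e$ arrived, $u$ has at least one child; by the stability argument $x$ on $u$'s first child is $\ge 3/8$ forever, giving $\alpha(u) \ge x(p_u,u) + (\text{child fraction}) - x(p_u,u) \ge 3/8$; and if $u$ is additionally adjacent to a leaf I need $\alpha(u)\ge 5/8$, which follows by reproducing the argument of \Cref{lem:nonleaf_2/3} with the $1/8$-granularity — either $u$ has a single child (so $x(u,\text{child})\ge 1-x(p_u,u)$, and the leaf it's adjacent to is that child, giving $\alpha(u) = x(u,\text{child}) \ge 5/8$ when $p_u$ is matched $\le 3/8$... here I'd want to also invoke that $x(p_u,u)\le 3/8$ because $(p_u,u)$ became stable when $u$ stopped being a leaf) or $u$ has $\ge 2$ children each contributing $\ge 3/8$, minus at most $x(p_u,u)\le 3/8$, i.e. $\alpha(u)\ge 6/8-3/8 = 3/8$ — hmm, that only gives $3/8$, so I'd need the tighter bound that two stable-guaranteed children give $\ge 4/8$ actually, matching the $2/3$-analysis more carefully. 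I would lift the exact granularity bookkeeping from Lemmas~\ref{lem:nonleaf_1/3}--\ref{lem:nonleaf_2/3}, scaled to eighths.

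\emph{Main obstacle.} The delicate point is the bookkeeping at the root and at type-A vertices that are adjacent to a leaf: I must simultaneously ensure (i) the $2/8$ surplus is never spent by a disposal on this tree, (ii) $\alpha(u)$ clears the $5/8$ bar whenever $u$ touches a leaf, which requires knowing that the parent edge $x(p_u,u)$ is stable ($\le 3/8$) at the moment $u$ first became a non-leaf — this is where \Cref{inv:stable} and the matching rule's "make unstable incident edges stable on arrival of a new growing edge" clause must be invoked in the right temporal order. I expect the cleanest way to organize it is to first prove a standalone sub-claim: \emph{whenever a type A vertex $u$ is adjacent to a leaf, $u$ has at least two incident edges whose matched fractions are each $\ge 3/8$, or a single incident edge of fraction $\ge 5/8$} — and then derive \Cref{inv:typeA} and the surplus from it together with \Cref{lem:safedisposal}, which already handles every disposal that touches a type A vertex.
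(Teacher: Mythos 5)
Your handling of the first two branches of \cref{alg:FracMCM2} is fine and matches the paper's base case, and you correctly identify that \cref{lem:safedisposal} is what governs the disposal step. But the substantive part of the argument — showing $\alpha(u)\ge 5/8$ after matching $e$ — is exactly where your proposal stalls, and that is the part the paper dispatches in one line.

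The clean observation you are missing is that the greedy step $x(e)\leftarrow 1-x(u)$ always restores $x(u)=1$ right after $e$ arrives, and the disposal loop (together with \cref{inv:stableB} when $p_u$ is type B) guarantees $x(p_u,u)\le 3/8$ at that moment. The parent-only rule (\cref{prop:parent2}) then gives $\alpha(u)=x(u)-x(p_u,u)=1-x(p_u,u)\ge 5/8$ with no case analysis on how many children $u$ has, and no need to ``lift the granularity bookkeeping'' from \cref{lem:nonleaf_1/3} or \cref{lem:nonleaf_2/3}. Your attempt to re-derive bounds child by child produces the $3/8$ figure you flag as problematic precisely because you are summing per-child lower bounds instead of using the exact total $x(u)=1$. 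The surplus part is likewise one line in the paper: every $x$-update in \cref{alg:FracMCM2} is mirrored exactly by the same update to $\alpha(u)$, so $P-D$ is unchanged on the tree; you gesture at this but do not state it as the closing argument for \cref{inv:2/8surplus} in the main branch.

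A second, smaller issue: you describe the disposed edge as $(u,z)$ with $z$ a \emph{child leaf} of $u$, but a leaf has no type, so it cannot satisfy the loop's ``$z$ is also type A'' test. The disposal the paper relies on is of $(p_u,u)$, which is unstable exactly when $u$ was a leaf until this arrival; \cref{lem:safedisposal} is then invoked with $u$ playing the role of the child that becomes non-leaf, preserving \cref{inv:typeA} for $p_u$. Your phrasing reverses the parent/child roles, and the ``$z$ was a leaf so it had no constraint'' justification would not apply to $p_u$, which does carry a type-A invariant that must be re-verified.
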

\begin{proof}
    The base case is when a tree has only two edges, then for the root $u$ we have $\alpha(u) = x(u) - 2/8 = 6/8$.

    Now suppose the arriving edge is $(u,v)$. If $u$ is a root, then $x(u) = 1$ must hold after $(u,v)$ arrives, so by the special dual assignment rule, we have $\alpha(u) = x(u) - 2/8 = 6/8 \geq 5/8$. If $u$ is not a root, we have shown in \Cref{lem:safedisposal} that \Cref{inv:typeA} can be kept for $p_u$. For $u$, \Cref{alg:FracMCM2} always ensures $x(u) = 1$ and $x(p_u,u) \leq 3/8$ after $(u,v)$ arrives, so by the parent-only dual assignment rule, we have $\alpha(u) = x(u) - x(p_u,u) \geq 5/8$. Therefore \Cref{inv:typeA} holds inductively after $(u,v)$ arrives.

    For each growing edge $(u,v)$, whenever we update $x(u,v)$ in \Cref{alg:FracMCM2}, we do the same update to $\alpha(u)$, therefore  \Cref{inv:2/8surplus} holds inductively. 
\end{proof}
    
\paragraph{Remark for Trivial Merging edges.} The case that a tree to be merged has only two vertices is special: suppose a tree has only two vertices $u_1,v_1$. The arrival edge is $(v_1,v_2)$, and $v_2$ is in another tree. We can view $(v_1,v_2)$ arriving first as a growing edge, and then $(u_1,v_1)$ arriving. \Cref{alg:FracMCM2} shows that we have matched $1$ to $(u_1,v_1)$, so the matched fraction of $(u_1,v_1)$ can be achieved by disposal.

\subsection{Type B Vertices and Merging Edges}
\label{sec:typeB}

\noindent Before describing specific rules, we introduce some basic invariants we plan to maintain for type B vertices. We define a stable invariant aligned with \Cref{inv:stable}.
\begin{inv}
\label{inv:stableB}
Generally speaking, a vertex becomes stable when it becomes type B. We have the following invariants:
\begin{enumerate}
    \item The dual value of a type B vertex $u$, i.e., $\alpha(u)$, never decreases. 
    \item All growing edges adjacent to a type B vertex are stable. 
\end{enumerate}
\end{inv} 

On the other hand, we do not have the strong property that $\alpha(u) \geq 3/8$ as type A vertices. Instead, we introduce the following invariant. 
\begin{inv}
\label{inv:unsafe}
    For each type B vertex, $\alpha(u) \geq 2/8$. We call it safe if $\alpha(u) + 1 - x(u) \geq 5/8$. If $u$ is unsafe, we have $\alpha(u) + 1 - x(u) = 4/8$, and there must be an adjacent edge $(u,v)$ where we have an uncertain dual assignment of $1/8$ stored on $(u,v)$, and $v$ is also unsafe. 
\end{inv}

We remark that the rule for growing edges to type A vertices never affects the dual value of type B vertices. So, \Cref{inv:stableB} and \Cref{inv:unsafe} inductively hold by these rules.

\paragraph{Uncertain Dual Assignment.} Here, the uncertain dual assignment is a special assignment rule when we handle merging edges. For example, we may match a $3/8$-fraction to a merging edge $(u,v)$, but we do not fix the gain of $3/8$ to the dual of $u$ or $v$ immediately. There may be a $1/8$ amount of uncertain dual value temporarily stored on this edge. As a result, we may have $\alpha(u) = \alpha(v) = 2/8$ with an uncertain dual assignment of $1/8$ on $(u,v)$. This edge still has the approximate dual feasibility because $2/8+2/8+1/8 = 5/8$.

\paragraph{} Next we formally present the algorithms related to Type B vertices and merging edges. \cref{alg:FracMCMgotoB} describes the rules for growing an edge to a type B vertex. Here if the Type B vertex is unsafe then we remove the corresponding uncertain assignment and make it safe. \cref{alg:FracMCMmerge} describes the rules for merging edges. If the required invariants can all be satisfied using the surplus dual, then we do not match any fraction to the merging edge. This addresses the following 2 cases: 
\begin{itemize}
    \item both endpoints are non-leaf.
    \item one endpoint is a leaf and the other is non-leaf with dual at least $3/8$. 
\end{itemize}
Otherwise we must match some fraction to the merging edge, which covers the remaining 2 cases: 
\begin{itemize}
    \item both endpoints are leaves, this is where we create an uncertain dual assignment.
    \item one endpoint is a leaf and the other is non-leaf with dual $2/8$.
\end{itemize}

\subsubsection{Growing a Leaf to a Type B Vertex}

\noindent When a growing edge $(u,v)$ arrives where $u$ is a Type B vertex, the rules are listed as follows in \Cref{alg:FracMCMgotoB}. Intuitively, if $u$ is safe, by $\alpha(u) + 1 - x(u) \geq 5/8$, we can provide enough gain to $\alpha(u)$ to make it equal to $5/8$, by matching $u$ with a fraction of $5/8-\alpha(u)$. However, if $u$ is unsafe, we only have $\alpha(u) + 1 - x(u) \geq 4/8$, the unmatched portion of $u$ may not be enough. 
To fix this, we first dispose $1/8$ from $(u,z)$, the edge that stores an uncertain dual assignment, which must exist by \Cref{inv:unsafe}, and also remove the $1/8$ uncertain assignment. As a result, both $u$ and $z$ become safe, which means we inductively maintain the unsafe requirement in \Cref{inv:unsafe}. On the other hand, the unmatched portion of $u$ becomes enough. 

\begin{algorithm}[ht]
    \caption{Matching Rule for a Growing Edge to a Type B Vertex}
    \label{alg:FracMCMgotoB}
    When an growing edge $e = (u,v)$ arrives, and the parent $u$ is currently type B:\\
    \If{$u$ is unsafe} {
        Find the edge $(u,z)$ with uncertain assignment\;
        $x(u,z) \gets x(u,z) - 1/8$\;
        Remove the uncertain assignment on $(u,z)$\;
    }
    $x(u,v) \gets \max \{5/8 - \alpha(u),~0\}$\; 
    Update dual assignments using \Cref{prop:parent2}\;
\end{algorithm}

\begin{prop}
\label{prop:typeBstable}
    If $(u,v)$ is a growing edge to a type B vertex, it must be stable. 
\end{prop}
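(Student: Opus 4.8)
\textbf{Proof proposal for \Cref{prop:typeBstable}.}
The plan is to simply track the value of $\alpha(u)$ through the execution of \Cref{alg:FracMCMgotoB} and observe that it stays at least $2/8$, which immediately bounds the fraction matched to the growing edge by $3/8$. First I would recall from \Cref{inv:unsafe} that every type B vertex satisfies $\alpha(u) \geq 2/8$ at all times; since the rule for growing edges to type A vertices never touches the dual of type B vertices (as remarked right before \Cref{inv:unsafe}), this lower bound is in force when $(u,v)$ arrives.

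Next I would argue that the (conditional) disposal step in \Cref{alg:FracMCMgotoB} — executed only when $u$ is unsafe — does not change $\alpha(u)$. In that branch we lower $x(u,z)$ by $1/8$ and simultaneously remove the $1/8$ \emph{uncertain} dual assignment stored on the edge $(u,z)$; since the uncertain value sits on the edge and is not part of any vertex dual $\alpha(\cdot)$, the decrement in the primal is matched by the removal of the uncertain assignment, so $\alpha(u)$ (and $\alpha(z)$) is left untouched. Hence after the disposal step we still have $\alpha(u) \geq 2/8$, consistently with item 1 of \Cref{inv:stableB} that $\alpha(u)$ never decreases.

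Finally, the matching step sets $x(u,v) \leftarrow \max\{5/8 - \alpha(u),\, 0\}$. Combining with $\alpha(u) \geq 2/8$ gives $x(u,v) \leq 5/8 - 2/8 = 3/8$, so by the definition in \Cref{inv:stable}/\Cref{prop:typeBstable} the edge $(u,v)$ is stable, which is what we wanted. The only point requiring care — and the one I would state explicitly — is that the removal of the uncertain dual assignment in the unsafe branch is a primal–dual-neutral bookkeeping operation on the edge rather than a genuine decrease of a vertex dual, so the invariant $\alpha(u) \geq 2/8$ is preserved across it; everything else is a one-line computation.
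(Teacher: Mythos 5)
Your proof is correct and takes essentially the same approach as the paper: both hinge on the bound $\alpha(u) \geq 2/8$ from \Cref{inv:unsafe}, combined with the assignment $x(u,v) \leftarrow \max\{5/8 - \alpha(u), 0\}$, to conclude $x(u,v) \leq 3/8$. Your added observation that the unsafe-branch disposal only touches the uncertain edge value and hence leaves $\alpha(u)$ unchanged is a harmless elaboration of what the paper leaves implicit via \Cref{inv:stableB}.
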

\begin{proof}
    Because $x(u,v) = \max \{5/8 - \alpha(u),~0\}$ and $\alpha(u) \geq 2/8$. Therefore $x(u,v) \leq 3/8$ and it is stable.  
\end{proof}

\begin{lem}
    After a growing edge to a type B vertex arrives, 
    \Cref{inv:2/8surplus}, 
    \Cref{inv:fixtype},  \Cref{inv:typeA}, \Cref{inv:stableB}, \Cref{inv:unsafe} holds. 
\end{lem}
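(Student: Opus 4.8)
The plan is to argue inductively on the arrival sequence: assume all the listed invariants hold right before the growing edge $e=(u,v)$ arrives, and split according to the two branches of \Cref{alg:FracMCMgotoB}, namely whether the type B vertex $u$ is safe or unsafe. Throughout I would keep track of the decomposition $P = D + U + S$, where $U$ is the total uncertain dual mass currently stored on merging edges and $S$ is the surplus; \Cref{inv:2/8surplus} is the statement $S\ge 2/8$ (since $U\ge0$ it implies $P\ge D+2/8$), so it suffices to check that neither branch decreases $S$.

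The easy invariants come first. Since $(u,v)$ is a growing edge, $u$ stays type B and the fresh leaf $v$ gets no type, so \Cref{inv:fixtype} is immediate; the only vertices whose dual changes are $u$ and, in the unsafe branch, the uncertain partner $z$ of $u$ (both type B by construction, since uncertain assignments only ever sit on merging edges whose endpoints are type B), plus the leaf $v$, so no type A vertex is touched and \Cref{inv:typeA} passes through verbatim. For the new edge, $x(u,v)=\max\{5/8-\alpha(u),0\}\le 3/8$ because $\alpha(u)\ge 2/8$ by \Cref{inv:unsafe}, hence $(u,v)$ is stable (cf. \Cref{prop:typeBstable}); every other growing edge at $u$ is untouched, and at $z$ we only lower the merging edge $(u,z)$, so part (2) of \Cref{inv:stableB} holds. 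One also gets approximate dual feasibility of $(u,v)$ for free from $\alpha(u)\ge5/8$ below.

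For the surplus and dual-monotonicity I would handle the branches separately. In the safe branch nothing is disposed; we raise $x(u,v)$ and $\alpha(u)$ by the same amount $\delta=\max\{5/8-\alpha(u),0\}$, so $P$, $D$ move in lockstep, $U$ is unchanged, hence $S$ is unchanged and $\alpha(u)$ only increases, giving \Cref{inv:2/8surplus} and part (1) of \Cref{inv:stableB}. In the unsafe branch I would first invoke \Cref{inv:unsafe} to produce the edge $(u,z)$ carrying the $1/8$ uncertain assignment (so $x(u,z)\ge1/8$); lowering $x(u,z)$ by $1/8$ and deleting that assignment decreases $P$ and $U$ each by $1/8$ while leaving every vertex dual — in particular $\alpha(u)$ and $\alpha(z)$ — unchanged, so $S$ is again preserved and no dual drops; the subsequent raise of $x(u,v)$ and $\alpha(u)$ is then identical to the safe case. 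The \textbf{main obstacle} is exactly this bookkeeping: making sure the removed uncertain mass is debited against $P$ and $U$ consistently so that $S$ really does not move.

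Finally, re-establishing \Cref{inv:unsafe} is the key arithmetic step. In the unsafe branch the invariant gave $\alpha(z)+1-x(z)=4/8$ and $\alpha(u)+1-x(u)=4/8$ (so in particular $\alpha(u)\le4/8<5/8$); lowering $x(z)$ by $1/8$ bumps $z$ to $\alpha(z)+1-x(z)=5/8$, i.e. $z$ becomes safe, and likewise $1-x(u)$ becomes $5/8-\alpha(u)$, which is precisely the statement that the assignment $x(u,v)\gets 5/8-\alpha(u)$ is feasible (it drives $x(u)$ to exactly $1$) and leaves $u$ with $\alpha(u)=5/8$, hence safe. In the safe branch $u$ ends with $\alpha(u)=\max\{\alpha(u),5/8\}\ge 5/8$, again safe, and no other type B vertex changes. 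Since the only uncertain assignment we deleted belonged to the pair $(u,z)$ we just made safe, every vertex that is still unsafe keeps its uncertain partner and its $\ge2/8$ lower bound on $\alpha$, so \Cref{inv:unsafe} holds after the arrival, completing the induction.
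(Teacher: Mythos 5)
Your proof is correct and follows the same line as the paper: split on whether $u$ is safe or unsafe, observe that in the unsafe branch the $1/8$ disposal of $(u,z)$ is exactly balanced by removing the $1/8$ uncertain assignment, and note both $u$ and $z$ become safe. Your explicit $P = D + U + S$ decomposition is a useful sharpening of the paper's informal phrasing ("no dual of any vertex is decreased"), since it makes transparent why the surplus $S$ is preserved when $P$ drops by $1/8$ yet the vertex duals are untouched.
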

\begin{proof}
    \Cref{inv:fixtype} and \Cref{inv:typeA} are trivially not affected. In \Cref{alg:FracMCMgotoB}, if we dispose $1/8$ from edge $(u,z)$, we also remove the uncertain assignment stored on $(u,z)$, while no dual of any vertex is decreased, so \Cref{inv:2/8surplus} and \Cref{inv:stableB} both hold. Besides, both $u$ and $z$ become safe after the disposal, so \Cref{inv:unsafe} is also maintained.
\end{proof}

The following lemma shows a sufficient gain of dual value for the type B vertex $u$.
\begin{lem} 
\label{lem:growB_5/8}
    If a type B vertex $u$ is the parent of a growing edge, then $\alpha(u) \geq 5/8$.
\end{lem}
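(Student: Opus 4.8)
The plan is to prove \Cref{lem:growB_5/8} by direct case analysis on whether the type B vertex $u$ is safe at the moment the growing edge $(u,v)$ arrives, using \Cref{inv:unsafe} and the matching rule in \Cref{alg:FracMCMgotoB}. The key observation is that after the disposal-and-remove step (if any), $u$ satisfies $\alpha(u) + 1 - x(u) \geq 5/8$, i.e. the unmatched capacity of $u$ together with its current dual is at least $5/8$, so matching $x(u,v) = 5/8 - \alpha(u)$ is feasible and, by \Cref{prop:parent2}, raises $\alpha(u)$ to exactly $5/8$.

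Concretely, first I would handle the case where $u$ is already safe when $(u,v)$ arrives. By \Cref{inv:unsafe}, safe means $\alpha(u) + 1 - x(u) \geq 5/8$, equivalently $1 - x(u) \geq 5/8 - \alpha(u)$. Since we also know $\alpha(u) \geq 2/8$, the quantity $5/8 - \alpha(u)$ is at most $3/8 \leq 1$ and nonnegative only if $\alpha(u) \leq 5/8$; if $\alpha(u) \geq 5/8$ already then \Cref{alg:FracMCMgotoB} sets $x(u,v) = 0$ and there is nothing to prove. Otherwise $x(u,v) = 5/8 - \alpha(u) > 0$, and this is a valid matching amount since $x(u) + x(u,v) = x(u) + 5/8 - \alpha(u) \leq 1$ by safety. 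By the parent-only rule in \Cref{prop:parent2}, $\alpha(u)$ increases by exactly $x(u,v) = 5/8 - \alpha(u)$, so the new dual of $u$ equals $5/8$, giving the claim.

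Second, I would handle the case where $u$ is unsafe. By \Cref{inv:unsafe} we then have $\alpha(u) + 1 - x(u) = 4/8$ and there is an adjacent edge $(u,z)$ carrying a $1/8$ uncertain dual assignment with $z$ also unsafe. \Cref{alg:FracMCMgotoB} first does $x(u,z) \gets x(u,z) - 1/8$ and removes the uncertain assignment. This decreases $x(u)$ by $1/8$ while leaving $\alpha(u)$ unchanged (the $1/8$ was uncertain, not yet credited to $\alpha(u)$), so afterward $\alpha(u) + 1 - x(u) = 5/8$ — exactly the safe condition. Now the previous argument applies verbatim: $x(u,v) = 5/8 - \alpha(u)$ is a feasible match (note $x(u) + x(u,v) \leq 1$ by the refreshed capacity), and \Cref{prop:parent2} bumps $\alpha(u)$ to $5/8$. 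One should also note that \Cref{inv:stableB}(1) requires $\alpha(u)$ never to decrease, which holds throughout since the only change to $\alpha(u)$ is a nonnegative increment; and the $x(u,z)$ decrease does not violate \Cref{prop:inv:stable}-type stability because the uncertain $1/8$ sits above the $3/8$ floor on $(u,z)$ (this is already guaranteed by the bookkeeping in \Cref{inv:unsafe}, which I would cite rather than re-derive).

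The main obstacle is not any calculation but making sure the invariants line up: one must verify that in the unsafe case the uncertain edge $(u,z)$ genuinely exists and that stripping $1/8$ from it keeps $x(u,z) \geq 3/8$ so that no previously-stable structure is harmed, and that removing the uncertain assignment does not break the approximate dual feasibility of the edge $(u,z)$ itself — but since $z$ is also unsafe, after the operation $z$ becomes safe too (its $\alpha(z) + 1 - x(z)$ jumps to $5/8$ by the same arithmetic), so $(u,z)$ is covered by $\alpha(u) + \alpha(z)$ once both endpoints' safe slack is accounted for in the later feasibility lemma. I would therefore keep this lemma narrowly scoped to the dual \emph{value} of $u$ as stated, and defer the full feasibility and invariant-preservation bookkeeping to the surrounding lemmas that have already been set up for exactly this purpose.
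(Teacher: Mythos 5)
Your core argument is correct and is essentially the same as the paper's, just unpacked further. The paper's proof is three lines: the first time a growing edge arrives at the type B vertex $u$, \Cref{alg:FracMCMgotoB} matches $\max\{5/8-\alpha(u),0\}$ and raises $\alpha(u)$ to $5/8$; then \Cref{inv:stableB} says $\alpha(u)$ never decreases afterward. You elaborate on why the match $x(u,v)=5/8-\alpha(u)$ is actually feasible (i.e.\ fits inside $1-x(u)$), splitting on safe vs.\ unsafe and using the dispose-and-remove step to restore slack in the unsafe case. That extra feasibility verification is a sensible thing to check, though the paper defers it to the preceding invariant-preservation lemma, which is why its proof of this particular lemma can be so short.

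One small inaccuracy in your side remarks: you worry that stripping $1/8$ from the edge $(u,z)$ could violate the ``$x(e)\geq 3/8$'' floor, citing the stability invariant. But $(u,z)$ is a \emph{merging} edge (it is created in the ``merge two leaves'' case of \Cref{alg:FracMCMmerge}), and the $3/8$ floor of \Cref{inv:stable} applies only to growing edges. After the removal, $x(u,z)$ does in fact drop to $2/8$, and that is fine; nothing in the paper's invariants forbids it. This does not affect the correctness of your main argument for the stated lemma, but it is worth correcting so the invariant bookkeeping you appeal to actually matches what the paper maintains. Also, your cross-reference \verb|\Cref{prop:inv:stable}| does not exist; the label in the paper is simply \texttt{inv:stable}.
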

\begin{proof}
    Notice that in \Cref{alg:FracMCMgotoB}, the first time a growing edge arrives to a Type B vertex $u$, we match $\max\{5/8-\alpha(u),0\}$ to that edge and thus increase $\alpha(u)$ to $5/8$. And by \Cref{inv:stableB}, $\alpha(u)$ will not decrease afterwards.
\end{proof}

\subsubsection{Merging Edges: No Matching Increase}
\label{sec:merging_rules}

\noindent This section discusses the cases in which we do not match any fraction to the arriving merging edge $(u,v)$.
To recall, we do not change the type of a non-leaf vertex when handling merging edges. It means that if $u$ or $v$ is type A, we cannot match any fraction to $(u,v)$. 

This requirement can be satisfied by the following rules: when a merging edge merges two non-leaf vertices $u,v$, or merges a leaf $u$ with a non-leaf vertex $v$ with $\alpha(v) \geq 3/8$, we do not match any fraction to $(u,v)$. To maintain the invariants of dual variables, we update the dual of $u$ and $v$ as follows: 
\begin{enumerate}
    \item For a non-leaf vertex, if it is type B and has dual $2/8$, we increase it to $3/8$.
    \item For a leaf $u$, we increase $\alpha(u)$ to $2/8$.
\end{enumerate}

\begin{lem}
    For a merging edge $(u,v)$, if $x(u,v) = 0$ at its arrival, we can inductively maintain 
    \Cref{inv:2/8surplus}, 
    \Cref{inv:fixtype}  \Cref{inv:typeA}, \Cref{inv:stableB}, \Cref{inv:unsafe}.
\end{lem}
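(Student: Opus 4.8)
The plan is to verify the five invariants by induction, splitting on the two sub-cases in which no fraction is matched to the arriving merging edge $(u,v)$: case (i), both $u$ and $v$ are non-leaf, and case (ii), $u$ is a leaf and $v$ is non-leaf with $\alpha(v)\ge 3/8$. The decisive simplification is that $x(u,v)=0$, so the primal value of the merged tree is exactly $P_1+P_2$ (summing over the two trees), no edge other than $(u,v)$ changes the status of its endpoints, and the only structural change is that $u$ may go from leaf to non-leaf in case (ii).

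First I would check approximate dual feasibility of $(u,v)$ itself. After the prescribed top-ups — raising every type-B non-leaf endpoint that has dual $2/8$ up to $3/8$, and raising the leaf endpoint up to $2/8$ — each non-leaf endpoint has dual at least $3/8$ (type A already satisfies this by \Cref{inv:typeA}, type B now does by construction), and a leaf endpoint has dual at least $2/8$. Hence in case (i) the sum is at least $3/8+3/8=6/8\ge 5/8$, and in case (ii) it is at least $2/8+3/8=5/8$. Feasibility of every other edge is preserved because no dual value ever decreases here.

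Next I would restore \Cref{inv:2/8surplus}. By the inductive hypothesis each merged tree satisfies $P_j\ge D_j+2/8$, so $P_1+P_2\ge D_1+D_2+4/8$; the merged tree has $P=P_1+P_2$ and $D=D_1+D_2+\Delta$, where $\Delta$ is the total dual increase from the top-ups. Since each non-leaf top-up costs at most $1/8$ and the (at most one) leaf top-up costs at most $2/8$ — in case (i) there are at most two non-leaf top-ups and no leaf top-up, and in case (ii) there is one leaf top-up and no top-up on $v$ since $\alpha(v)\ge 3/8$ — one gets $\Delta\le 2/8$, whence $P\ge D+2/8$. Thus merging consumes exactly one of the two surplus copies, which is the whole point of maintaining a per-tree surplus. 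The remaining structural invariants are mostly bookkeeping: \Cref{inv:stableB} holds because no type-B dual decreases and no growing edge is modified; \Cref{inv:typeA} holds for existing type-A vertices because their duals are unchanged or increased and they gain no leaf neighbour; \Cref{inv:fixtype} is maintained by fixing the type of $u$ at the moment it becomes non-leaf in case (ii), all other vertices keeping their type; and for a type-B vertex that was topped up, $\alpha(u)+1-x(u)$ only increases, so a safe vertex stays safe and an unsafe vertex retains its paired uncertain assignment (whose partner lies in the same tree and is untouched).

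The step I expect to be the main obstacle is pinning down the newly non-leaf vertex $u$ in case (ii): one must argue it is fixed to be type B (so that the weaker bound $\alpha(u)\ge 2/8$ of \Cref{inv:unsafe}, rather than the $3/8$ of \Cref{inv:typeA}, applies to it), then check that the dual value $2/8$ we assign suffices, and that if $u$ ends up unsafe the required uncertain-assignment structure is present — or else argue that $x(u)$ is small enough that $u$ is safe. A related delicate point is that topping up a type-B endpoint's dual from $2/8$ to $3/8$ in case (i) could flip it to safe while it still carries an uncertain assignment whose partner relies on it; one must verify this either cannot happen or is repaired by fixing that uncertain assignment onto the partner. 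Getting these delicate $1/8$-granularity case splits right, together with the exact surplus arithmetic above, is where the real work lies; everything else is routine verification.
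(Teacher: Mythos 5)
Your surplus calculation ($\Delta\le 2/8$ dual increase in each sub-case, hence the merged tree still has $P\ge D+2/8$) is exactly the paper's argument, which is given in a single sentence and then followed by the bare assertion that ``the other invariants hold trivially.'' You are right to be suspicious of that assertion, and the obstacle you flag at the end --- pinning down the status of the newly non-leaf vertex $u$ in case~(ii) --- is precisely what the paper's proof does not address. Since $\alpha(u)$ is raised only to $2/8$, fixing $u$ as type~A would immediately contradict \Cref{inv:typeA}, so $u$ must become type~B. But then \Cref{inv:stableB} requires that the growing edge $(p_u,u)$ be stable (at most $3/8$-matched), and \Cref{inv:unsafe} requires either $x(u)\le 5/8$ (so $u$ is safe) or an uncertain dual assignment on an adjacent edge of $u$. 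Neither is guaranteed by the stated rules: unlike the $x(u,v)>0$ branch of \Cref{alg:FracMCMmerge}, the $x(u,v)=0$ branch performs no disposal on the leaf's parent edge, and $x(u)$ can exceed $3/8$ (indeed approach $1$), with no uncertain assignment created. Closing the gap seems to require adding exactly the disposal step from the other branch (lower $x(p_u,u)$ to $3/8$, deduct the difference from $\alpha(p_u)$, and invoke \Cref{lem:safedisposal} to re-establish \Cref{inv:typeA} for $p_u$), plus re-verifying the surplus arithmetic with that extra primal decrease included.

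Your second concern --- that in case~(i) raising an unsafe type-B endpoint's dual from $2/8$ to $3/8$ flips it to safe while its uncertain-assignment partner $z$ remains unsafe, violating the ``$v$ is also unsafe'' clause of \Cref{inv:unsafe} for $z$ --- is likewise real and likewise not addressed by the paper's ``trivially''; one would need to simultaneously fix the $1/8$ uncertain value onto $\alpha(z)$ (which does not change $D$ and makes $z$ safe). So your proposal follows the paper's route and correctly computes the one thing the paper actually proves, but it defers --- rather than resolves --- the remaining invariants, and you should not take comfort in the paper's one-liner: the work you identify as ``where the real work lies'' is genuinely missing from both your write-up and the paper's.
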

\begin{proof}
    In both rules, the total increase of dual is at most $2/8$, so \Cref{inv:2/8surplus} still holds. The other invariants hold trivially.
\end{proof}

\begin{lem}
\label{lem:mergenomatch}
    For a merging edge $(u,v)$, if $x(u,v) = 0$ at its arrival, we have $\alpha(u)+\alpha(v) \geq 5/8$.
\end{lem}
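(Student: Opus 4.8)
The plan is to do a case analysis on whether $u$ and $v$ are leaves, matching exactly the cases from the rule preceding the lemma, and in each case invoke the dual-value bounds established for type A and type B vertices together with the dual-update rules applied at this arrival. First I would recall the rule's hypothesis: $x(u,v)=0$ happens precisely when we merge two non-leaf vertices, or when we merge a leaf with a non-leaf vertex $v$ satisfying $\alpha(v)\geq 3/8$. So there are two regimes to handle.

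In the \emph{two non-leaf vertices} regime, both $u$ and $v$ are non-leaf. For a non-leaf vertex that is type A, \Cref{inv:typeA} gives $\alpha\geq 3/8$; for a non-leaf vertex that is type B, after applying update rule~1 at this arrival its dual is at least $3/8$ (either it was already $\geq 3/8$ by \Cref{inv:unsafe} combined with being safe, or it was exactly $2/8$ and got bumped to $3/8$). Hence $\alpha(u)\geq 3/8$ and $\alpha(v)\geq 3/8$, so $\alpha(u)+\alpha(v)\geq 6/8 > 5/8$. In the \emph{leaf-plus-non-leaf} regime, say $u$ is the leaf and $v$ the non-leaf with $\alpha(v)\geq 3/8$ (this is exactly the condition under which the no-match rule is invoked for this subcase). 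By update rule~2 we raise $\alpha(u)$ to $2/8$, so $\alpha(u)+\alpha(v)\geq 2/8+3/8 = 5/8$, as required.

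The only subtlety — and the place I'd be most careful — is to make sure the case split is exhaustive and consistent with the rule: the no-match rule is applied exactly when both endpoints are non-leaf, \emph{or} when one is a leaf and the other is non-leaf with dual $\geq 3/8$; it is \emph{not} applied when both are leaves or when a leaf meets a non-leaf with dual exactly $2/8$ (those are the matching-increase cases handled later). So within the scope of this lemma the two regimes above are the only ones, and I should also note that a non-leaf type B vertex always has $\alpha\geq 2/8$ by \Cref{inv:unsafe}, which is what lets update rule~1 legitimately push it to $3/8$. Assembling these observations gives $\alpha(u)+\alpha(v)\geq 5/8$ in every case, completing the proof.
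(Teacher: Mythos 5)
Your proof is correct and follows essentially the same case split as the paper's: both non-leaf (rule~1 gives each endpoint dual $\geq 3/8$) versus leaf plus non-leaf with dual $\geq 3/8$ (rule~2 gives the leaf dual $2/8$). One small inaccuracy in your parenthetical: ``safe'' does not by itself imply $\alpha\geq 3/8$ (safety only bounds $\alpha(u)+1-x(u)$); the correct justification is simply that duals are multiples of $1/8$ and at least $2/8$, so after rule~1 every type B non-leaf endpoint has dual at least $3/8$ — but this does not affect the validity of your argument.
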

\begin{proof}
    After updating the dual variables, for the first rule, we have $\alpha(u) \geq 3/8$ and $\alpha(v)\geq 3/8$; for the second rule, we have $\alpha(u) = 2/8$ and $\alpha(v) \geq 3/8$. Therefore $\alpha(u) + \alpha(v) \geq 5/8$ in both rules after the arrival of edge $(u,v)$.
\end{proof}

\subsubsection{Merging Edges: Matching Increase and Creating Type B Vertices}
The remaining cases are (1) merging two leaves. (2) merging a leaf with a type B vertex $v$, where $\alpha(v) =2/8$. In both cases, we first perform a disposal step:
\begin{itemize}
    \item Consider all leaf vertices among $u$ and $v$. If their corresponding edge $e$ is unstable, decrease $x(e)$ to $3/8$. 
\end{itemize}

After such a disposal step, a leaf is matched $\leq 3/8$, and we can safely assume that each leaf is matched exactly $3/8$. Therefore, when designing merging rules that involve a leaf, we only need to consider $3/8$-matched leaves.

When merging two $3/8$-matched leaves, we create the uncertain dual assignment mentioned above, as well as two unsafe Type B vertices. When merging a leaf with a type B vertex, using one $2/8$-surplus is not enough to satisfy approximate dual feasibility, so we have to match some fraction to the merging edge. If the type B vertex here is unsafe, we also need to dispose $1/8$ from the corresponding edge and remove the uncertain dual assignment. All details are shown in \Cref{alg:FracMCMmerge}. \Cref{forest1}, \Cref{forest2}, \Cref{forest3} provide graphical illustrations to these merging rules.

\begin{algorithm}[ht]
    \caption{Merging Edges: Matching and Dual Assignment}
    \label{alg:FracMCMmerge}
    When a merging edge $e = (u,v)$ arrives:\\
    \uIf {$\alpha(u) + \alpha(v) \geq 3/8$} {
        $x(u,v) \gets 0$\;
        if one of $u,v$ is a leaf, increase its dual to $2/8$\;
        for type B vertices with $2/8$ dual among $u,v$, increase the dual to $3/8$\;
    } 
    \Else {
        \uIf {$u$ is leaf and $x(u) > 3/8$} {
            $\alpha(p_u) \gets \alpha(p_u) - (x(u) - 3/8)$\;
            $x(u) \gets 3/8$\;
        }
        \uIf {$v$ is leaf and $x(v) > 3/8$} {
            $\alpha(p_v) \gets \alpha(p_v) - (x(v) - 3/8)$\;
            $x(v) \gets 3/8$\;
        }
        \uIf {$u$ and $v$ are both leaves} {
            $x(u,v) \gets 3/8$\;
            $\alpha(u) \gets 2/8$\; $\alpha(v) \gets 2/8$\;
            Assign an uncertain dual value of $1/8$ on $(u,v)$\;
        } \ElseIf{one of them is a leaf} {
            Assume $u$ is the leaf and $v$ is the type B non-leaf\;
            \uIf {$v$ is unsafe} {
                Find the edge $(v,z)$ with uncertain assignment\;
                $x(v,z)\gets x(v,z) -1/8$\;
                Remove the uncertain dual assignment on $(v,z)$\; 
            }
            $x(u,v) \gets 1/8$\;
            $\alpha(u) \gets 2/8$\;
            $\alpha(v) \gets 3/8$\;
        }
    }
\end{algorithm}

\begin{lem}
    For a merging edge $(u,v)$, if $x(u,v) > 0$ at its arrival, we can inductively maintain 
    \Cref{inv:2/8surplus}, 
    \Cref{inv:fixtype}  \Cref{inv:typeA}, \Cref{inv:stableB}, \Cref{inv:unsafe}.
\end{lem}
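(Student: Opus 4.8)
The plan is to verify each of the five invariants case by case, following the branching structure of \Cref{alg:FracMCMmerge} exactly as it appears in the algorithm. Since we are in the branch where $x(u,v)>0$, we fall into the \texttt{else} block where $\alpha(u)+\alpha(v)<3/8$, which (combined with \Cref{inv:unsafe} and \Cref{inv:typeA}) forces both $u$ and $v$ to be either leaves or type B vertices with dual exactly $2/8$ — in particular neither is type A, so \Cref{inv:fixtype} and \Cref{inv:typeA} are not affected and hold trivially. The two remaining sub-cases are (a) both $u,v$ are leaves, and (b) one of them (say $u$) is a leaf and $v$ is a type B non-leaf with $\alpha(v)=2/8$.

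First I would handle the preliminary disposal step: when a leaf among $\{u,v\}$ has $x(\cdot)>3/8$ we lower it to $3/8$ and decrement the parent's dual by the same amount. I need to check this parent is type A and $\geq 3/8$ even after the decrement — this follows from \Cref{lem:safedisposal}-style reasoning and \Cref{inv:typeA}, since the leaf edge being unstable means the parent had dual absorbing that fraction. I would also note this disposal keeps \Cref{inv:2/8surplus} because primal and dual drop by the same amount, and keeps \Cref{inv:stableB} (we only touch the leaf's edge, not a type B vertex's edge). Then for sub-case (a): we set $x(u,v)=3/8$, $\alpha(u)=\alpha(v)=2/8$, with a $1/8$ uncertain value on $(u,v)$; I would check $P$ increases by $3/8$ while $D$ increases by at most $2/8+2/8-0$ minus whatever the leaves held before (leaves had dual $0$), so the surplus bookkeeping needs the uncertain $1/8$ to be accounted as "not yet in $D$" — this is the crux. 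For sub-case (b): if $v$ is unsafe we first dispose $1/8$ from $(v,z)$ and drop its uncertain assignment (making both $v$ and $z$ safe, maintaining \Cref{inv:unsafe}); then set $x(u,v)=1/8$, $\alpha(u)=2/8$, $\alpha(v)=3/8$; check that $\alpha(v)$ only increased (\Cref{inv:stableB}), that both new type B vertices satisfy $\alpha\geq 2/8$, and that safety holds ($u$: $\alpha(u)+1-x(u)=2/8+1-3/8=7/8\geq 5/8$; $v$: $\alpha(v)+1-x(v)$ using the updated $x(v)$).

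The main obstacle is the surplus accounting for \Cref{inv:2/8surplus}. When two trees merge, each carried its own $2/8$ surplus, so we have $4/8$ of slack coming in, but after merging only one $2/8$ is required — the extra $2/8$ is exactly what pays for the merging edge's approximate feasibility plus any new dual we hand out. I would make this precise by tracking $P-D$ on the merged tree: before merging it equals (sum of the two trees' $P-D$) $\geq 4/8$; the operations in the \texttt{else} branch change $P$ by $+3/8$ (case a) or $+1/8$ minus a possible $-1/8$ disposal (case b), and change $D$ by the net dual increases minus the $1/8$ uncertain value that is deliberately kept off $D$; I must verify the resulting $P-D$ is still $\geq 2/8$ in every sub-branch. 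The delicate point is that the uncertain $1/8$ lives "on the edge," neither in $P$'s relevant slack nor in $D$, so I must be careful that \Cref{inv:2/8surplus}'s ledger treats it consistently with how \Cref{inv:unsafe} later redeems it.

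Finally I would close the loop by noting \Cref{inv:unsafe} itself: in case (a) both $u$ and $v$ become type B with $\alpha=2/8$, $x=3/8$, so $\alpha+1-x=7/8$ — wait, that is $\geq 5/8$, so they are \emph{safe}; I need to recheck against the invariant's statement, which only requires that \emph{if} a vertex is unsafe then a specific uncertain edge exists, so creating safe vertices with an uncertain assignment is consistent as long as we never claim they are unsafe. Actually the uncertain $1/8$ on $(u,v)$ in case (a) is what lets the edge $(u,v)$ be covered ($2/8+2/8+1/8=5/8$) even though the invariant does not force either endpoint unsafe; future growing edges to $u$ or $v$ will find and remove it via \Cref{alg:FracMCMgotoB}'s unsafe branch only if the vertex has by then become unsafe, otherwise it is simply redundant slack — so consistency holds. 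I expect this last reconciliation between the "uncertain assignment exists" clause and the safe/unsafe dichotomy to require the most careful reading, but no new ideas.
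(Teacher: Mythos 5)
Your overall decomposition matches the paper's (classify by whether the endpoints are leaves or type B with dual $2/8$, handle the preliminary disposal, then check each invariant), but there is a concrete arithmetic error near the end that undermines your reconciliation of \Cref{inv:unsafe}.

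In case (a), after merging two $3/8$-matched leaves, you compute ``$\alpha=2/8$, $x=3/8$, so $\alpha+1-x=7/8$, so they are safe.'' Here you have substituted $x(u,v)=3/8$ (the matched fraction of the merging \emph{edge}) where the safety condition in \Cref{inv:unsafe} requires $x(u)$, the total matched fraction of the \emph{vertex}. After the merge, $u$ has both the edge to its former parent ($3/8$) and the new merging edge ($3/8$), so $x(u)=6/8$ and $\alpha(u)+1-x(u)=2/8+1-6/8=4/8$. The vertex is therefore \emph{unsafe}, exactly as \Cref{inv:unsafe} stipulates: both endpoints unsafe, one $1/8$ uncertain assignment on $(u,v)$ pointing to a mutually unsafe partner. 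Your closing paragraph, which wrestles with the puzzle of ``safe vertices carrying an uncertain assignment'' and concludes such slack is ``simply redundant,'' is resolving a problem that does not exist; the actual state is the one the invariant was designed to describe, and the uncertain $1/8$ is the thing that covers the edge $(u,v)$ itself ($2/8+2/8+1/8=5/8$). The same slip appears in case (b) where you write $x(u)=3/8$ instead of $3/8+1/8=4/8$ (the conclusion ``safe'' still holds there, but by a different margin).

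A second, subtler issue is the surplus ledger. You commit to treating the uncertain $1/8$ as ``not yet in $D$.'' Under that convention, in case (b) with $v$ unsafe the update gives $\Delta P = -1/8+1/8 = 0$ and $\Delta D = 2/8+1/8 = 3/8$, so $\Delta D - \Delta P = 3/8 > 2/8$, and you cannot verify \Cref{inv:2/8surplus} using only the stated $2/8$-per-tree guarantee. The paper's ``$\Delta D \leq \Delta P + 2/8$'' tally only balances if the uncertain dual \emph{is} charged to $D$ while it exists (so that removing it when disposing $1/8$ from $(v,z)$ credits $D$ back by $1/8$). Your convention can be made to work, but only if you explicitly track the extra $1/8$ of slack that every tree with an outstanding uncertain assignment secretly has beyond the nominal $2/8$; as written, your inductive hypothesis ($P\geq D+2/8$, nothing more) is too weak to close case (b) unsafe. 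Either adopt the paper's convention or strengthen the inductive statement.
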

\begin{proof}
    The only step involving a Type A vertex is when merging a leaf $u$ with $x(u) > 3/8$, we need to decrease $x(u)$ to $3/8$, but by \Cref{lem:safedisposal}, \Cref{inv:typeA} still holds for $p_u$.

    It can be verified that for each case, we have $\Delta D \leq \Delta P + 2/8$, so we only need to spend one $2/8$-surplus to fill in the gap, and the merged tree still has a $2/8$-surplus, so \Cref{inv:2/8surplus} still hold.

    In all cases of such a merging edge, \Cref{alg:FracMCMmerge} either creates a Type B vertex and assigns a $2/8$-dual, or increases the dual of a Type B vertex and only matches a $\leq 3/8$-fraction to the merging edge, so \Cref{inv:fixtype} and \Cref{inv:stableB} are maintained.

    Finally, if the merging edge involves an unsafe Type B vertex $v$, suppose the corresponding edge with uncertain dual assignment is $(v,z)$, then both $v$ and $z$ become safe, so \Cref{inv:unsafe} is preserved.
\end{proof}

\begin{lem}
\label{lem:merging_adf}
    For a merging edge $(u,v)$, if $x(u,v) > 0$ at its arrival, then approximate dual feasibility is satisfied on $(u,v)$ at this time.
\end{lem}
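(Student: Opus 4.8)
The plan is to handle the two cases identified right before \Cref{alg:FracMCMmerge} — namely, when $x(u,v)>0$ either (i) both $u$ and $v$ are leaves, or (ii) one of them, say $u$, is a leaf and the other, $v$, is a type B non-leaf with $\alpha(v)=2/8$. In both cases the relevant entry point is the \texttt{else} branch of \Cref{alg:FracMCMmerge}, reached precisely when $\alpha(u)+\alpha(v) < 3/8$ at arrival. First I would note the target: approximate dual feasibility on $(u,v)$ means $\alpha(u)+\alpha(v)+(\text{uncertain value on }(u,v)) \geq 5/8 = c\cdot w(u,v)$ with $w(u,v)=1$, where the uncertain value is counted toward covering the edge exactly as described in the paragraph on uncertain dual assignment ($2/8+2/8+1/8=5/8$).

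For case (i), after the disposal step each leaf is (w.l.o.g.) exactly $3/8$-matched, and \Cref{alg:FracMCMmerge} sets $x(u,v)\gets 3/8$, $\alpha(u)\gets 2/8$, $\alpha(v)\gets 2/8$, and stores an uncertain $1/8$ on $(u,v)$. Then $\alpha(u)+\alpha(v)+1/8 = 5/8$, giving the bound directly. For case (ii), again after the disposal step $u$ is exactly $3/8$-matched; the algorithm (after possibly clearing an uncertain $1/8$ from $v$'s side if $v$ was unsafe) sets $x(u,v)\gets 1/8$, $\alpha(u)\gets 2/8$, $\alpha(v)\gets 3/8$, so $\alpha(u)+\alpha(v) = 5/8$ with no uncertain value needed on $(u,v)$ itself. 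In both subcases the conclusion is immediate once the state after \Cref{alg:FracMCMmerge} is written out.

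The main thing to be careful about — rather than a deep obstacle — is the bookkeeping around the uncertain dual assignment: in case (i) one must be sure the $1/8$ on $(u,v)$ is genuinely available to cover $(u,v)$ (it is, by definition, since it is stored \emph{on} this edge), and in case (ii) one must confirm that the $x(u,v)=1/8$ increase together with the two dual increases does not violate \Cref{inv:2/8surplus}, but that is exactly what the preceding lemma ($\Delta D \leq \Delta P + 2/8$ in every case) already guarantees, so here it only remains to read off the final dual values. A secondary point is that we should record that the matched fraction assigned to $(u,v)$ is consistent with the losslessness condition of \Cref{lem:rounding2}, i.e. $x_t(u,v) \leq (1-x_{t-1}(u))(1-x_{t-1}(v))$: in case (i), $3/8 \leq (1-3/8)^2 = 25/64$, and in case (ii), $1/8 \leq (1-3/8)(1-x_{t-1}(v))$ using $x_{t-1}(v)\leq 1$ — but strictly speaking this is a separate invariant and need not be reproved inside this lemma. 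So the proof will be a short two-case verification, and I expect no genuinely hard step.
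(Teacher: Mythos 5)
Your proposal is correct and follows essentially the same two-case argument as the paper: read off the post-update dual values from \Cref{alg:FracMCMmerge} and, in the two-leaves case, count the uncertain $1/8$ stored on $(u,v)$ toward covering that edge to get $2/8+2/8+1/8 = 5/8$, while in the leaf/type-B case $2/8 + 3/8 = 5/8$ directly. (The side remark about the rounding condition is, as you note, outside the scope of this lemma — and your quick bound there would need $x_{t-1}(v)\leq 5/8$ rather than $\leq 1$ to go through — but that does not affect the proof of this statement.)
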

\begin{proof}
    If $u,v$ are both leaves, we assign $\alpha(u) = \alpha(v) = 2/8$, but there is an uncertain $1/8$-dual stored on $(u,v)$, so when considering approximate dual feasibility, we may temporarily assign this $1/8$-dual to $\alpha(u)$ or $\alpha(v)$ to ensure $\alpha(u)+\alpha(v) = 5/8$. If $u$ is a leaf and $v$ is non-leaf, we assign $\alpha(u) = 2/8$ and $\alpha(v) = 3/8$, so $\alpha(u)+\alpha(v) = 5/8$. 
\end{proof}

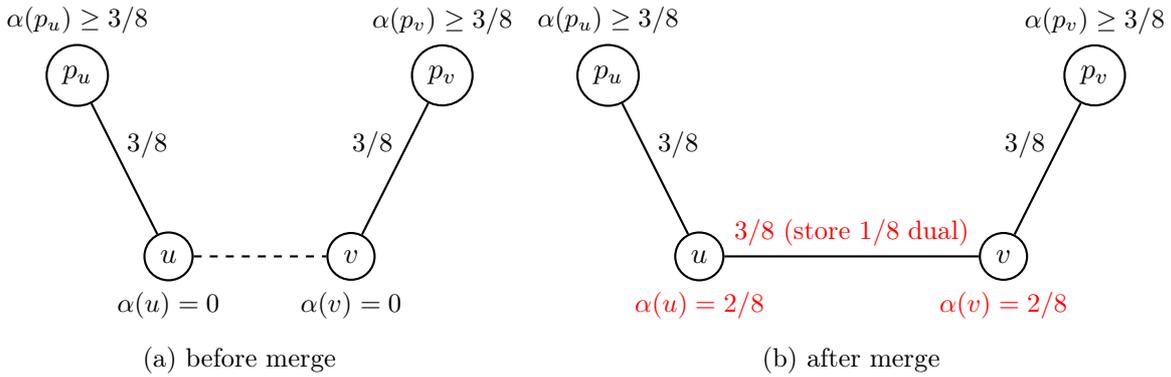
\begin{figure}[ht]
\begin{subfigure}{0.4\textwidth}
    \centering{
    \begin{tikzpicture}[scale=0.8, auto, node distance=1cm, thick]
        \node[node] (pu) [label={[black]90:\small$\alpha({p_u})\geq3/8$}] at (-3,3) {$p_u$};
        \node[node] (u) [label={[black]270:\small$\alpha({u})=0$}] at (-1.5,0) {$u$};
        \node[node] (pv) [label={[black]90:\small$\alpha({p_v})\geq3/8$}] at (3,3) {$p_v$};
        \node[node] (v) [label={[black]270:\small$\alpha({v})=0$}] at (1.5,0) {$v$};
        \draw[edge] (pu) to node[above,black,xshift=3mm] {\small$3/8$} (u);
        \draw[edge] (pv) to node[above,black,xshift=-3mm] {\small$3/8$} (v);
        \draw[dashed] (u) to node {} (v);
    \end{tikzpicture}
    \caption{before merge}
    }
\end{subfigure}
\begin{subfigure}{0.6\textwidth}
    \centering{
    \begin{tikzpicture}[scale=0.8, auto, node distance=1cm, thick]
        \node[node] (pu) [label={[black]90:\small$\alpha(p_u)\geq3/8$}] at (-4,3) {$p_u$};
        \node[node] (u) [label={[red]270:\small$\alpha(u)=2/8$}] at (-2.5,0) {$u$};
        \node[node] (p_v) [label={[black]90:\small$\alpha(p_v)\geq3/8$}] at (4,3) {$p_v$};
        \node[node] (v) [label={[red]270:\small$\alpha(v)=2/8$}] at (2.5,0) {$v$};
        \draw[edge] (pu) to node[above,black,xshift=3mm] {\small$3/8$} (u);
        \draw[edge] (p_v) to node[above,black,xshift=-3mm] {\small$3/8$} (v);
        \draw[edge] (u) to node[above,red] {\small{$3/8$ (store $1/8$ dual)}} (v);
    \end{tikzpicture}
    \caption{after merge}
    }
\end{subfigure}
\caption{Merging Two Leaves}
\label{forest1}
\end{figure}

\begin{figure}[ht]
\begin{subfigure}{0.5\textwidth}
    \centering{
    \begin{tikzpicture}[scale=0.8, auto, node distance=1cm, thick]
        \node[node] (pu) [label={[black]90:\small$\alpha({p_u})\geq3/8$}] at (-3,3) {$p_u$};
        \node[node] (u) [label={[black]270:\small$\alpha({u})=0$}] at (-1.5,0) {$u$};
        \node[node] (p_v) at (0,3) {$p_v$};
        \node[node] (v) [label={[black]270:\small$\alpha({v})=2/8$}] at (1.5,0) {$v$};
        \node[node] (z) at (4.5,0) {$z$};
        \draw[edge] (pu) to node[above,black,xshift=3mm] {\small$3/8$} (u);
        \draw[edge] (p_v) to (v);
        \draw[dashed] (u) to node {} (v);
        \draw[edge] (v) to (z);
    \end{tikzpicture}
    \caption{before merge}
    }
\end{subfigure}
\begin{subfigure}{0.5\textwidth}
    \centering{
    \begin{tikzpicture}[scale=0.8, auto, node distance=1cm, thick]
        \node[node] (pu) [label={[black]90:\small$\alpha({p_u})\geq3/8$}] at (-3,3) {$p_u$};
        \node[node] (u) [label={[red]270:\small$\alpha({u})=2/8$}] at (-1.5,0) {$u$};
        \node[node] (p_v) at (0,3) {$p_v$};
        \node[node] (v) [label={[red]270:\small$\alpha({v})=3/8$}] at (1.5,0) {$v$};
        \node[node] (z) at (4.5,0) {$z$};
        \draw[edge] (pu) to node[above,black,xshift=3mm] {\small$3/8$} (u);
        \draw[edge] (p_v) to (v);
        \draw[dashed] (u) to node[above,red] {\small$1/8$} (v);
        \draw[edge] (v) to (z);
    \end{tikzpicture}
    \caption{after merge}
    }
\end{subfigure}
\caption{Merging a Leaf with a Safe Type B Vertex with $2/8$-Dual}
\label{forest2}
\end{figure}
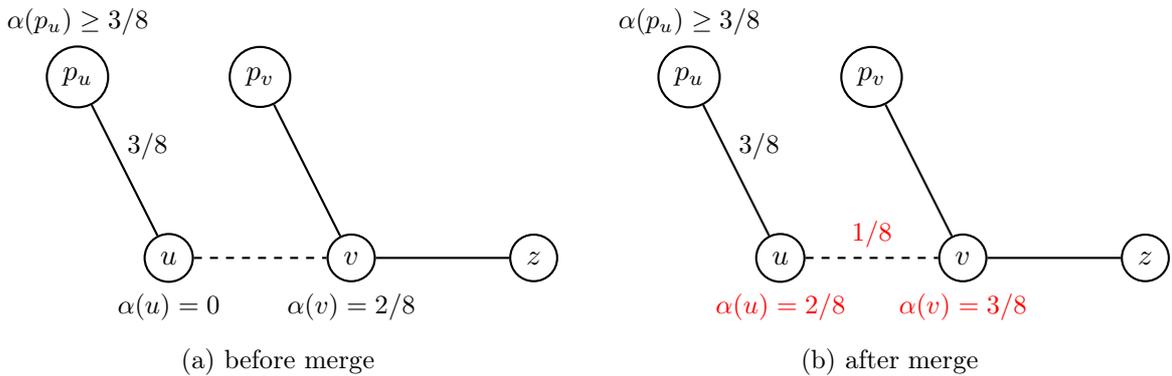

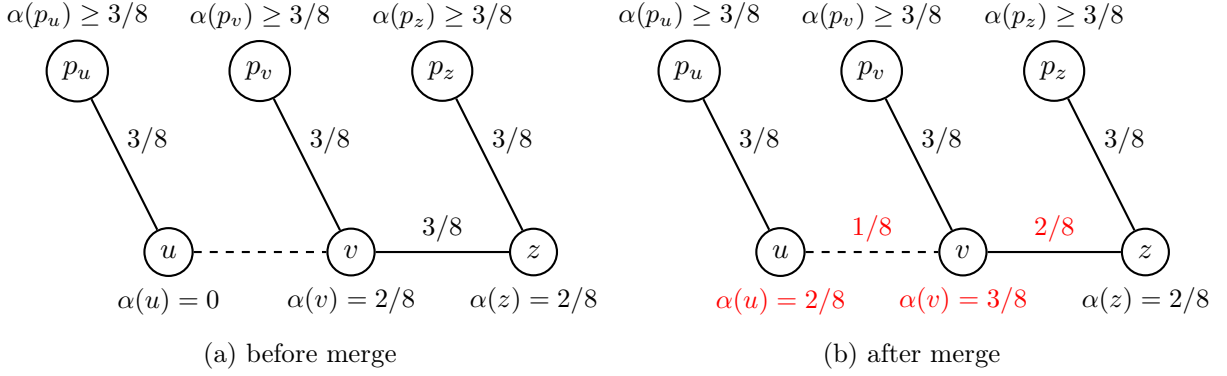
\begin{figure}[ht]
\begin{subfigure}{0.5\textwidth}
    \centering{
    \begin{tikzpicture}[scale=0.8, auto, node distance=1cm, thick]
        \node[node] (pu) [label={[black]90:\small$\alpha({p_u})\geq 3/8$}] at (-3,3) {$p_u$};
        \node[node] (u) [label={[black]270:\small$\alpha({u})=0$}] at (-1.5,0) {$u$};
        \node[node] (p_v) [label={[black]90:\small$\alpha({p_v})\geq 3/8$}] at (0,3) {$p_v$};
        \node[node] (v) [label={[black]270:\small$\alpha({v})=2/8$}] at (1.5,0) {$v$};
        \node[node] (pz) [label={[black]90:\small$\alpha({p_z})\geq 3/8$}] at (3,3) {$p_z$};
        \node[node] (z) [label={[black]270:\small$\alpha({z})=2/8$}] at (4.5,0) {$z$};
        \draw[edge] (pu) to node[above,black,xshift=3mm] {\small$3/8$} (u);
        \draw[dashed] (u) to node {} (v);
        \draw[edge] (p_v) to node[above,black,xshift=3mm] {\small$3/8$} (v);
        \draw[edge] (v) to node[above,black] {\small$3/8$} (z);
        \draw[edge] (pz) to node[above,black,xshift=3mm] {\small$3/8$} (z);
    \end{tikzpicture}
    \caption{before merge}
    }
\end{subfigure}
\begin{subfigure}{0.5\textwidth}
    \centering{
    \begin{tikzpicture}[scale=0.8, auto, node distance=1cm, thick]
        \node[node] (pu) [label={[black]90:\small$\alpha({p_u})\geq 3/8$}] at (-3,3) {$p_u$};
        \node[node] (u) [label={[red]270:\small$\alpha({u})=2/8$}] at (-1.5,0) {$u$};
        \node[node] (p_v) [label={[black]90:\small$\alpha({p_v})\geq 3/8$}] at (0,3) {$p_v$};
        \node[node] (v) [label={[red]270:\small$\alpha({v})=3/8$}] at (1.5,0) {$v$};
        \node[node] (pz) [label={[black]90:\small$\alpha({p_z})\geq 3/8$}] at (3,3) {$p_z$};
        \node[node] (z) [label={[black]270:\small$\alpha({z})=2/8$}] at (4.5,0) {$z$};
        \draw[edge] (pu) to node[above,black,xshift=3mm] {\small$3/8$} (u);
        \draw[dashed] (u) to node[above,red] {\small$1/8$} (v);
        \draw[edge] (p_v) to node[above,black,xshift=3mm] {\small$3/8$} (v);
        \draw[edge] (v) to node[above,red] {\small$2/8$} (z);
        \draw[edge] (pz) to node[above,black,xshift=3mm] {\small$3/8$} (z);
    \end{tikzpicture}
    \caption{after merge}
    }
\end{subfigure}
\caption{Merging a Leaf with an Unsafe Type B Vertex}
\label{forest3}
\end{figure}

\subsection{Put Everything Together}

\begin{lem}
\label{lem_frac5/8}
    There is a $5/8$-competitive fractional algorithm for online MCM on forests.
\end{lem}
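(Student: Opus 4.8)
The plan is to assemble the full fractional algorithm from the three matching rules of \Cref{sec:typeA} and \Cref{sec:typeB} and then invoke the primal-dual lemma of the Online Primal Dual Framework with $c=5/8$. On an arriving edge the algorithm dispatches as follows: if it is a growing edge whose parent is a leaf or a type A vertex, run \Cref{alg:FracMCM2}; if it is a growing edge whose parent is type B, run \Cref{alg:FracMCMgotoB}; if it is a non-trivial merging edge, run \Cref{alg:FracMCMmerge}; and trivial merging edges are reduced to growing edges via the remark at the end of \Cref{sec:typeA}. Each rule matches at most $1-x(u)$ at any vertex $u$ and never increases $x(e)$ after $e$'s arrival, so $x$ is always a feasible primal matching.

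First I would record that every structural invariant holds at all times: \Cref{inv:2/8surplus}, \Cref{inv:fixtype}, \Cref{inv:typeA}, \Cref{inv:stableB}, \Cref{inv:unsafe}, together with the stability statements \Cref{inv:stable} and \Cref{prop:typeBstable}. This is immediate, since the lemmas of \Cref{sec:typeA} and \Cref{sec:typeB} establish that each of the three rules preserves all of them, and the base case, once a tree has two edges, is checked directly (the root gets $\alpha = x(\text{root}) - 2/8 = 6/8$). Reverse weak duality then follows instantly: on each independent tree \Cref{inv:2/8surplus} gives $P \ge D + 2/8 \ge D$, and summing over the trees of the forest gives $P \ge D$ globally.

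The substantive step is approximate dual feasibility, i.e.\ $\alpha(u)+\alpha(v) \ge 5/8$ for every edge $(u,v)$ at every time at or after its arrival, where a pending uncertain $1/8$-dual stored on $(u,v)$ is credited to $\alpha(u)+\alpha(v)$ exactly as in \Cref{inv:unsafe}. I would argue this edge by edge, splitting on the type of $(u,v)$. For a growing edge to a current leaf $v$: while $v$ is a leaf the parent has dual $\ge 5/8$ by \Cref{inv:typeA} (type A parent) or by \Cref{lem:growB_5/8} (type B parent); when $v$ later becomes a non-leaf, the parent still has dual $\ge 3/8$ and $v$, now a non-leaf, has dual $\ge 2/8$ by \Cref{inv:typeA} or \Cref{inv:unsafe}, so the sum stays $\ge 5/8$. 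For a merging edge, \Cref{lem:mergenomatch} and \Cref{lem:merging_adf} give $\alpha(u)+\alpha(v)\ge 5/8$ at arrival; from then on both endpoints are non-leaf, the monotonicity clauses of \Cref{inv:stableB} and \Cref{inv:typeA} keep type B duals non-decreasing and type A duals $\ge 3/8$, and I would check that the only events consuming the uncertain $1/8$ on $(u,v)$ --- the disposal steps inside \Cref{alg:FracMCMgotoB} and the unsafe-type-B branch of \Cref{alg:FracMCMmerge} --- simultaneously raise the dual of one endpoint of $(u,v)$ (to $5/8$, resp.\ to $3/8$), so $\alpha(u)+\alpha(v)\ge 5/8$ is restored without the uncertain term. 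With primal feasibility, reverse weak duality, and approximate dual feasibility in hand, the primal-dual lemma yields the claimed $5/8$-competitiveness. I would also note in passing that the algorithm meets the two hypotheses of \Cref{lem:rounding2} --- isolated edges are fully matched by \Cref{alg:FracMCM2}, and a merging edge is matched at most $3/8 \le (1-x(u))(1-x(v))$ --- so the rounding of \Cref{sec:framework} applies, which is what \Cref{thm:forestealg} needs.

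I expect the main obstacle to be precisely the persistence part of approximate dual feasibility: the earlier lemmas certify feasibility only at the instant each edge arrives, but one must confirm it is never violated afterward. The two delicate transitions are (i) a leaf child becoming a non-leaf, which can drop its parent's dual from $5/8$ to $3/8$ and so relies on the newly created non-leaf simultaneously acquiring dual $\ge 2/8$, and (ii) the consumption of an uncertain dual assignment, which must be balanced against a matching dual increase at an endpoint of the edge that stored it. Both are governed by \Cref{inv:typeA}, \Cref{inv:stableB}, and \Cref{inv:unsafe}, so no new idea is needed, but keeping the bookkeeping synchronized across these events is where the care lies.
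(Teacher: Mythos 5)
Your proposal is correct and follows essentially the same primal--dual argument as the paper: establish the maintained invariants (\Cref{inv:2/8surplus}, \Cref{inv:fixtype}, \Cref{inv:typeA}, \Cref{inv:stableB}, \Cref{inv:unsafe}), use \Cref{inv:2/8surplus} for reverse weak duality, and verify approximate dual feasibility for every edge, with the persistence issue handled via \Cref{lem:growB_5/8}, the monotonicity in \Cref{inv:stableB}, and the argument that consuming an uncertain $1/8$-dual coincides with raising an endpoint's dual to $3/8$ (or $5/8$). The one cosmetic difference is that you organize the dual-feasibility case split by edge type (growing vs.\ merging), whereas the paper organizes it by the types of the two endpoints (both Type A, one leaf, one A and one B, both B); both enumerations are complete and invoke the same supporting lemmas, so they are interchangeable.
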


\begin{proof}
    Consider the two constraints of the Online Primal Dual Framework:
    \begin{itemize}
        \item Reverse Weak Duality: by \Cref{inv:2/8surplus} we have $P \geq D + 2/8$ for each tree of the input graph.
        \item Approximate Dual Feasibility: considering all combinations of endpoints of edges,
        \begin{enumerate}
            \item Both endpoints are Type A: satisfied by \Cref{inv:typeA}.
            \item One endpoint is a leaf: the other endpoint has dual $\geq 5/8$ by \Cref{inv:typeA} (Type A) or \Cref{lem:growB_5/8} (Type B).
            \item One endpoint is Type A, and the other is Type B: by \Cref{inv:typeA} and \Cref{inv:unsafe}, the sum of dual is at least $3/8 + 2/8 = 5/8$.
            \item Both endpoints are Type B: by \Cref{lem:mergenomatch} and \Cref{lem:merging_adf}. The dual feasibility is maintained after we make a matching decision on its arrival time. Because the dual variable of type B vertices never decreases by \Cref{inv:stableB}, we only need to consider the case in which we lose an uncertain dual assignment. However, it only happens when we make an unsafe vertex safe. In this case, we can check that one of them will gain a dual variable of $3/8$. We also have $\alpha(u) + \alpha(v) \geq 3/8 + 2/8 \geq 5/8$.
        \end{enumerate}
    \end{itemize}
    
    Therefore, combining \Cref{alg:FracMCM2}, \Cref{alg:FracMCMgotoB} and \Cref{alg:FracMCMmerge} yields a $5/8$-competitive fractional algorithm for online MCM on forests. 
\end{proof}

Finally, we apply the transformation from \Cref{sec:framework} and conclude the proof of \Cref{thm:forestealg}.

\begin{proof} [Proof of \Cref{thm:forestealg}]
    To apply the rounding scheme, it remains to prove that the conditions in \Cref{lem:rounding2} are satisfied. 
    From \Cref{alg:FracMCM2} it is direct to see $x(u,v)=1$ when $(u,v)$ is an isolated edge.
    Second, regarding the inequality $\frac{\gamma}{(1-\mu_1)(1-\mu_2)} \leq 1$ in \Cref{lem:rounding2}, we only need to check two triples of $(\gamma, \mu_1,\mu_2)$: 
    \begin{itemize}
        \item $(3/8,3/8,3/8)$ (when merging two leaves).
        \item $(1/8,3/8,5/8)$ (when merging a leaf and a Type B vertex with $2/8$ dual).
    \end{itemize} 
    Both triples satisfy $\frac{\gamma}{(1-\mu_1)(1-\mu_2)} \leq 1$. Other cases are either trivial ($\gamma = 0$) or strictly easier than the two triples above (such as $(3/8, 2/8,3/8)$).

    Therefore by \Cref{lem:rounding2}, the transformation from \Cref{sec:framework} yields a $5/8$-competitive randomized algorithm for online MCM with disposal on Forests.
\end{proof}

\section{Online MWM on Growing Trees}
\label{sec:weighted}

\subsection{A Simple Ordinal Algorithm}

Following the general framework, we first maintain a fractional matching, which, for the sake of simplicity, only assigns $1/2$-fractions to edges. If an edge has a $1/2$-fraction, we call it half-matched for short. Decisions of disposal only depend on the comparative relationship between the weights of the two edges. Each vertex can have at most two half-matched incident edges, and when the third edge arrives, the algorithm will dispose of the edge with the lowest weight. The details are presented in \cref{alg:FracMWM}.

\vspace{5pt}
\begin{algorithm}[h]
    \caption{Fractional Algorithm for MWM on Growing Trees}
    \label{alg:FracMWM}
    When edge $e = (u,v)$ arrives:\\
    \eIf{$m(u) \leq 1/2$}{
        $x(e) \leftarrow 1/2$\;
    }{
        Suppose $e_1 = (u,x)$ and $e_2 = (u,y)$ are half-matched,\\ 
        Wlog. suppose $w(e_1) \leq w(e_2)$.\\
        \eIf{$w(e) > w(e_1)$} {
            $x(e_1) \leftarrow 0$\;
            $x(e) \leftarrow 1/2$\;
        }{
            $x(e) \leftarrow 0$\;
        }
    }
\end{algorithm}

Unlike the analysis for MCM, this time, we do not maintain the dual variables or the matching. Instead, we show that it is always possible to construct a set of dual variables that satisfy the two constraints of the Online Primal Dual Framework.

Let $U$ denote the set of non-leaf vertices. For each $u \in U$, define 
$$
h_u = \argmax_{v\in child(u)}\{w(u,v)\} ~,
$$
i.e., the child of $u$ with the largest edge weight. To break ties, choose the child that arrives the earliest. We refer to edges of the form $(u,h_u)$ as heavy edges.

\begin{lem}
\label{mwm_frac_lem1}
    Let $M_F$ be the subset of edges $e\in E$ such that $x(e) = 1/2$, then at any time,
    $$
    \sum_{e\in M_F}w(e) \geq \sum_{u\in U} w(u, h_u) ~.
    $$
\end{lem}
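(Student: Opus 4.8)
The plan is to prove the inequality by induction on the arrival of edges, maintaining it as an invariant. Concretely, I would track how both sides of
\[
\sum_{e\in M_F} w(e) \;\geq\; \sum_{u\in U} w(u,h_u)
\]
change when a new edge $e=(u,v)$ arrives, where $u$ is the parent. The key structural observation is that $v$ is always a fresh leaf, so it contributes nothing to either side through its own future until later edges arrive at it; hence the only vertex whose status as a non-leaf can change is $u$, and only $u$'s incident half-matched edges can be modified.

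First I would handle the base case: before any edge arrives both sides are $0$. For the inductive step, I would split into cases according to the branch taken by \cref{alg:FracMWM}. \textbf{Case 1: $m(u)\le 1/2$ (so $u$ had at most one half-matched incident edge).} Then $e$ is half-matched and nothing is disposed, so the left side increases by $w(e)$. On the right side, if $u$ was already a non-leaf, $h_u$ can only change to $v$ if $w(e) > w(u,h_u^{\text{old}})$, so $w(u,h_u)$ increases by at most $w(e)$; if $u$ was a leaf and becomes a non-leaf, the right side gains the new term $w(u,h_u) = w(u,v) = w(e)$. Either way the right side increases by at most $w(e)$, matching the left. \textbf{Case 2: $w(u)>1/2$ and $w(e)>w(e_1)$} (dispose the lighter half-matched edge $e_1$, half-match $e$): the left side changes by $w(e)-w(e_1) > 0$. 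I must argue the right side increases by at most this amount. The point is that $e_1$ was the edge of minimum weight among $u$'s (two) half-matched incident edges, and after the swap $u$ still has two half-matched incident edges with the smaller of the two having weight $> w(e_1)$; so $w(u,h_u)$ — which only ever tracks the heaviest child edge of $u$ — can increase by at most $w(e) - w(e_1)$ (it increases only if $w(e)$ exceeds the previous heaviest, and in that case the previous heaviest was at least $w(e_1)$, actually at least the previous $w(e_1)$... ). The delicate bookkeeping here is that $h_u$ is defined by the heaviest \emph{child} of $u$, not the heaviest currently half-matched edge, and disposed edges leave $M_F$ forever while still potentially having been candidates for $h_u$; I need to verify that the disposed edge $e_1$ was never the one realizing $w(u,h_u)$ at any earlier point in a way that breaks the invariant — but since $e_1$ was half-matched right up to this step and was the minimum-weight half-matched edge at $u$, and $h_u$'s edge is always at least as heavy as any half-matched edge at $u$ (because the algorithm never disposes the heavier of two), this is fine. \textbf{Case 3: $w(e)\le w(e_1)$} (dispose nothing, don't match $e$): left side unchanged; right side unchanged since $w(e)\le w(e_1)\le w(u,h_u)$ so $h_u$ does not change, and $v$ is a leaf contributing no new term. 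The invariant is preserved.

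The main obstacle I anticipate is \textbf{Case 2}, specifically pinning down the claim "$w(u,h_u)$ increases by at most $w(e)-w(e_1)$." To make this airtight I would maintain a stronger, local invariant alongside the global one: for every non-leaf $u$, the heavy edge $(u,h_u)$ is at least as heavy as every currently half-matched edge incident to $u$, i.e. $w(u,h_u) \ge \max_{e'\ni u,\, x(e')=1/2} w(e')$. This secondary invariant is exactly what the disposal rule (always drop the lightest) preserves, and it lets me conclude that when $e$ displaces $e_1$, the quantity $w(u,h_u)$ either stays the same (if $w(e)$ is not a new maximum) or rises to $w(e)$ from a previous value that was $\ge w(e_1)$ — in both sub-cases the increase is bounded by $w(e)-w(e_1)$, equal to the increase on the left. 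Once this coupled induction is set up, all three cases are routine, and I would also remark that disposal steps triggered at a vertex never touch the heavy-edge terms of any other vertex, so the sum on the right only ever changes through the single vertex $u$ currently receiving an edge.
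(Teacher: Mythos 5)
Your approach---an induction on edge arrivals tracking how both sides of the inequality change---is genuinely different from the paper's, which constructs for each heavy edge $(u,h_u)$ a chain of charging through disposed edges to a surviving edge in $M_F$ and then shows these chains are pairwise disjoint. Your route is more local and conceptually simpler, and your key observation is sound: since $v$ is a fresh leaf and $h_u$ depends only on which children of $u$ have arrived (not on the matching state), the right-hand side can change only through the single term $w(u,h_u)$.

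However, the secondary invariant you lean on to close Case~2 is false. You claim that for every non-leaf $u$, $w(u,h_u)\ge \max_{e'\ni u,\ x(e')=1/2} w(e')$, and that the disposal rule preserves this. It does not: $h_u$ ranges only over \emph{children} of $u$, while the edge $(p_u,u)$ to $u$'s parent can be half-matched and heavier than every child edge. For example, if $(p_u,u)$ is half-matched with weight $100$ and $u$'s only child edge $(u,v_1)$ arrives with weight $1$, it too gets half-matched (your Case~1), and now $w(u,h_u)=1$ while the heaviest half-matched edge at $u$ has weight $100$. The invariant already fails the moment $u$ first becomes a non-leaf, so it is certainly not ``exactly what the disposal rule preserves.''

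Fortunately the induction can be repaired by replacing the false invariant with the weaker fact you actually need. In Cases~2 and~3, $u$ has two half-matched incident edges $e_1,e_2$ with $w(e_1)\le w(e_2)$; what you need is $w(e_1)\le w(u,h_u^{\text{old}})$. This holds for a structural reason independent of matching state: at most one of $e_1,e_2$ can be the parent edge $(p_u,u)$, so at least one is an edge from $u$ to a child, and its weight is at most $w(u,h_u^{\text{old}})$ by the definition of $h_u$; since $w(e_1)$ is the minimum of the two, $w(e_1)\le w(u,h_u^{\text{old}})$ follows. With this corrected bound, the increase on the right in Case~2 is $\max\left(0,\ w(e)-w(u,h_u^{\text{old}})\right)\le w(e)-w(e_1)$, matching the increase on the left; and in Case~3, $w(e)\le w(e_1)\le w(u,h_u^{\text{old}})$ shows $h_u$ (and hence the right side) is unchanged. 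Your three-case induction then goes through cleanly, giving a valid and arguably simpler alternative to the paper's charging argument.
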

\begin{proof}
For each $u\in U$, we will charge the heavy edge $(u, h_u)$ to a unique $\mu(u) \in M_F$ such that $w(\mu(u)) \geq w(u, h_u)$, thus proving the inequality.

When $(u, h_u)$ first arrives, there cannot be two half-matched edges incident to $u$ that are heavier than $(u, h_u)$, so \cref{alg:FracMWM} always matches $1/2$-fraction to it. If $(u, h_u)$ is not disposed, we simply set $\mu(u) = (u, h_u)$.

However, if $(u, h_u)$ has been disposed of, we need to charge $u$ to an edge that is "responsible" for the disposal of $(u,h_u)$. That edge may also have been disposed of, and we need to charge further. For each edge $(u, v)$, we stipulate how it should be charged when it is disposed of:

\begin{itemize}
    \item Case 1: $(u, v)$ is disposed when an edge $(v, x)$ arrives, then there must be another half-matched edge $(v, y)$, such that $w(v, y) \geq w(u, v)$ and $w(v, x) \geq w(u, v)$. 
    
    In this case, if $x \neq h_v$, then charge $(u, v)$ to $(v, x)$. Otherwise $y \neq h_v$, then charge $(u, v)$ to $(v, y)$.
    
    \item Case 2: $(u, v)$ is disposed when an edge $(u, x)$ arrives, and there is another half-matched edge $(u, y)$, such that $w(u, y) \geq w(u, v)$ and $w(u, x) \geq w(u, v)$. 
    
    In this case, if $x \neq h_u$, then charge $(u, v)$ to $(u, x)$. Otherwise $y \neq h_u$, then charge $(u, v)$ to $(u, y)$.

    \item Case 3: $(u, v)$ is disposed when an edge $(u, x)$ arrives, and the edge between $u$ and its parent is half-matched. We will show that this case is impossible.
\end{itemize}

For each $u\in U$ such that $(u, h_u)$ is disposed of in the end, since $(u, h_u)$ is the heaviest among all edges to children of $u$, it falls into Case 1. Suppose the chain of charging is $(u_1,v_1) \to (u_2,v_2)\to (u_3,v_3)\cdots \to (u_k,v_k)$, where $(u_1, v_1) = (u, h_u)$. 

If when $(u_i, v_i)$ is disposed of, the edge between $u_i$ and its parent has already been disposed of, then the disposal of $(u_i, v_i)$ must fall into Case 1 or Case 2, and in either case, the edge between $u_{i+1}$ and its parents must be disposed of after the disposal of $(u_i, v_i)$. Since $(u, h_u)$ is disposed at the beginning, by induction, for each $i = 2, 3\cdots k-1$, when $(u_i, v_i)$ is disposed of, the edge between $u_i$ and its parent must have been disposed of, so the disposal of $(u_i, v_i)$ must fall into Case 1 or Case 2, thus the charging rule above is well-defined. 

Define the depth of an edge as the depth of its upper vertex on the growing tree. Since the depth of the edges on a chain of charging is non-decreasing, it must end at some $(u_k, v_k) \in M_F$, and we set $\mu(U) = (u_k, v_k)$. 

Finally, we prove that no two chains of charging have a common edge. We prove this by contradiction. Let $(u,v)$ be the edge with the smallest depth that belongs to at least two chains (if there are multiple such edges, choose one arbitrarily). Since both Case1 and Case2 avoid charging an edge to a heavy edge, $(u,v)$ must not be a heavy edge. Suppose two of the chains that contain $(u,v)$ are $C_1,C_2$, then $(u,p_u)$ does not belong to one of $C_1,C_2$ (due to the minimality of the depth of $(u,v)$), wlog. Suppose it does not belong to $C_1$. Since each pair of neighboring edges on a chain of charging must share a common vertex, we conclude that $C_1$ must start in the subtree of $u$. Therefore $C_1$ cannot contain $(u,v)$, because:
\begin{itemize}
    \item If $C_1$ starts at $(u,h_u) \neq (u,v)$, then the remaining part of $C_1$ is in the subtree of $h_u$ and thus cannot contain $(u,v)$.
    \item If $C_1$ does not start at $(u,h_u)$, then it obviously cannot contain $(u,v)$ either.
\end{itemize}
This contradicts the assumption that $(u,v)$ belongs to $C_1$. Therefore, such $(u,v)$ does not exist, which implies that no edge belongs to two chains of charging. Hence for $u_1,u_2 \in U, u_1\neq u_2$, we can ensure that $\mu(u_1) \neq \mu(u_2)$.
\end{proof}

\begin{lem}
\label{mwm_frac_1/2}
    \cref{alg:FracMWM} is $1/2$-competitive.
\end{lem}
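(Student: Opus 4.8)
The plan is to feed the fractional matching maintained by \Cref{alg:FracMWM} into the Online Primal Dual Framework with $c = 1/2$, using \Cref{mwm_frac_lem1} to supply reverse weak duality. First I would record that the solution $x$ is a feasible fractional matching at all times: a vertex carries at most two half-matched incident edges, so $x(u)\le 1$ everywhere. Its objective value is $P = \tfrac12\sum_{e\in M_F} w(e)$, where $M_F$ is the current set of half-matched edges.

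Second, I would exhibit the dual assignment. For every non-leaf vertex $u\in U$ (recall $U$ is the set of vertices having at least one child, so that every edge has its parent endpoint in $U$), set $\alpha(u) = \tfrac12\, w(u,h_u)$, i.e.\ half the weight of the heaviest edge from $u$ to one of its children; set $\alpha(u) = 0$ for every leaf. These duals are non-negative.

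Third, reverse weak duality is immediate from \Cref{mwm_frac_lem1}: $D = \sum_{u}\alpha(u) = \tfrac12\sum_{u\in U} w(u,h_u) \le \tfrac12\sum_{e\in M_F} w(e) = P$. Approximate dual feasibility is a direct consequence of the definition of $h_u$: for any edge $(u,v)$ with $u = p_v$, since $v$ is a child of $u$ we have $\alpha(u) = \tfrac12 w(u,h_u) \ge \tfrac12 w(u,v)$, hence $\alpha(u)+\alpha(v) \ge \alpha(u) \ge \tfrac12 w(u,v)$. Both conditions hold at every time step, so the framework gives $P \ge D \ge \tfrac12\,\opt$, which is the claimed $1/2$-competitiveness of \Cref{alg:FracMWM}. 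Combined with the lossless rounding of \Cref{lem:rounding1}, this also yields the randomized integral algorithm of \Cref{thm:weightedalg}.

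The main obstacle is not in this lemma at all — it was already overcome in \Cref{mwm_frac_lem1}, whose charging argument is precisely what guarantees that the heavy edges $(u,h_u)$ can be "paid for" by distinct surviving half-matched edges of no smaller weight. The only delicate point in the present step is the bookkeeping convention that a "non-leaf" vertex is one with at least one child, which ensures the parent-only dual assignment covers every edge; with that in place, the argument is a routine two-line primal–dual verification.
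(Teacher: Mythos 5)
Your proof is correct and follows the paper's argument essentially verbatim: the same dual assignment $\alpha(u)=\tfrac12 w(u,h_u)$ for non-leaves and $0$ for leaves, reverse weak duality via \Cref{mwm_frac_lem1}, and approximate dual feasibility from the maximality of $h_u$. The only (harmless) additions are the explicit feasibility check $x(u)\le 1$ and the forward pointer to rounding, which the paper defers to the proof of \Cref{thm:weightedalg}.
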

\begin{proof}
    We assign $\alpha(u) = \frac{1}{2}w(u, h_u)$ for each $u\in U$, and for each leaf $v\not\in U$, simply set $\alpha(v) = 0$. Consider the two constraints of the Online Primal Dual Framework:
    \begin{enumerate}
        \item Reverse weak duality: by \cref{mwm_frac_lem1}, $P = \frac{1}{2}\sum_{e\in M_F}w(e) \geq \frac{1}{2}\sum_{u\in U}w(u,h_u) = D$.
        \item Approximate dual feasibility: for each edge $(u,v)$ where $u$ is the parent, we have $\alpha(u)+\alpha(v) \geq \alpha(u) = \frac{1}{2}\cdot w(u, h_u) \geq \frac{1}{2}\cdot w(u,v)$.
    \end{enumerate}
    Therefore \cref{alg:FracMWM} is $1/2$-competitive. The ratio of $1/2$ is tight because we only match a $1/2$-fraction to the first arriving edge.
\end{proof}

Finally, we conclude the proof of \Cref{thm:weightedalg}.

\begin{proof} [Proof of \Cref{thm:weightedalg}]
    We apply the transformation from \cref{sec:framework} to \cref{alg:FracMWM}, then by \cref{mwm_frac_1/2} and \cref{lem:rounding1} the resulting randomized algorithm is $1/2$-competitive.
\end{proof}

\subsubsection{Implementation Using One Bit of Randomness}

As a final remark, our randomized algorithm for MWM can also be implemented to use only one bit of randomness. 

Let $E_F$ denote the set of edges that are either half-matched or were half-matched and later disposed. In each connected component of the induced subgraph of $E_F$, the first arriving edge in the component is matched with probability $1/2$, then all subsequent edges in the component are in fact matched deterministically (it can be verified that the corresponding probabilities are all $1$). Since components are independent of each other, it does not matter if they use a shared random bit instead of generating a new random bit every time.

Therefore, we can describe our randomized algorithm in a way similar to that of \cite{Tirodkar1/3}: maintain two matchings $M_0$ and $M_1$, such that $M_0$ never matches the first edge in a component while $M_1$ always matches the first edge in a component. Use one bit of global randomness to decide which matching to output.

\subsection{Upper Bound for Ordinal Algorithms}

\noindent \Cref{alg:FracMWM} is very simple, and we conjecture that better algorithms exist when making full use of the values of edge weights. Nevertheless, we are able to show that it is optimal among ordinal algorithms.

\begin{proof} [Proof of \Cref{thm:weightedhard}]
    The high-level idea of our proof is to construct two instances $\mathcal{I}_1$ and $\mathcal{I}_2$ that are indistinguishable for an ordinal algorithm and show that any fractional ordinal algorithm cannot be better than $1/2$-competitive on both instances.

    The two instances share the same structure as $T_n$ defined in the proof for \Cref{thm:grwoingtreealg} in \Cref{sec:hard}. For edges $(u_i,v_i), i=1\cdots n$, we assign weight $1+(i-1)\epsilon$ in $\mathcal{I}_1$ and $C^{i-1}$ in $\mathcal{I}_2$. For edges $(u_i,u_{i+1}),i=1\cdots n$, we assign weight $1+i\epsilon$ in $\mathcal{I}_1$ and $C^i$ in $\mathcal{I}_2$. An example with $n=3$ is illustrated in \cref{mwmhard}.

    \begin{figure}[ht]
    \begin{subfigure}{0.5\textwidth}
        \centering{
        \begin{tikzpicture}[sibling distance = 3cm]
            \node[node] {$u_1$}
                child {node[node] {$v_1$}
                    edge from parent node[above left] {$1$}}
                child {node[node] {$u_2$}
                  child {node[node] {$v_2$}
                    edge from parent node[above left] {$1+\epsilon$}}
                  child {node[node] {$u_3$}
                    child {node[node] {$v_3$}
                        edge from parent node[above left] {$1+2\epsilon$}}
                    child {node[node] {$u_4$}
                        edge from parent node[above right] {$1+3\epsilon$}}
                    edge from parent node[above right] {$1+2\epsilon$}
                  }
                  edge from parent node[above right] {$1+\epsilon$}
                };
        \end{tikzpicture}
        \caption{$\mathcal{I}_1$}
        }
    \end{subfigure}
    \begin{subfigure}{0.5\textwidth}
        \centering{
        \begin{tikzpicture}[sibling distance = 3cm]
            \node[node] {$u_1$}
                child {node[node] {$v_1$}
                    edge from parent node[above left] {$1$}}
                child {node[node] {$u_2$}
                  child {node[node] {$v_2$}
                    edge from parent node[above left] {$C$}}
                  child {node[node] {$u_3$}
                    child {node[node] {$v_3$}
                        edge from parent node[above left] {$C^2$}}
                    child {node[node] {$u_4$}
                        edge from parent node[above right] {$C^3$}}
                    edge from parent node[above right] {$C^2$}
                  }
                  edge from parent node[above right] {$C$}
                };
        \end{tikzpicture}
        \caption{$\mathcal{I}_2$}
        }
    \end{subfigure}
    \caption{Hard Instance for MWM}
    \label{mwmhard}
    \end{figure}
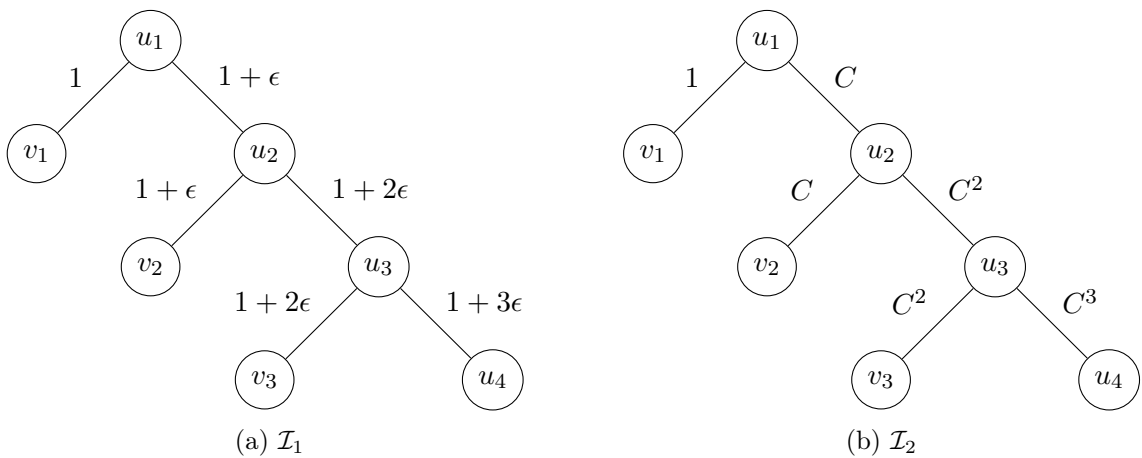

    Since for every pair of edges, the comparative relationship between their edge weights are the same for $\mathcal{I}_1$ and $\mathcal{I}_2$, any fractional ordinal algorithm cannot distinguish between the two instances and should make the same decisions. Next we show that for all $\delta \in (0,1/2]$, there exists $n, C, \epsilon$ such that any $1/2+\delta$ competitive ordinal algorithm for $\mathcal{I}_2$ is less than $1/2$-competitive for $\mathcal{I}_1$.

    For some sufficiently large $C\in \mathbb{R}^+$, when each edge $(u_i,u_{i+1})$ arrives, its weight in $\mathcal{I}_2$ dominates the sum of all previously arrived edges, so any $1/2+\delta$ competitive algorithm $\alg$ for $\mathcal{I}_2$ must match at least a $1/2+\delta$ fraction to $(u_i,u_{i+1})$. This implies that after all edges arrived, for each $i = 1\cdots n-1$, the total matched fraction of $(u_i, u_{i+1})$ and $(u_{i+1}, v_{i+1})$ is at most $1/2-\delta$. Thus, we can bound the performance of $\alg$ for $\mathcal{I}_1$:
    $$
    \begin{aligned}
    \alg(\mathcal{I}_1) \leq &~ 1 + (1 + n\epsilon) + (1/2-\delta)\cdot\sum_{i=1}^{n-1}(1+i\epsilon)\\
    =&~ 2+n\epsilon + (1/2-\delta)\left(n-1+\frac{n^2-n}{2}\cdot \epsilon\right) ~.
    \end{aligned}
    $$
    The optimal offline algorithm matches $\{(u_i,v_i)\}_{i=1}^{n-1} \cup \{(u_n,u_{n+1})\}$, thus
    $$
    \begin{aligned}
    \opt(\mathcal{I}_1) = &~ (1+n\epsilon) + \sum_{i=1}^{n-1}(1+(i-1)\epsilon) \\
    =&~ n+\frac{n^2-n+2}{2}\cdot \epsilon ~.
    \end{aligned}
    $$
    Therefore, the competitive ratio of $\alg$ for $\mathcal{I}_1$ is upper bounded by
    $$
    \dfrac{\alg(\mathcal{I}_1)}{\opt(\mathcal{I}_1)} \leq (1/2-\delta) + \dfrac{3/2+\delta + (n-1)\epsilon}{n+\frac{n^2-n+2}{2}\cdot \epsilon} = (1/2-\delta)+O(n^{-1}) ~,
    $$
    so for sufficiently large $n$ and small constant $\epsilon$, $\alg$ is less than $1/2$-competitive, which concludes the whole proof.
\end{proof}

\newpage
\bibliographystyle{plainnat}
\bibliography{main}

\end{document}